\DeclareMathAlphabet{\pazocal}{OMS}{zplm}{m}{n}
\numberwithin{equation}{section}
\newtheorem{thm}{Theorem}[section]
\newtheorem{lem}[thm]{Lemma}
\newtheorem{prop}[thm]{Proposition}
\newtheorem{cor}[thm]{Corollary}
\theoremstyle{definition}
\newtheorem{defn}[thm]{Definition}
\theoremstyle{remark}
\newtheorem{example}[thm]{Example}
\theoremstyle{remark}
\newtheorem{rem}[thm]{Remark}
\title{
Open Markets
}
\author{  
\textsc{Donghan Kim} \thanks{ 
Department of Mathematics, Columbia University, New York, NY 10027 (E-mail: {\it dk2571@columbia.edu}).
}
}
\begin{document}

\maketitle

\bigskip

\begin{abstract}
\noindent
An open market is a subset of entire equity market composed of a certain fixed number of top-capitalization stocks. Though the number of stocks in the open market is fixed, the constituents of the market change over time as each company's rank by its market capitalization fluctuates. When one is allowed to invest also in money market, the open market resembles the entire `closed' equity market in the sense that the equivalence of market viability (lack of arbitrage) and the existence of num\'eraire portfolio (portfolio which cannot be outperformed) holds. When the access to the money market is prohibited, some topics such as Capital Asset Pricing Model~(CAPM), construction of functionally-generated portfolios, and the concept of universal portfolio are presented in open market setting.
\end{abstract}

\smallskip

\smallskip

\input amssym.def
\input amssym

\smallskip

\section{Introduction}
\label{sec: intro}

Equity markets are conventionally assumed to be \textit{closed}, in the sense that they are almost universally assumed to consist of a given, fixed number of stocks at all times. However, this assumption fails to represent most real markets, where new stocks enter and some others exit due to privatization, bankruptcy, or simply bad luck.

\smallskip

The number of companies in the U.S. stock market has undergone wide fluctuations. In 1975, there were around 4,800 U.S. domiciled firms listed on the NYSE, Amex, and Nasdaq. This number reached a peak of 7,500 listed firms in 1997, and then decreased by more than half to 3,600 firms 20 years later in 2017.

\smallskip

To mitigate the assumption of a fixed, immutable collection of companies, and to model stock markets more realistically, we study here markets that are \textit{open}. These are constructed by restricting our `investing space' from the entire market to the subset composed of a certain fixed number $n$ of top-capitalization stocks. More precisely, within the entire stock market, we keep track of the price dynamics of all stocks, rank them by order of market capitalization, consider an open market consisting of the top $n$ stocks, and only invest in stocks that belong to this open market. High-capitalization indexes, such as the S\&P 500 index, where one invests only in the $n = 500$ highest-capitalization companies, and any given stock is replaced by another one when its capitalization falls, are of this type.

\smallskip

In this paper, we present some results from \textit{closed} markets which remain valid also in \textit{open} markets. The main result of this type involves the concept of ``market viability'', which is understood as ``lack of a certain egregious form of arbitrage'', the condition which prohibits financing non-trivial liabilities starting from arbitrarily small initial capital. The result shows that in an open stock market, with an access to the money market, viability is equivalent to any one of the following conditions: (i) a portfolio with the local martingale num\'eraire property exists; (ii) a local martingale deflator exists; (iii) the market has locally finite maximal growth; (iv) a growth-optimal portfolio exists; and (v) a portfolio with the log-optimality property exists. We provide precise definitions for these terms, and show that this equivalence can be formulated in terms of the drifts and covariations of the underlying stock prices, modeled by continuous semimartingales.

\smallskip

When the access to the money market is forbidden and one can only allowed to invest in fixed number $n$ of top-capitalization stocks, the class of eligible portfolios diminishes significantly as portfolios must satisfy `self-financing' condition. Under this extra condition, we provide a connection of the above viability theory to the Capital Asset Pricing Model~(CAPM), develop a way for constructing functionally-generated portfolios, and discuss universal portfolio in an open market.

\medskip

\noindent
\textit{Preview} : Section~\ref{sec: setup} defines open markets, investment strategies, and portfolios, as well as other notions which will be needed throughout this paper. Section~\ref{sec: main} develops arbitrage theory in open markets, along with the concepts of market viability and num\'eraire. We provide definitions and properties for these concepts, then state and prove the main result. Section~\ref{sec: stock portfolio and FGP} explores stock portfolios, CAPM theory, functional generation of portfolios, and universal portfolio in open markets. Section~\ref{sec: conclusion} provides some concluding remarks.

\bigskip

\section{Portfolios in Open Markets}
\label{sec: setup}

Let us suppose that the ``whole equity market universe" is composed of $N$ stocks and that we are only interested in investing in the top $n$ largest capitalization stocks, for some fixed $1 \leq n < N$. For example, when our investing universe is the entire U.S. stock market, by setting $n=500$ we are investing in those large companies which form the S\&P 500 index. In order to invest in these top $n$ stocks, we must keep track of the rank of each stock's capitalization at all times, and put together a portfolio composed of the $n$ stocks with the largest capitalizations.

\smallskip

Throughout this paper, we fix two positive integers $n$ and $N$ satisfying $1 \leq n < N$ as above. We suppose that trading is continuous in time, with no transaction costs or taxes, and that shares are infinitely divisible. We also assume without loss of generality that each stock has a single share outstanding, and the price of a stock is equal to its capitalization; thus, we use the terms `price of stock' and `capitalization of stock' interchangeably. We also assume that stock prices are discounted by the money market, and adopt the convention that the money market pays and charges zero interest.

\bigskip

\subsection{Stock prices and their ranks}

We first present the following definition of price process in the market described above.

\smallskip

\begin{defn} [Price process]	\label{Def : price process by name}
	For an $N$-dimensional vector $S \equiv (S_1, \cdots, S_n, \cdots, S_N)$ of continuous semimartingales on a filtered probability space $(\Omega, \pazocal{F}, \pazocal{F}(\cdot), \mathbb{P})$, we call $S$ a \textit{vector of price processes} if each component is strictly positive, i.e., the inequalities $S_i(t) > 0$ hold for all $i \in \{1, \cdots, N\}$ at any time $t \geq 0$. The component processes of $S$ represent the stock prices, or the capitalizations, of $N$ companies.
\end{defn}

\smallskip

We need now to clarify some notation regarding ranks. Given the vector $S$ of price processes, we define the $k$-th ranked process $S_{(k)}(\cdot)$ of $S_1, \cdots, S_N$ by
\begin{equation}	\label{def : rank}
	S_{(k)}(\cdot) := \max_{1 \leq i_1 < \cdots < i_k \leq N} \min\{S_{i_1}(\cdot), \cdots, S_{i_k}(\cdot)\}.
\end{equation}
To be more specific, for any $t \geq 0$, we have
\begin{equation}		\label{Eq : rank}
	\max_{i=1, \cdots, N} S_i(t) = S_{(1)}(t) \geq S_{(2)}(t) \geq \cdots \geq S_{(N-1)}(t) \geq S_{(N)}(t) = \min_{i=1, \cdots, N} S_i(t);
\end{equation}
that is, we rank the components of the vector process $S = (S_1, \cdots, S_N)$ in descending order, with the lexicographic rule for breaking ties that always assigns a higher rank~(i.e., a smaller $(k)$) to the smallest index $i$.

\smallskip

\begin{defn} [Price process by rank]	\label{Def : price process by rank}
	For the vector $S$ of price processes in Definition~\ref{Def : price process by name}, we call the $N$-dimensional vector process
	\begin{equation}		\label{def : tilde S}
		\bm{S}(t) \equiv \big(S_{(1)}(t), \cdots, S_{(N)}(t)\big), \quad t \geq 0
	\end{equation}
	where each component is defined via \eqref{def : rank}, the \textit{vector of price processes by rank}. In particular, the $k$-th component $\bm{S}_k(t) = S_{(k)}(t)$ of the vector $\bm{S}(t)$ represents the price of $k$-th ranked stock among $N$ companies at time $t$.
\end{defn}

\smallskip

Each component of the vector process $\bm{S}(\cdot)$ is also a continuous semimartingale, from the results in \cite{Banner:Ghomrasni}. Along with the notation \eqref{def : rank}, we define a process $\{1, \cdots, N\} \times [0, \infty) \ni (i, t) \mapsto u_{i}(t) \in \{1, \cdots, N\}$, such that each $u_i(\cdot)$ is predictable and satisfies
\begin{equation}		\label{def : u}
	S_i(t) = S_{\big(u_i(t)\big)}(t), \quad \forall~ t \geq 0,
\end{equation}
for every $i = 1, \cdots, N$. In other words, $u_i(t)$ is the rank of the $i$-th stock $S_i(t)$ at time $t$, for any given index $i = 1, \cdots, N$. Note that for every fixed $t \geq 0$, the function $u_{\cdot}(t) : \{1, \cdots, N\} \rightarrow \{1, \cdots, N\}$ is a bijection, because we break ties using the lexicographic rule when defining \eqref{def : rank}.

\bigskip

\subsection{Cumulative return processes}	\label{subsec : cumulative return}

In this subsection, we present the notion of cumulative returns of the market. We first define the stochastic logarithm $\pazocal{L}(Y)$ of a positive continuous semimartingale $Y$ with $Y(0) = 1$ by
\begin{equation}		\label{def : stochastic logarithm}
	\pazocal{L}(Y) := \int _0^{\cdot}\frac{dY(t)}{Y(t)}
\end{equation}
and consider the vector $R \equiv (R_1, \cdots, R_N)$, whose every component is the stochastic logarithm of the corresponding normalized component of $S$ in Definition~\ref{Def : price process by name}:
\begin{equation}		\label{def : R}
	R_i
	:= \pazocal{L}\bigg(\frac{S_i}{S_i(0)}\bigg),
	\qquad i = 1, \cdots, N.
\end{equation}
Each component process $R_i$ is again a semimartingale, and represents the \textit{cumulative returns of the $i$-th stock}, since its dynamic is represented as
\begin{equation}		\label{Eq : tilde R dynamics}
	dR_i(t) = \frac{dS_i(t)}{S_i(t)}, \quad t \geq 0, \quad \text{and} \quad R_i(0) = 0 \quad \text{for} \quad i = 1, \cdots, N.
\end{equation}

\bigskip

We posit the semimartingale decomposition
\begin{equation}	\label{Eq : decomposition of R}
	R_i = A_i + M_i, \quad i = 1, \cdots, N,
\end{equation}
for each component of the vector $R = (R_1, \cdots, R_N)$. Here, the component $A_i$ of the vector process $A \equiv (A_1, \cdots, A_N)$ with $A_i(0) = 0$ is adapted, continuous and of finite variation on compact time intervals; whereas each component $M_i$ of the vector process $M \equiv (M_1, \cdots, M_N)$ is a continuous local martingale with $M_i(0) = 0$, for $i = 1, \cdots, N$. We think of the finite variation processes $A_i$ as the `drift components', and of the local martinagles $M_i$ as the `noise components', of $R$.

\smallskip

We define next the continuous, nondecreasing scalar process
\begin{equation}		\label{def : O}
	O := \sum_{i=1}^N \bigg( \int_0^{\cdot} |dA_i(t)| + d[M_i, M_i](t) \bigg),
\end{equation}
where $\int_0^T|dA_i(t)|$ denotes the total variation of $A_i$ on the interval $[0, T]$ for $T \geq 0$ and $[M_i, M_j]$ represents the covariation process of the continuous semimartingales $M_i$ and $M_j$ for $1 \leq i, j \leq N$. Here, we note that $[R_i, R_j] = [M_i, M_j]$ holds from \eqref{Eq : decomposition of R}. This scalar process $O$ plays the role of an ``operational clock'' for the vector $R$. All processes $A_i$ and $[M_i, M_j]$ for $1 \leq i, j \leq N$ are absolutely continuous with respect to this clock, and thus, by the Radon-Nikod\'ym Theorem, there exist two predictable processes
\begin{equation}		\label{def : alpha and c}
	\alpha \equiv (\alpha_i)_{1 \leq i \leq N} \quad \text{and} \quad c \equiv (c_{i, j})_{1 \leq i, j \leq N},
\end{equation}
vector-valued and matrix-valued, respectively, such that
\begin{equation}		\label{def : A and C}
	A = \int_0^{\cdot} \alpha(t)dO(t), \quad \text{and} \quad C \equiv [M, M] = \int_0^{\cdot} c(t)dO(t).
\end{equation}
Here and in what follows, we write $C \equiv (C_{i, j})_{1 \leq i, j \leq N}$ for the nonnegative-definite, matrix-valued process of covariations
\begin{equation}		\label{def : C}
	C_{i, j} := [M_i, M_j] = [R_i, R_j], \quad \text{for} \quad 1 \leq i, j \leq N.
\end{equation}
The component $\alpha_i$ in \eqref{def : alpha and c} represents the \textit{local rate of return} of the $i$-th stock in the market; whereas the entry $c_{i, j}$ stands for the \textit{local covariation rate} of the $i$-th and $j$-th stocks. We call the collection of local rates $\alpha, c$ in \eqref{def : alpha and c} the \textit{local characteristics} of the market, and these rates are measured with respect to the operational clock $O$ in \eqref{def : O}.

\bigskip

For a continuous vector-valued semimartingale $Y = (Y_1, \cdots, Y_N)$, we denote by $\pazocal{I}(Y)$ the class of predictable vector processes $\pi = (\pi_1, \cdots, \pi_N)$ which are integrable with respect to the vector $Y$. In particular, for the collection $\pazocal{I}(R)$ of the vector $R$ in \eqref{def : R} and \eqref{Eq : decomposition of R}, we have a very convenient characterization: A predictable vector process $\pi = (\pi_1, \cdots, \pi_N)$ belongs to $\pazocal{I}(R)$, if and only if
\begin{equation}		\label{con : integrable w.r.t R}
	\int_0^T \Big( |\pi'(t)\alpha(t)| + \pi'(t)c(t)\pi(t)\Big) dO(t) < \infty, \quad \text{for any} \quad T \geq 0.
\end{equation}
We denote then by
\begin{equation*}
	\int_0^{\cdot} \sum_{i=1}^N \pi_i(t) dR_i(t) \equiv \int_0^{\cdot} \pi'(t)dR(t) = \int_0^{\cdot} \pi'(t)dA(t) + \int_0^{\cdot} \pi'(t)dM(t),
\end{equation*}
the stochastic integral of $\pi \in \pazocal{I}(R)$, with respect to the vector semimartingale $R$.

\bigskip

\subsection{Investment strategies and portfolios}

Along with the $N$-dimensional vector $S$ of Definition~\ref{Def : price process by name}, representing the stock prices of the market, we introduce the following notions.

\smallskip

\begin{defn} [Investment strategy, wealth process, and num\'eraire]	\label{Def : investment}
	We call an $N$-dimensional vector of predictable process $\vartheta \equiv (\vartheta_1, \cdots, \vartheta_N)$ \textit{investment strategy}, if it is integrable with respect to the price vector $S$, i.e., $\vartheta \in \pazocal{I}(S)$.
	For any nonnegative real number $x$, we call
	\begin{equation}		\label{def : investment}
		X(\cdot; x, \vartheta) := x + \int_0^{\cdot} \vartheta'(t)dS(t) \equiv x + \int_0^{\cdot} \sum_{i=1}^N \vartheta_i(t)dS_i(t)
	\end{equation}
	the \textit{wealth process} generated by $\vartheta$ with initial capital $x$. We call the wealth process \textit{num\'eraire}, if $X(\cdot; 1, \vartheta) > 0$ holds for the normalized initial capital $x=1$. The collection of all num\'eraires is denoted by $\pazocal{X}$.
\end{defn}

\smallskip

The $i$-th component $\vartheta_i(t)$ represents the units of investment (or number of shares) held in the $i$-th stock at time $t$, and plays the role of integrand with integrator $dS_i(t)$ in the stochastic integral of \eqref{def : investment}. The requirement $X(0) = x = 1$ in defining num\'eraires is a simple normalization, because $X(\cdot;cx, c\vartheta) = cX(\cdot; x, \vartheta)$ holds for any positive real number $c$. Since we consider investment only in the top $n$ stocks, we need a similar definition of investment strategy for this particular case.

\smallskip

\begin{defn} [Investment strategy among the top \textit{n} stocks]	\label{Def : investment among n}
	We call an investment strategy $\vartheta \in \pazocal{I}(S)$ an \textit{investment strategy among the top $n$ stocks}, if the ``sensoring'' equalities
	\begin{equation}		\label{con : investment among n}
		\vartheta_i(t)\bm{1}_{\{u_i(t) > n\}} = 0, \quad \text{for} \quad i = 1, \cdots, N, \quad t \geq 0,
	\end{equation}
	hold with the notation \eqref{def : u}.
	
	The wealth process and the num\'eraire associated with this investment strategy $\vartheta$ among the top $n$ stocks, are defined in the same manner as in Definition~\ref{Def : investment}. We denote the collection of $N$-dimensional predictable processes $\vartheta$ satisfying the condition \eqref{con : investment among n} by $\pazocal{T}(n)$, and the collection of investment strategies among the top $n$ stocks by $\pazocal{I}(S) \cap \pazocal{T}(n)$.
	
	The collection of all num\'eraires generated by investment strategies $\vartheta \in \pazocal{I}(S) \cap \pazocal{T}(n)$ among the top $n$ stocks is denoted by $\pazocal{X}^n$.
\end{defn}

\smallskip

The condition \eqref{con : investment among n} prohibits the strategy $\vartheta$ from investing in the $i$-th stock at time $t \geq 0$, if this stock fails to rank at that time among the top $n$ stocks in terms of capitalization. We present another definition, that of a portfolio rule, which plays the role of integrand with respect to the integrator $dR_i(t)$ of \eqref{def : R}.

\smallskip

\begin{defn} [Portfolio] \label{Def : portfolio}
	We call an $N$-dimensional predictable, vector-valued process $\pi \equiv (\pi_1, \cdots, \pi_N) \in \pazocal{I}(R)$ a \textit{portfolio}, if it is integrable with respect to the cumulative return vector $R$ of \eqref{def : R}. We call a portfolio $\pi \in \pazocal{I}(R)$ a \textit{portfolio among the top $n$ stocks}, if the equalities 
	\begin{equation}		\label{con : portfolio among n}
		\pi_i(t)\bm{1}_{\{u_i(t) > n\}} = 0, \quad \text{for} \quad i = 1, \cdots, N, \quad t \geq 0,
	\end{equation}
	hold with the notation \eqref{def : u}. We denote the collection of portfolios among the top $n$ stocks by $\pazocal{I}(R) \cap \pazocal{T}(n)$.
\end{defn}

\smallskip 
 
Since the function $u_{\cdot}(t) : \{1, \cdots, N\} \rightarrow \{1, \cdots, N\}$ is bijective for every $t \geq 0$, the collection $\big\{\bm{1}_{\{u_i(t) = k\}} \big\}_{k=1, \cdots, N}$ constitutes a partition of unity for any given $i=1, \cdots, N$, $t \geq 0$, and the conditions \eqref{con : investment among n}, \eqref{con : portfolio among n} can also be formulated respectively as
\begin{equation}		\label{con : equiv investment among n}
\vartheta_i(t)
= \sum_{k=1}^n \vartheta_i(t)\bm{1}_{\{u_i(t) = k\}}
= \vartheta_i(t)\bm{1}_{\{u_i(t) \leq n\}}, \quad \text{for} \quad i = 1, \cdots, N, \quad t \geq 0,
\end{equation}
\begin{equation}		\label{con : equiv portfolio among n}
	\pi_i(t)
	= \sum_{k=1}^n \pi_i(t)\bm{1}_{\{u_i(t) = k\}}
	= \pi_i(t)\bm{1}_{\{u_i(t) \leq n\}}, \quad \text{for} \quad i = 1, \cdots, N, \quad t \geq 0.
\end{equation}

\bigskip

We present next the connection between investment strategies $\vartheta$ and portfolios $\pi$. For any scalar continuous semimartingale $Z$ with $Z(0) = 0$, we denote the stochastic exponential of $Z$ by
\begin{equation}		\label{def : stochastic exponential}
	\pazocal{E}(Z) := \exp \Big( Z - \frac{1}{2}[Z, Z] \Big).
\end{equation}
It can be shown that this is also the unique process satisfying the linear stochastic integral equation
\begin{equation}		\label{Eq : exponential SDE}
	\pazocal{E}(Z) = 1 + \int_0^{\cdot} \pazocal{E}(Z)(t) dZ(t).
\end{equation}
It is straightforward to check that the stochastic logarithm operator $\pazocal{L}(\cdot)$ in \eqref{def : stochastic logarithm}, is the inverse of the stochastic exponential operator $\pazocal{E}(\cdot)$ in \eqref{def : stochastic exponential}.

\smallskip

We introduce now the \textit{cumulative returns process of a portfolio} $\pi$ as in Definition~\ref{Def : portfolio}, via the vector stochastic integral
\begin{equation}		\label{def : rpi}
	R_{\pi} := \int_0^{\cdot} \pi'(t) dR(t) = \int_0^{\cdot} \sum_{i=1}^N \pi_i(t) dR_i(t),
\end{equation}
and consider its stochastic exponential
\begin{equation}		\label{def : wealth}
	X_{\pi} := \pazocal{E}(R_{\pi}) = \pazocal{E}\Big( \int_0^{\cdot} \sum_{i=1}^N \pi_i(t) dR_i(t) \Big).
\end{equation}
In particular, we note that $X_{\pi}$ is positive. Then, from \eqref{def : wealth}, \eqref{def : rpi}, and \eqref{Eq : tilde R dynamics}, we obtain the dynamics
\begin{equation}		\label{Eq : dynamics of wealth}
	\frac{dX_{\pi}(t)}{X_{\pi}(t)} = dR_{\pi}(t) = \sum_{i=1}^N \pi_i(t) dR_i(t) = \sum_{i=1}^N \pi_i(t) \frac{dS_i(t)}{S_i(t)}, \quad X_{\pi}(0) = 1.
\end{equation}
By setting
\begin{equation}	\label{Eq : relationship}
	\vartheta_i := \frac{X_{\pi}\pi_i}{S_i} \quad \text{for} \quad i = 1, \cdots, N,
\end{equation}
we arrive at the equation \eqref{def : investment} with $X(\cdot; 1, \vartheta)$ replaced by $X_{\pi}(\cdot)$. Thus, from the portfolio $\pi \in \pazocal{I}(R)$, we can obtain the corresponding investment strategy $\vartheta$ and its num\'eraire $X(\cdot; 1, \vartheta)$, via the recipe \eqref{Eq : relationship}. Here, we denote the num\'eraire generated by the portfolio $\pi$ by $X(\cdot; 1, \vartheta) := X_{\pi}$, as in \eqref{def : wealth}.

\bigskip

Conversely, for a given investment strategy $\vartheta$ generating a positive wealth process, i.e., the num\'eraire $X(\cdot; 1, \vartheta)$, we define a predictable, vector-valued process $\pi \equiv (\pi_1, \cdots, \pi_N)$ as
\begin{equation}		\label{Eq : relationship2}
	\pi_i := \frac{S_i\vartheta_i}{X(\cdot; 1, \vartheta)} \quad \text{for} \quad i = 1, \cdots, N.
\end{equation}
It can be easily checked that $\pi$ is indeed a portfolio, i.e., $R$-integrable and \eqref{def : investment} can be written as
\begin{equation*}
	X(\cdot; 1, \vartheta) = 1 + \int_0^{\cdot} X(t; 1, \vartheta) \sum_{i=1}^N \pi_i(t)dR_i(t),
\end{equation*}
with the help of \eqref{Eq : tilde R dynamics}. This last equation gives the dynamics in \eqref{Eq : dynamics of wealth} with $X_{\pi}(\cdot) \equiv X(\cdot; 1, \vartheta)$.

\bigskip

Thus, whether we start from an investment strategy $\vartheta$ (generating a num\'eraire) or from a portfolio $\pi$, the counterpart can always be obtained via \eqref{Eq : relationship2} or \eqref{Eq : relationship}, respectively, and we will denote the corresponding num\'eraire $X(\cdot;, 1, \vartheta)$ in \eqref{def : investment} by $X_\pi$ as in \eqref{def : wealth}.

\smallskip

In the relationship \eqref{Eq : relationship2}, the product $S_i(t)\vartheta_i(t)$ represents the amount of wealth invested in $i$-th stock at time $t$, thus $\pi_i(t)$ can be interpreted as the proportion of current wealth invested in $i$-th stock at time $t$. The remaining proportion of wealth
\begin{equation}		\label{def : money proportion}
	\pi_0 := 1 - \sum_{i=1}^N \pi_i
\end{equation}
is then considered to be placed in the money market.

\bigskip

We present now a few more concepts regarding portfolios. For any two portfolios $\pi, \rho$ in $\pazocal{I}(R)$, we consider the covariation process between the cumulative returns $R_{\pi}, R_{\rho}$ in \eqref{def : rpi}, namely
\begin{equation}	\label{def : C pi rho}
	C_{\pi\rho} := [R_{\pi}, R_{\rho}] = \int_0^{\cdot} c_{\pi\rho}(t)dO(t), \quad \text{with} \quad c_{\pi\rho} := \pi'c\rho = \sum_{i=1}^N \sum_{j=1}^N \pi_i c_{i, j}\rho_{j}.
\end{equation}
Here, we recall the definitions of the matrix-valued processes $c$ and $C$ in \eqref{def : alpha and c}, \eqref{def : A and C} and note the notational consistency with \eqref{def : C pi rho}. In particular, when the portfolio is given as the unit vector $e^i$ of $\mathbb{R}^N$ for some $i = 1, \cdots, N$, we use the subscript `$i$' instead of `$e^i$' to write $C_{i\rho} \equiv C_{e^i\rho}$ and $c_{i\rho} \equiv c_{e^i\rho}$ in order to ease notation. This convention is consistent with the actual equalities $C_{i, j} = C_{e^ie^j}$ and $c_{i, j} = c_{e^ie^j}$ for $1 \leq i, j \leq N$.

\smallskip

By recalling the wealth process $X_{\pi}$ generated by the portfolio $\pi$ as in \eqref{def : wealth}, \eqref{Eq : dynamics of wealth}, we can express the logarithm of $X_{\pi}$ as
\begin{equation}		\label{def : log of wealth}
	\log X_{\pi} = R_{\pi} - \frac{1}{2}C_{\pi\pi}
	= \int_0^{\cdot} \pi'(t)dA(t) - \frac{1}{2}C_{\pi\pi} + \int_0^{\cdot} \pi'(t)dM(t).
\end{equation}
We call the finite-variation part of $\log X_{\pi}$ the \textit{cumulative growth of the portfolio} $\pi$, and denote it by
\begin{equation}		\label{def : Gamma pi}
	\Gamma_{\pi} := A_{\pi} - \frac{1}{2}C_{\pi\pi}, \quad \text{where} \quad A_{\pi} := \int_0^{\cdot} \pi'(t)dA(t).
\end{equation}
In a similar manner, the local martingale part of the decomposition in \eqref{def : log of wealth} is denoted by
\begin{equation}		\label{def : M pi}
	M_{\pi} := \int_0^{\cdot} \pi'(t)dM(t).
\end{equation}
In particular, the cumulative return $R_{\pi}$ in \eqref{def : wealth} is the stochastic logarithm $\pazocal{L}(X_{\pi})$ of $X_{\pi}$, and has `drift' component $A_{\pi}$ as in \eqref{def : Gamma pi}, from \eqref{def : rpi} and \eqref{Eq : decomposition of R}; whereas the natural logarithm $\log X_{\pi}$ in \eqref{def : log of wealth} of $X_{\pi}$ has `drift' term $\Gamma_{\pi}$. 

\smallskip

We further define the predictable processes
\begin{equation}		\label{def : alpha pi, gamma pi}
	\alpha_{\pi} := \pi'\alpha, \qquad
	\gamma_{\pi} := \pi'\alpha-\frac{1}{2}\pi' c \pi = \alpha_{\pi} - \frac{1}{2} c_{\pi\pi},
\end{equation}
and call $\alpha_{\pi}$ the \textit{rate of return}, and $\gamma_{\pi}$ the \textit{growth rate}, of the portfolio $\pi$. The `drift parts' $A_{\pi}$ and $\Gamma_{\pi}$, of $\pazocal{L}(X_{\pi})$ and $\log X_{\pi}$, respectively, are then represented as the integrals of these rates with respect to the `operational clock' in \eqref{def : O}:
\begin{equation}		\label{def : A pi, Gamma pi}
	A_{\pi} = \int_0^{\cdot} \alpha_{\pi}(t)dO(t), \qquad \Gamma_{\pi} = \int_0^{\cdot} \gamma_{\pi}(t)dO(t).
\end{equation}

\bigskip

\subsection{Portfolios among the top \textit{n} stocks}		\label{subsec : portfolio among the top n stocks}

In this subsection we provide definitions, similar to those introduced in the previous subsections, for portfolios that invest only among the top $n$ stocks.

\bigskip
 
For $\vartheta \in \pazocal{I}(S) \cap \pazocal{T}(n)$ and $\pi \in \pazocal{I}(R) \cap \pazocal{T}(n)$, representing a strategy that invests only among the top $n$ stocks and a portfolio among the top $n$ stocks, respectively, the equations \eqref{def : rpi}-\eqref{def : money proportion} can be used in the same manner. In particular, the bidirectional connections \eqref{Eq : relationship} and \eqref{Eq : relationship2} between $\vartheta \in \pazocal{I}(S) \cap \pazocal{T}(n)$ and $\pi \in \pazocal{I}(R) \cap \pazocal{T}(n)$ still hold, because of the similarity in the conditions \eqref{con : investment among n} and \eqref{con : portfolio among n}.

\bigskip

We define next a new $N$-dimensional vector $\widetilde{R} \equiv (\widetilde{R}_1, \cdots, \widetilde{R}_N)$ by
\begin{equation}			\label{def : tilde R}
\widetilde{R}_i(t) := \int_0^t \bm{1}_{\{u_i(s) \leq n\}} dR_i(s), \quad \text{for} \quad i = 1, \cdots, N, \quad t \geq 0.
\end{equation}
Each component $\widetilde{R}_i(t)$ represents the cumulative return of the $i$-th stock, accumulated over $[0, t]$ but only at times when the stock ranks among the top $n$ by capitalization. Then, for $\pi \in \pazocal{I}(R) \cap \pazocal{T}(n)$, \eqref{def : rpi} can be also cast as
\begin{equation}	\label{Eq : R pi in tilde}
	R_{\pi}
	= \int_0^{\cdot} \sum_{i=1}^N \pi_i(t) dR_i(t)
	= \int_0^{\cdot} \sum_{i=1}^N \pi_i(t) \bm{1}_{\{u_i(t) \leq n\}} dR_i(t)
	= \int_0^{\cdot} \sum_{i=1}^N \pi_i(t) d\widetilde{R}_i(t),
\end{equation}
where the second equality follows from \eqref{con : equiv portfolio among n}. We then have the semimartingale decomposition
\begin{equation}	\label{Eq : decomposition of tilde R}
	\widetilde{R}_i = \widetilde{A}_i + \widetilde{M}_i, \qquad i = 1, \cdots, N,
\end{equation}
where
\begin{equation}		\label{def : tilde A and M}
\widetilde{A}_i(t) := \int_0^t \bm{1}_{\{u_i(s) \leq n\}} dA_i(s), \qquad
\widetilde{M}_i(t) := \int_0^t \bm{1}_{\{u_i(s) \leq n\}} dM_i(s), \qquad i = 1, \cdots, N,
\end{equation}
from \eqref{Eq : decomposition of R}. In the decomposition $R_{\pi} = A_{\pi} + M_{\pi}$, with $A_{\pi}$ as in \eqref{def : Gamma pi} and $M_{\pi}$ as in \eqref{def : M pi}, we note that $A_{\pi}$ and $M_{\pi}$ can be expressed in terms of the components of $\widetilde{A}$ and $\widetilde{M}$, respectively, as
\begin{equation}	\label{Eq : M pi in tilde}
	A_{\pi}
	= \int_0^{\cdot} \sum_{i=1}^N \pi_i(t) dA_i(t)
	= \int_0^{\cdot} \sum_{i=1}^N \pi_i(t) d\widetilde{A}_i(t), \quad
	M_{\pi}
	= \int_0^{\cdot} \sum_{i=1}^N \pi_i(t) dM_i(t)
	= \int_0^{\cdot} \sum_{i=1}^N \pi_i(t) d\widetilde{M}_i(t)
\end{equation}
by analogy with \eqref{Eq : R pi in tilde}. Also in a manner similar to \eqref{def : C}, we define
\begin{equation}		\label{def : tilde C}
\widetilde{C}_{i, j} := [\widetilde{M}_i, \widetilde{M}_j] = [\widetilde{R}_i, \widetilde{R}_j], \quad \text{for} \quad 1 \leq i, j \leq N.
\end{equation}
Note the relationship
\begin{equation}		\label{Eq : tilde C and C}
d\widetilde{C}_{i, j}(t)
= d[\widetilde{R}_i, \widetilde{R}_j](t)
= \bm{1}_{\{u_i(t) \leq n\}}\bm{1}_{\{u_j(t) \leq n\}} d[R_i, R_j](t)
= \bm{1}_{\{u_i(t) \leq n\}}\bm{1}_{\{u_j(t) \leq n\}} dC_{i, j}(t)
\end{equation}
between $\widetilde{C}$ and $C$. We further define  a vector-valued process $\widetilde{\alpha} \equiv (\widetilde{\alpha}_1, \cdots, \widetilde{\alpha}_N)$ and a matrix-valued process $\widetilde{c} \equiv (\widetilde{c}_{i, j})_{1 \leq i, j \leq N}$ as
\begin{equation}		\label{def : tilde alpha}
	\widetilde{\alpha}_i(t) := \bm{1}_{\{u_i(t) \leq n\}}\alpha_i(t), \quad \qquad \qquad \qquad i = 1, \cdots, N,
\end{equation}
\begin{equation}		\label{def : tilde c}
	\widetilde{c}_{i, j}(t) := \bm{1}_{\{u_i(t) \leq n\}}\bm{1}_{\{u_j(t) \leq n\}}c_{i, j}(t), \qquad 1 \leq i, j \leq N;
\end{equation}
then it is straightforward to obtain the relationships
\begin{equation}		\label{Eq : tilde A and tilde C}
	\widetilde{A} = \int_0^{\cdot} \widetilde{\alpha}(t)dO(t), \quad \text{and} \quad \widetilde{C} \equiv [\widetilde{M}, \widetilde{M}] = \int_0^{\cdot} \widetilde{c}(t)dO(t)
\end{equation}
in accordance with \eqref{def : A and C}, where the vector-valued and matrix-valued processes $\widetilde{A} \equiv (\widetilde{A}_1, \cdots, \widetilde{A}_N)$ and $\widetilde{C} \equiv (\widetilde{C}_{i, j})_{1 \leq i, j \leq N}$, respectively, are as in \eqref{def : tilde A and M}, \eqref{def : tilde C}.

\bigskip

The definition of $C_{\pi\rho}$ in \eqref{def : C pi rho} can be also invoked when $\pi, \rho \in \pazocal{I}(R) \cap \pazocal{T}(n)$, but we have also with the help of \eqref{Eq : R pi in tilde} and \eqref{def : tilde C} the alternative representation
\begin{equation}		\label{Eq : C pi rho}
	C_{\pi\rho} = [R_{\pi}, R_{\rho}]
	= \bigg[\int_0^{\cdot} \sum_{i=1}^N \pi_i(t) d\widetilde{R}_i(t), \int_0^{\cdot} \sum_{j=1}^N \rho_j(t) d\widetilde{R}_j(t)\bigg]
	= \int_0^{\cdot} \sum_{i=1}^N \sum_{j=1}^N \pi_i(t) \rho_j(t) d\widetilde{C}_{i, j}(t).
\end{equation}

\smallskip

In particular, consider the portfolio $\pi$ among the top $n$ stocks, defined as
\begin{equation}		\label{def : tilde unit portfolio}
	\pi(\cdot) := \bm{1}_{\{u_i(\cdot) \leq n\}} e^i
\end{equation}
for a fixed $i = 1, \cdots, N$. This portfolio $\pi$ invests all wealth in the $i$-th stock, when this stock ranks among the top $n$; otherwise, it puts all wealth in the money market. From \eqref{Eq : R pi in tilde}, the identity
\begin{equation}			\label{Eq : R unit vector}
	R_{\pi} = \widetilde{R}_i	
\end{equation}
holds, and we shall use the subscript `$\widetilde{i}$' instead of `$\pi$' to write $X_{\pi} \equiv X_{\widetilde{i}}$ and 
\begin{equation}			\label{Eq : C tilde i rho}
	C\,_{\widetilde{i}\rho} \equiv C_{\pi\rho} = \int_0^{\cdot} \sum_{j=1}^N \rho_j(t) d\widetilde{C}_{i, j}(t), \qquad \text{as well as} \qquad c\,_{\widetilde{i}\rho} \equiv c_{\pi\rho} = \sum_{j=1}^N \rho_j(t) \widetilde{c}_{i, j}(t),
\end{equation}
in order to ease notation for the specific $\pi$ in \eqref{def : tilde unit portfolio}. This convention is consistent with the equalities $C\,_{\widetilde{i}, \widetilde{j}} = [\widetilde{R}_i, \widetilde{R}_j] = \widetilde{C}_{i, j}$ for $1 \leq i, j \leq N$.

\bigskip

It is useful to write succinctly the above relationships in this subsection, between symbols with tilde and corresponding symbols without tilde, in matrix notation. We do this by introducing the predictable matrix-valued process $D \equiv (D_{i, j})_{1 \leq i, j \leq N}$ with entries
\begin{equation}		\label{def : diagonal matrix D}
D_{i, j}(t) := 
\begin{cases}
\bm{1}_{\{u_i(t) \leq n\}} \qquad & i = j,
\\
~~~~~0 & i \ne j,
\end{cases}	
\end{equation}
for each $t \geq 0$. Here, we note that $D(t)$ is a diagonal, idempotent matrix, whose $(i, i)$-th entry is $1$ if the $i$-th stock belongs to the top $n$ stocks at time $t \geq 0$, otherwise it is zero. Because at least $N-n$ diagonal entries of $D(t)$ are zero, $D(\cdot)$ is always singular. Then, any $N$-dimensional predictable process $\nu$ in $\pazocal{T}(n)$ as in Definition~\ref{Def : investment among n}, satisfies $D\nu = \nu$; in particular,
\begin{equation}			\label{Eq : vartheta, pi with D}
	D\vartheta = \vartheta, \qquad D\pi = \pi,
\end{equation}
hold for all $\vartheta \in \pazocal{I}(S) \cap \pazocal{T}(n)$ and $\pi \in \pazocal{I}(R) \cap \pazocal{T}(n)$ from the conditions \eqref{con : equiv investment among n} and \eqref{con : equiv portfolio among n}. Also, the identities \eqref{def : tilde R}, \eqref{def : tilde A and M}, \eqref{Eq : tilde C and C}, \eqref{def : tilde alpha}, and \eqref{def : tilde c} can be reformulated as
\begin{equation}		\label{Eq : tilde R with D}
	d\widetilde{R}(t) = D(t)dR(t), \quad d\widetilde{A}(t) = D(t)dA(t), \quad d\widetilde{M}(t) = D(t)dM(t),
\end{equation}
\begin{equation*}
	d\widetilde{C}(t) = D(t) dC(t) D(t),
\end{equation*}
as well as
\begin{equation}		\label{Eq : tilde alpha, c with D}
	\widetilde{\alpha} = D\alpha, \quad \widetilde{c} = DcD.
\end{equation}
Moreover, we have another expression of the type \eqref{def : alpha pi, gamma pi} for $\pi \in \pazocal{I}(R) \cap \pazocal{T}(n)$: using the property \eqref{Eq : vartheta, pi with D}, we write
\begin{equation}		\label{Eq : alpha pi, gamma pi}
	\alpha_{\pi} = \pi'\alpha = \pi'D\alpha = \pi'\widetilde{\alpha}, \qquad 
	\gamma_{\pi} = \pi'\alpha-\frac{1}{2}\pi' c \pi = \pi'\widetilde{\alpha}-\frac{1}{2}\pi'DcD\pi = \pi'\widetilde{\alpha}-\frac{1}{2}\pi'\widetilde{c}\pi.
\end{equation}

\bigskip

We present now the following results regarding the integrability condition with respect to $R$ (or $\widetilde{R}$), which will be used in the next section.

\smallskip

\begin{lem}	[Null portfolio]	\label{lem : null portfolio}
	For an $N$-dimensional predictable process $\eta \in \pazocal{T}(n)$, suppose that $\eta'\widetilde{\alpha} = 0$ and $\widetilde{c}\eta = 0$ hold in the $(\mathbb{P}\otimes O)$-a.e. sense.
	
	Then $\eta$ is a portfolio, i.e., $\eta \in \pazocal{I}(R) \cap \pazocal{T}(n)$, and the identity $R_{\eta} = \int_0^{\cdot} \eta'(t) d\widetilde{R}(t) \equiv 0$ holds. In this case, we call $\eta$ a \textit{null portfolio}.
\end{lem}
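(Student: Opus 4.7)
The plan is to verify integrability and vanishing by exploiting the matrix representation in \eqref{Eq : alpha pi, gamma pi}, which lets us trade tildes for non-tildes whenever the process lives in $\pazocal{T}(n)$. Concretely, since $D$ is diagonal (hence symmetric) and idempotent, the relation $D\eta = \eta$ from \eqref{Eq : vartheta, pi with D} yields $\eta'\alpha = \eta'D\alpha = \eta'\widetilde{\alpha}$ and $\eta'c\eta = \eta'DcD\eta = \eta'\widetilde{c}\eta$. Under the standing hypotheses $\eta'\widetilde{\alpha} = 0$ and $\widetilde{c}\eta = 0$, these two scalar integrands vanish $(\mathbb{P}\otimes O)$-a.e.

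First I would establish $\eta \in \pazocal{I}(R)$. By the characterization \eqref{con : integrable w.r.t R}, this requires finiteness of $\int_0^T\!\bigl(|\eta'\alpha| + \eta'c\eta\bigr)\,dO$ on each compact $[0,T]$; but the previous paragraph shows this integral is identically zero, so integrability is automatic, and $\eta \in \pazocal{I}(R) \cap \pazocal{T}(n)$.

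Next I would compute $R_\eta$ via its semimartingale decomposition $R_\eta = A_\eta + M_\eta$ from \eqref{def : Gamma pi}--\eqref{def : M pi}. The finite-variation part is
\begin{equation*}
A_\eta = \int_0^{\cdot} \alpha_\eta(t)\,dO(t) = \int_0^{\cdot} \eta'(t)\widetilde{\alpha}(t)\,dO(t) = 0,
\end{equation*}
directly from the first hypothesis. The local martingale part $M_\eta$ has quadratic variation
\begin{equation*}
[M_\eta, M_\eta] = \int_0^{\cdot} \eta'(t)c(t)\eta(t)\,dO(t) = \int_0^{\cdot} \eta'(t)\widetilde{c}(t)\eta(t)\,dO(t) = 0,
\end{equation*}
using $\widetilde{c}\eta = 0$. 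A continuous local martingale starting at $0$ with identically vanishing quadratic variation is identically zero, so $M_\eta \equiv 0$, whence $R_\eta = A_\eta + M_\eta \equiv 0$. The alternative expression $R_\eta = \int_0^{\cdot}\eta'\,d\widetilde{R}$ follows from \eqref{Eq : R pi in tilde} applied to $\eta$.

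There is no real obstacle here: the lemma is essentially a bookkeeping statement, and the only subtlety is making sure the passage from the tilde quantities to the plain ones is legitimate, which is immediate from $D\eta = \eta$ and the symmetry/idempotency of $D$.
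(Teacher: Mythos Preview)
Your proof is correct and follows essentially the same approach as the paper: use $D\eta=\eta$ to identify $\eta'\alpha=\eta'\widetilde{\alpha}$ and $\eta'c\eta=\eta'\widetilde{c}\eta$, then observe that the finite-variation part of $R_\eta$ vanishes by the first hypothesis and the local-martingale part vanishes because its quadratic variation $\int_0^{\cdot}\eta'\widetilde{c}\eta\,dO$ is identically zero by the second. The only organizational difference is that you explicitly verify $\eta\in\pazocal{I}(R)$ first via \eqref{con : integrable w.r.t R} before computing $R_\eta$, whereas the paper performs the decomposition $\int\eta'\,dR=\int\eta'\,d\widetilde{A}+\int\eta'\,d\widetilde{M}$ directly and reads off both integrability and vanishing in one stroke; these are the same argument.
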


\smallskip

\begin{proof}
	As $\eta \in \pazocal{T}(n)$, we have $D\eta = \eta$, or $\eta' = \eta'D$. Recalling \eqref{Eq : tilde R with D}, \eqref{Eq : decomposition of tilde R}, and \eqref{Eq : tilde A and tilde C}, we have
	\begin{align}
		\int_0^{\cdot} \eta'(t) dR(t)
		&=\int_0^{\cdot} \eta'(t) D(t) dR(t)
		= \int_0^{\cdot} \eta'(t) d\widetilde{R}(t)		\nonumber
		\\
		&= \int_0^{\cdot} \eta'(t) d\widetilde{A}(t) + \int_0^{\cdot} \eta'(t) d\widetilde{M}(t)
		= \int_0^{\cdot} \eta'(t) \widetilde{\alpha}(t) dO(t) + \int_0^{\cdot} \eta'(t) d\widetilde{M}(t).			\label{Eq : null portfolio}
	\end{align}
	The first integral on the right-hand side of \eqref{Eq : null portfolio} vanishes, thanks to the assumption $\eta'\widetilde{\alpha} = 0$. The second integral $\int_0^{\cdot} \eta'(t) d\widetilde{M}(t)$ is a continuous local martingale, and has quadratic variation $\int_0^{\cdot} \eta'(t) \widetilde{c}(t) \eta(t) dO(t)$ from \eqref{Eq : tilde A and tilde C}. This quadratic variation also vanishes on account of the assumption $\widetilde{c}\eta = 0$, and the result follows.
\end{proof}

\bigskip

\begin{lem} [Integrability condition with respect to $R$]		\label{lem : integrability condition}
	An $N$-dimensional predictable vector process $\pi \in \pazocal{T}(n)$ belongs to $\pazocal{I}(R)$, if and only if
	\begin{equation}		\label{con : integrable w.r.t tilde R}
		\int_0^T \Big( |\pi'(t)\widetilde{\alpha}(t)| + \pi'(t)\widetilde{c}(t)\pi(t)\Big) dO(t) < \infty, \quad \text{for any} \quad T \geq 0.
	\end{equation}
\end{lem}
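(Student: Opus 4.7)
The plan is to reduce this to the general integrability characterization \eqref{con : integrable w.r.t R}, which already says that $\pi \in \pazocal{I}(R)$ iff $\int_0^T(|\pi'\alpha| + \pi'c\pi)\,dO < \infty$ for every $T \geq 0$. So what really needs to be shown is just that, for $\pi \in \pazocal{T}(n)$, the tilde and non-tilde integrands agree pointwise (in $(\mathbb{P}\otimes O)$-a.e. sense).

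First I would invoke the assumption $\pi \in \pazocal{T}(n)$ to write $D\pi = \pi$ from \eqref{Eq : vartheta, pi with D}, and hence $\pi' = \pi'D$ since the matrix $D$ defined in \eqref{def : diagonal matrix D} is diagonal and therefore symmetric. Combined with the identities $\widetilde{\alpha} = D\alpha$ and $\widetilde{c} = DcD$ from \eqref{Eq : tilde alpha, c with D}, this immediately gives
\begin{equation*}
\pi'\widetilde{\alpha} = \pi'D\alpha = \pi'\alpha, \qquad \pi'\widetilde{c}\pi = \pi'DcD\pi = \pi'c\pi,
\end{equation*}
exactly as already noted in \eqref{Eq : alpha pi, gamma pi}.

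Consequently the integrand in \eqref{con : integrable w.r.t tilde R} is identically equal to the integrand in \eqref{con : integrable w.r.t R}, and the equivalence \eqref{con : integrable w.r.t tilde R} $\Longleftrightarrow$ $\pi \in \pazocal{I}(R)$ follows from the already-established characterization of $\pazocal{I}(R)$. There is no real obstacle here; the only thing to be careful about is noting that $D$ is symmetric (so the transpose works cleanly) and that, because $D$ is only defined up to $(\mathbb{P}\otimes O)$-a.e. equivalence, the equalities above should be read in that sense — but this is harmless for the integrability criterion.
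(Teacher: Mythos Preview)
Your proposal is correct and follows essentially the same approach as the paper's proof: both use $D\pi=\pi$ (hence $\pi'D=\pi'$) together with the identities $\widetilde{\alpha}=D\alpha$ and $\widetilde{c}=DcD$ from \eqref{Eq : tilde alpha, c with D} to show that the integrand in \eqref{con : integrable w.r.t tilde R} coincides with that in \eqref{con : integrable w.r.t R}, after which the equivalence is immediate.
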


\smallskip

\begin{proof}
	From the assumption $\pi \in \pazocal{T}(n)$, we have $D\pi = \pi$, and $\pi'D = \pi'$. The condition \eqref{con : integrable w.r.t R} can be rewritten with the help of \eqref{Eq : tilde alpha, c with D} as
	\begin{align*}
		&~~~\int_0^T \Big( |\pi'(t)\alpha(t)| + \pi'(t)c(t)\pi(t)\Big) dO(t)
		=\int_0^T \Big( |\pi'(t)D(t)\alpha(t)| + \pi'(t)D(t)c(t)D(t)\pi(t)\Big) dO(t)
		\\
		&=\int_0^T \Big( |\pi'(t)\widetilde{\alpha}(t)| + \pi'(t)\widetilde{c}(t)\pi(t)\Big) dO(t)
		< \infty.
	\end{align*}
\end{proof}

\bigskip

\section{Num\'eraires and Market Viability}
\label{sec: main}

This section presents the fundamental result in arbitrage theory of equity market, in open market context. Before we state and prove the result, we explain several necessary concepts one after another.

\subsection{Auxiliary market}
Consider a portfolio $\rho \in \pazocal{I}(R)$ which generates the num\'eraire $X_{\rho}$ as in Definition~\ref{Def : portfolio} and \eqref{def : wealth}, and fix $\rho$ throughout this subsection. We regard this portfolio $\rho$ as a `baseline', in the sense that want to compare the relative performance of any other portfolio $\pi \in \pazocal{I}(R)$ with respect to $\rho$, by understanding the relative wealth process
\begin{equation}	\label{def : relative wealth}
X^{\rho}_{\pi} := \frac{X_{\pi}}{X_{\rho}}.
\end{equation} 
As the wealth $X_{\pi}$ is denominated relative to $X_{\rho}$ in \eqref{def : relative wealth}, we consider an \textit{auxiliary market}, in which all the components of the price vector $S$ in Definition~\ref{Def : price process by name} are denominated in units of $X_{\rho}$:
\begin{equation}		\label{def : denominated stock price}
	S^{\rho}_i := \frac{S_i}{X_{\rho}}, \quad i = 1, \cdots, N.
\end{equation}
Here, we also consider the money market $S_0 \equiv 1$, with $S^{\rho}_0 := 1/X_{\rho}$, as we assume that the money market pays and charges zero interest in the introductory part of Section~\ref{sec: setup}. Since $S^{\rho}_0$ is no longer trivial, we will consider the $(N+1)$-dimensional vector $S^{\rho} \equiv (S^{\rho}_0, S^{\rho}_1, \cdots S^{\rho}_N)$ as the price process vector in this auxiliary market.

\bigskip

Recalling the notation \eqref{def : rpi} and \eqref{def : C pi rho}, we define two $(N+1)$-dimensional vectors of semimartingales $R^{\rho} \equiv (R^{\rho}_0, \cdots, R^{\rho}_N)$, and $\widetilde{R}^{\rho} \equiv (\widetilde{R}^{\rho}_0, \cdots, \widetilde{R}^{\rho}_N)$ with components
\begin{equation}		\label{def : cumulative return in auxiliary market}
	R^{\rho}_0 := C_{\rho\rho}-R_{\rho}, \quad \text{and} \quad R^{\rho}_i := R^{\rho}_0 + (R_i - C_{i\rho}), \quad \text{for} \quad i = 1, \cdots, N.
\end{equation}
\begin{equation}		\label{def : tilde cumulative return in auxiliary market}
	\widetilde{R}^{\rho}_0 := R^{\rho}_0 = C_{\rho\rho}-R_{\rho}, \quad \text{and} \quad \widetilde{R}^{\rho}_i := \widetilde{R}^{\rho}_0 + (\widetilde{R}_i - C_{\widetilde{i}\rho}), \quad \text{for} \quad i = 1, \cdots, N.
\end{equation}
The following result, which builds on Proposition~1.29 of \cite{KK2}, shows that the vectors $R^{\rho}$, $\widetilde{R}^{\rho}$ play the role of \textit{cumulative returns} in the auxiliary market. We also recall the `money market proportion' $\pi_0$ of a portfolio $\pi$ in \eqref{def : money proportion}.

\smallskip

\begin{prop}		\label{prop : relative wealth}
	For any two portfolios $\rho, \pi \in \pazocal{I}(R)$, the relative wealth process $X^{\rho}_{\pi}$ of \eqref{def : relative wealth} admits the representation
	\begin{equation}		\label{def : relative cumulative return}
	X^{\rho}_{\pi} = \pazocal{E}(R^{\rho}_{\pi}), \quad \text{where} \quad R^{\rho}_{\pi} := R_{\pi-\rho} - C_{\pi-\rho, \rho} = \int_0^{\cdot} \sum_{i=0}^N \pi_i(t)dR^{\rho}_i(t).
	\end{equation}
	In particular, for any two portfolios $\rho, \pi \in \pazocal{I}(R) \cap \pazocal{T}(n)$ among the top $n$ stocks, the process $R^{\rho}_{\pi}$ in \eqref{def : relative cumulative return} admits the additional representation
	\begin{equation}		\label{Eq : R rho pi}
		R^{\rho}_{\pi} = \int_0^{\cdot} \sum_{i=0}^N \pi_i(t)d\widetilde{R}^{\rho}_i(t).
	\end{equation}
\end{prop}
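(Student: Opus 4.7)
The plan is to first establish the exponential-form identity $X^\rho_\pi = \pazocal{E}(R_{\pi-\rho} - C_{\pi-\rho,\rho})$ via standard stochastic calculus, and then to verify the two integral representations by direct algebraic expansion of the coordinates $R^\rho_i$ and $\widetilde{R}^\rho_i$.

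For the exponential identity, I would invoke the ratio-of-stochastic-exponentials formula
\[
\frac{\pazocal{E}(U)}{\pazocal{E}(V)} \;=\; \pazocal{E}\bigl(U - V - [U-V,\, V]\bigr),
\]
valid for any continuous semimartingales $U, V$ vanishing at $0$. This follows from combining the reciprocal identity $\pazocal{E}(V)^{-1} = \pazocal{E}(-V + [V,V])$ with the product rule $\pazocal{E}(Y_1)\pazocal{E}(Y_2) = \pazocal{E}(Y_1 + Y_2 + [Y_1,Y_2])$, both of which are routine consequences of It\^o's formula applied to the defining SDE \eqref{Eq : exponential SDE}. Applying it with $U = R_\pi$ and $V = R_\rho$, and using $R_\pi - R_\rho = R_{\pi-\rho}$ together with $[R_{\pi-\rho}, R_\rho] = C_{\pi-\rho,\rho}$ (by linearity of stochastic integration and bilinearity of the covariation) immediately yields $X^\rho_\pi = \pazocal{E}(R_{\pi-\rho} - C_{\pi-\rho,\rho})$, which is the defining expression for $R^\rho_\pi$ in \eqref{def : relative cumulative return}.

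For the integral representation $R^\rho_\pi = \int_0^\cdot \sum_{i=0}^N \pi_i\, dR^\rho_i$, I would substitute \eqref{def : cumulative return in auxiliary market} and use $\pi_0 = 1 - \sum_{i=1}^N \pi_i$ from \eqref{def : money proportion} to obtain
\[
\sum_{i=0}^N \pi_i\, dR^\rho_i \;=\; \pi_0\, dR^\rho_0 + \sum_{i=1}^N \pi_i\bigl(dR^\rho_0 + dR_i - dC_{i\rho}\bigr) \;=\; dR^\rho_0 + dR_\pi - dC_{\pi\rho},
\]
since the coefficient of $dR^\rho_0$ collapses to $1$; substituting $R^\rho_0 = C_{\rho\rho} - R_\rho$ then recovers $R_{\pi-\rho} - C_{\pi-\rho,\rho}$. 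The additional tilde representation \eqref{Eq : R rho pi} follows by the identical algebra once each object on the right is replaced by its tilde counterpart: under $\pi,\rho \in \pazocal{T}(n)$, we have $dR_\pi = \sum \pi_i\, d\widetilde{R}_i$ from \eqref{Eq : R pi in tilde} (and similarly for $dR_\rho$), while $dC_{\pi\rho} = \sum_i \pi_i\, dC_{\widetilde{i}\rho}$ and $dC_{\rho\rho} = \sum_i \rho_i\, dC_{\widetilde{i}\rho}$ by combining \eqref{Eq : C pi rho} and \eqref{Eq : C tilde i rho}. The main obstacle is purely notational bookkeeping rather than a deeper conceptual difficulty: one must ensure that the money-market coordinate $i=0$ contributes exactly the balancing term $dR^\rho_0$ after summing against $\pi_0$, and that each tilde substitution is licensed by the top-$n$ constraint at the step it is invoked.
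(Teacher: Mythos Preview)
Your proposal is correct and essentially matches the paper's approach. The paper cites the first identity from Proposition~1.29 of \cite{KK2} rather than re-deriving it, whereas you supply the standard ratio-of-stochastic-exponentials computation; for the tilde representation, the paper shows directly that $\sum_{i=0}^N \pi_i\,dR^\rho_i = \sum_{i=0}^N \pi_i\,d\widetilde{R}^\rho_i$ coordinate-wise via $\pi_i\bm{1}_{\{u_i \leq n\}} = \pi_i$ and the identity $\bm{1}_{\{u_i \leq n\}}\,dC_{i\rho} = dC_{\widetilde{i}\rho}$, while you instead re-expand the tilde sum back to the common form $R_{\pi-\rho} - C_{\pi-\rho,\rho}$ --- the underlying identities invoked are the same.
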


\smallskip

\begin{proof}
	The first part is exactly Proposition 1.29 of \cite{KK2}. Thus, it is enough to show $\int_0^{\cdot} \sum_{i=0}^N \pi_i(t)dR^{\rho}_i(t) = \int_0^{\cdot} \sum_{i=0}^N \pi_i(t)d\widetilde{R}^{\rho}_i(t)$ when $\rho, \pi$ belong to $\pazocal{I}(R) \cap \pazocal{T}(n)$.
	Since $\widetilde{R}^{\rho}_0 = R^{\rho}_0$ in \eqref{def : cumulative return in auxiliary market} and \eqref{def : tilde cumulative return in auxiliary market}, this reduces to showing
	\begin{equation*}
		\int_0^{\cdot} \sum_{i=1}^N \pi_i(t)d\big(R_i-C_{i\rho}\big)(t) = \int_0^{\cdot} \sum_{i=1}^N \pi_i(t)d\big(\widetilde{R}_i-C\,_{\widetilde{i}\rho}\big)(t).
	\end{equation*}
	Thanks to the condition \eqref{con : equiv portfolio among n} and the definition \eqref{def : tilde R}, this can be easily checked:
	\begin{align*}
		\int_0^{\cdot} \sum_{i=1}^N \pi_i(t)d\big(R_i-C_{i\rho}\big)(t)
		&= \int_0^{\cdot} \sum_{i=1}^N \pi_i(t) \bm{1}_{\{u_i(t) \leq n\}} d\big(R_i-C_{i\rho}\big)(t)
		\\
		& = \int_0^{\cdot} \sum_{i=1}^N \pi_i(t)d\big(\widetilde{R}_i-C\,_{\widetilde{i}\rho}\big)(t)
	\end{align*}
	where, in the last equality, we used the string of identities
	\begin{equation}		\label{Eq : C i rho}
		\bm{1}_{\{u_i(t) \leq n\}}dC_{i\rho}(t) = \bm{1}_{\{u_i(t) \leq n\}}d[R_i, R_{\rho}](t) = d[\widetilde{R}_i, R_{\rho}](t) = dC\,_{\widetilde{i}\rho}.
	\end{equation}
\end{proof}

\smallskip

In the special case $\pi \equiv e^i$, that is, when the portfolio $\pi$ invests all wealth in the $i$-th stock at all times, the relative wealth process $X^{\rho}_{\pi}$ and its stochastic logarithm $R^{\rho}_{\pi}$ in \eqref{def : relative wealth}, \eqref{def : relative cumulative return} become
\begin{equation*}
	X^{\rho}_{\pi} = \frac{S_i}{X_{\rho}} = S^{\rho}_i, \qquad R^{\rho}_{\pi} = R^{\rho}_i,
\end{equation*}
and Proposition~\ref{prop : relative wealth} yields 
\begin{equation}		\label{Eq : S rho}
	S^{\rho}_i = \pazocal{E}(R^{\rho}_i)
\end{equation}
for any given $i = 1, \cdots, N$. Therefore, the component $R^{\rho}_i$ of \eqref{def : cumulative return in auxiliary market} is the stochastic logarithm of the $i$-th component of the price vector $S^{\rho}$ in the auxiliary market, and the vector $R^{\rho}$ plays the role of cumulative returns in the auxiliary market.

\smallskip

By analogy with \eqref{Eq : dynamics of wealth}, we also have
\begin{equation}		\label{Eq : dynamics of relative wealth}
	\frac{dX^{\rho}_{\pi}(t)}{X^{\rho}_{\pi}(t)} = dR^{\rho}_{\pi}(t) = \sum_{i=0}^N \pi_i(t) dR^{\rho}_i(t) = \sum_{i=0}^N \pi_i(t) \frac{dS^{\rho}_i(t)}{S^{\rho}_i(t)}, \quad X^{\rho}_{\pi}(0) = 1,
\end{equation}
for $\rho, \pi$ in $\pazocal{I}(R)$, from \eqref{def : relative cumulative return}, \eqref{Eq : S rho}. It is very important that the summation in \eqref{Eq : dynamics of relative wealth} should include the index $i=0$, as indeed it does, in contrast to the summation in \eqref{Eq : dynamics of wealth}.

\bigskip

\subsection{Supermartingale Num\'eraire and local martingale Num\'eraire}

We introduce now the notions of supermartingale num\'eraire and local martingale num\'eraire.

\begin{defn} [Supermartingale num\'eraire and local martingale num\'eraire]		\label{Def : supermatingale numeraire}
	A given portfolio $\rho \in \pazocal{I}(R)$ is called \textit{supermatingale num\'eraire portfolio (local martingale num\'eraire portfolio) in the whole market}, if the relative wealth process $X^{\rho}_{\pi} = X_{\pi}/X_{\rho}$ of \eqref{def : relative wealth} is a supermartingale (local martingale) for every portfolio $\pi \in \pazocal{I}(R)$ in the market. In this case, the wealth process $X_{\rho}$ is called a \textit{supermartingale num\'eraire (local martingale num\'eraire, respectively) in the whole market}.
	
	\smallskip
	
	Similarly, a given portfolio $\rho \in \pazocal{I}(R) \cap \pazocal{T}(n)$ among the top $n$ stocks is called \textit{supermatingale num\'eraire portfolio (local martingale num\'eraire portfolio) among the top $n$ stocks}, if the relative wealth process $X^{\rho}_{\pi}$ is a supermartingale (local martingale) for every portfolio $\pi \in \pazocal{I}(R) \cap \pazocal{T}(n)$ among the top $n$ stocks. In this case, the wealth process $X_{\rho}$ is called \textit{supermartingale num\'eraire (local martingale num\'eraire, respectively) among the top $n$ stocks}.
\end{defn}

\smallskip

By Fatou's lemma, every nonnegative local martingale is a supermartingale; thus, every local martingale num\'eraire is in particular a supermatingale num\'eraire. We also have the following uniqueness result for supermartingale (local martingale) num\'eraires (respectively, among the top $n$ stocks). 

\smallskip

\begin{lem}
	There is a unique supermartingale (local martingale) num\'eraire portfolio in the entire market (respectively, among the top $n$ stocks).
\end{lem}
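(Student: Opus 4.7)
The plan is to compare two candidate portfolios $\rho, \rho' \in \pazocal{I}(R)$ (respectively, $\pazocal{I}(R)\cap \pazocal{T}(n)$) through the positive continuous semimartingale $Y := X_\rho/X_{\rho'}$, and to show that $Y \equiv 1$. Because any two portfolios yielding the same wealth process differ by a null portfolio in the sense of Lemma~\ref{lem : null portfolio}, this is the natural reading of ``uniqueness'' in the present setting.

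The first step is to invoke the num\'eraire property twice: taking $\pi = \rho'$ with baseline $\rho$ shows that $1/Y = X_{\rho'}/X_\rho$ is a positive supermartingale with initial value $1$, and taking $\pi = \rho$ with baseline $\rho'$ shows that $Y$ itself is a positive supermartingale with $Y(0)=1$. The local martingale num\'eraire case reduces to this one via Fatou's lemma. A short application of Jensen's inequality,
\[
\mathbb{E}[Y(t)] \;\ge\; \frac{1}{\mathbb{E}[1/Y(t)]} \;\ge\; 1 \;\ge\; \mathbb{E}[Y(t)],
\]
then forces $\mathbb{E}[Y(t)] = \mathbb{E}[1/Y(t)] = 1$ for every $t \ge 0$; since continuous nonnegative supermartingales with constant expectation are in fact true martingales, both $Y$ and $1/Y$ are positive continuous martingales.

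The second step exploits the logarithmic identities from Section~\ref{subsec : cumulative return}. Writing $Y = \pazocal{E}(Z)$ with $Z := \pazocal{L}(Y)$ a continuous local martingale starting at $0$, a short It\^o computation yields $\pazocal{L}(1/Y) = -Z + [Z,Z]$. The local martingale property of $1/Y$ forces the finite-variation part $[Z,Z]$ of this decomposition to vanish identically, by uniqueness of the semimartingale decomposition. Hence $Z \equiv 0$, so $Y \equiv 1$ and $X_\rho \equiv X_{\rho'}$. Taking stochastic logarithms, $R_{\rho-\rho'} \equiv 0$, and from \eqref{def : rpi} together with \eqref{def : Gamma pi} one reads off $(\rho-\rho')'\widetilde{\alpha} = 0$ and $(\rho-\rho')'\widetilde{c}(\rho-\rho') = 0$, so $\rho - \rho'$ is a null portfolio per Lemma~\ref{lem : null portfolio}.

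The same argument covers both the whole-market statement and the top-$n$ statement without modification, because the test portfolios $\pi = \rho$ and $\pi = \rho'$ automatically lie in the same admissibility class as the candidate num\'eraires themselves. I do not expect any serious technical obstacle; the only delicate point is interpretive, namely that uniqueness here must be understood as equality of wealth processes, equivalently as equality of portfolios modulo null portfolios --- a caveat the plan sidesteps by appealing explicitly to Lemma~\ref{lem : null portfolio} at the final step.
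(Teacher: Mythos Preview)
Your proof is correct and follows the paper's approach: both candidate num\'eraires make the ratio $Y=X_\rho/X_{\rho'}$ and its reciprocal positive continuous supermartingales, which forces $Y\equiv 1$, i.e.\ equality of wealth processes. The paper dispatches the implication ``$Y$ and $1/Y$ both supermartingales $\Rightarrow Y$ constant'' in one line via the Doob--Meyer decomposition (It\^o on $1/Y$ shows the finite-variation part $dA/Y^2 + d[M]/Y^3$ must vanish), whereas you take a slightly longer but equally valid route through Jensen's inequality and the stochastic logarithm; your additional remark that $\rho-\rho'$ is then a null portfolio goes a bit beyond what the paper records but is correct (note that in the whole-market case the relevant characteristics are $\alpha,c$ rather than $\widetilde{\alpha},\widetilde{c}$, and Lemma~\ref{lem : null portfolio} as stated is for $\pazocal{T}(n)$, though the untilded analogue is immediate).
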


\smallskip

\begin{proof}
	Suppose that there are two local martingale (or two supermartingale) num\'eraire portfolios $\rho$ and $\nu$ with the same initial wealth $X_{\rho}(0) = X_{\nu}(0)$. Then, the relative wealth process $X_{\rho}/X_{\nu}$ and its reciprocal $X_{\nu}/X_{\rho}$ are positive supermartingales. From the Doob-Meyer decomposition of semimartingales, it is easy to show that a continuous, positive supermartingale $Y$ is almost everywhere constant, if its reciprocal is also a supermartingale. Thus, $X_{\rho}/X_{\nu} \equiv 1$ almost everywhere, and the two portfolios $\rho$ and $\nu$ generate the same wealth process.
\end{proof}

\bigskip

It can be shown that the supermartingale num\'eraire is actually the local martingale num\'eraire, thus the two num\'eraires are equivalent, in the whole market where no constraint is imposed on portfolios. This is Proposition 2.4 of \cite{KK2}, which we repeat here for the convenience of the reader.

\smallskip

\begin{prop}		\label{prop : general supermartingale numeraire}
	For a portfolio $\rho \in \pazocal{I}(R)$, the following statements are equivalent:
	\begin{itemize}
		\item [$(1)$] $\rho$ is a supermartingale num\'eraire portfolio in the whole market.
		\item [$(2)$] $\rho$ is a local martingale num\'eraire portfolio in the whole market.
		\item [$(3)$] The equality $A_i = C_{i\rho}$ holds for all $i = 1, \cdots, N$.
	\end{itemize}
\end{prop}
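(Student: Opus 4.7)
The plan is to establish the cyclic chain $(2)\Rightarrow(1)\Rightarrow(3)\Rightarrow(2)$. The implication $(2)\Rightarrow(1)$ is immediate: every nonnegative local martingale is a supermartingale by Fatou's lemma, as remarked in the excerpt just before the proposition.

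For $(3)\Rightarrow(2)$, I would invoke Proposition~\ref{prop : relative wealth}, which for any portfolio $\pi\in\pazocal{I}(R)$ gives $X^{\rho}_{\pi}=\pazocal{E}(R^{\rho}_{\pi})$ with $R^{\rho}_{\pi}=R_{\pi-\rho}-C_{\pi-\rho,\rho}$. Differentiating the hypothesis $A_i=C_{i\rho}$ against the operational clock $O$ yields $\alpha=c\rho$ in the $(\mathbb{P}\otimes O)$-a.e. sense, so the finite-variation part of $R^{\rho}_{\pi}$ reduces to $\int_0^{\cdot}(\pi(t)-\rho(t))'(\alpha(t)-c(t)\rho(t))\,dO(t)$ and vanishes identically. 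Hence $R^{\rho}_{\pi}$ is a continuous local martingale, and its stochastic exponential $X^{\rho}_{\pi}$ is a positive local martingale.

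The crux is $(1)\Rightarrow(3)$, for which I plan a two-sided perturbation argument. For any fixed $i\in\{1,\ldots,N\}$ and any real scalar $\epsilon$, the process $\pi:=\rho+\epsilon e^i$ still lies in $\pazocal{I}(R)$, and Proposition~\ref{prop : relative wealth} collapses $R^{\rho}_{\pi}$ to $\epsilon(R_i-C_{i\rho})$, whose finite-variation part is $\epsilon(A_i-C_{i\rho})$. The main obstacle will be to promote the supermartingale property of $X^{\rho}_{\pi}=\pazocal{E}(R^{\rho}_{\pi})$ into monotonicity of this drift: since $dX^{\rho}_{\pi}=X^{\rho}_{\pi}\,dR^{\rho}_{\pi}$ and $X^{\rho}_{\pi}>0$, the bounded-variation part of $X^{\rho}_{\pi}$ equals $\int X^{\rho}_{\pi}\,dV$ with $V=\epsilon(A_i-C_{i\rho})$, and by uniqueness of the Doob-Meyer decomposition of the supermartingale $X^{\rho}_{\pi}$ this bounded-variation part must be nonincreasing; strict positivity of $X^{\rho}_{\pi}$ then transfers the monotonicity to $V$ itself. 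Letting $\epsilon$ run through both positive and negative reals makes $A_i-C_{i\rho}$ simultaneously nonincreasing and nondecreasing, so it is constant, and since it vanishes at $t=0$ it is identically zero, yielding $(3)$.
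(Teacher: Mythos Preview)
Your proof is correct and follows essentially the same route as the paper. The paper does not give its own proof of this proposition (it cites Proposition~2.4 of \cite{KK2}), but its proof of the top-$n$ analogue, Proposition~\ref{prop : supermartingale numeraire among the top n}, uses exactly the two-sided perturbation $\pi=\rho\pm e^i$ and the passage from the supermartingale property of $X^{\rho}_{\pi}=\pazocal{E}(R^{\rho}_{\pi})$ to the local supermartingale property of $R^{\rho}_{\pi}$, which is precisely your Doob--Meyer argument.
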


\smallskip

The statement \textit{(3)} gives a very simple structural condition, derived from the cumulative return process of the market, which characterizes this equivalence. It is no surprise that the result also holds for the portfolios among the top $n$ stocks; but in this case, the cumulative return process vector $R$ in \eqref{def : R} should be replaced by $\widetilde{R}$ of \eqref{def : tilde R} instead.

\smallskip

\begin{prop}		\label{prop : supermartingale numeraire among the top n}
	For a portfolio $\rho \in \pazocal{I}(R) \cap \pazocal{T}(n)$, the following statements are equivalent:
	\begin{itemize}
		\item [$\widetilde{(1)}$] $\rho$ is a supermartingale num\'eraire portfolio among the top $n$ stocks.
		\item [$\widetilde{(2)}$] $\rho$ is a local martingale num\'eraire portfolio among the top $n$ stocks.
		\item [$\widetilde{(3)}$] The equality $\widetilde{A}_i = C_{\widetilde{i}\rho}$ holds for all $i = 1, \cdots, N$.
	\end{itemize}
\end{prop}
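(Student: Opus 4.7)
The plan is to adapt the proof of Proposition~\ref{prop : general supermartingale numeraire} to the constrained setting, substituting the ``tilded'' quantities $\widetilde R,\widetilde A,\widetilde M,\widetilde{\alpha},\widetilde c$ from \eqref{def : tilde R}, \eqref{def : tilde A and M}, \eqref{def : tilde alpha}, \eqref{def : tilde c} for their unrestricted counterparts, and exploiting the representation \eqref{Eq : R rho pi} of $R^\rho_\pi$ in terms of $\widetilde R^\rho$, available when $\pi,\rho\in\pazocal I(R)\cap\pazocal T(n)$. I would establish the cycle $\widetilde{(2)}\Rightarrow\widetilde{(1)}\Rightarrow\widetilde{(3)}\Rightarrow\widetilde{(2)}$.

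The direction $\widetilde{(2)}\Rightarrow\widetilde{(1)}$ is immediate from Fatou's lemma applied to the nonnegative local martingale $X^\rho_\pi$. For $\widetilde{(3)}\Rightarrow\widetilde{(2)}$, fix $\pi\in\pazocal I(R)\cap\pazocal T(n)$ and decompose each $\widetilde R^\rho_i$ from \eqref{def : tilde cumulative return in auxiliary market} into its finite-variation and local-martingale parts. A direct computation using $\pi_0=1-\sum_{i\ge 1}\pi_i$ shows that the drift of $R^\rho_\pi=\int\sum_{i=0}^N\pi_i\,d\widetilde R^\rho_i$ equals $(C_{\rho\rho}-A_\rho)+\sum_{i\ge 1}\int\pi_i\,d(\widetilde A_i-C_{\widetilde i\rho})$. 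The second sum vanishes by hypothesis. For the first term, I would self-apply $\widetilde{(3)}$: multiplying $\widetilde A_i=C_{\widetilde i\rho}$ by $\rho_i$, summing over $i$, and using $\rho'D=\rho'$ on the left together with the identity $\sum_{i,j}\rho_i\rho_j\,d\widetilde C_{i,j}=dC_{\rho\rho}$ on the right (which follows from \eqref{Eq : tilde C and C} and $\rho_i=\rho_i\mathbf 1_{\{u_i\le n\}}$) gives $A_\rho=C_{\rho\rho}$. Consequently the drift of $R^\rho_\pi$ vanishes, $R^\rho_\pi$ is a continuous local martingale, and so is $X^\rho_\pi=\pazocal E(R^\rho_\pi)$.

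For $\widetilde{(1)}\Rightarrow\widetilde{(3)}$, I would employ a one-parameter perturbation. Fix $i\in\{1,\ldots,N\}$ and $\lambda\in\mathbb R$, and set $\pi_\lambda:=\rho+\lambda\widetilde e^{\,i}$ with $\widetilde e^{\,i}:=\mathbf 1_{\{u_i(\cdot)\le n\}}e^i$ the rank-localized unit portfolio of \eqref{def : tilde unit portfolio}; then $\pi_\lambda\in\pazocal T(n)$, and $\pi_\lambda\in\pazocal I(R)$ by Lemma~\ref{lem : integrability condition}, since $|\widetilde\alpha_i|\le|\alpha_i|$ and $\widetilde c_{ii}\le c_{ii}$ are $O$-integrable on compacts by the very definition \eqref{def : O} of $O$. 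Proposition~\ref{prop : relative wealth} combined with \eqref{Eq : R unit vector} yields $R^\rho_{\pi_\lambda}=\lambda(\widetilde A_i-C_{\widetilde i\rho})+\lambda\widetilde M_i$, whose drift equals $\lambda(\widetilde A_i-C_{\widetilde i\rho})$. By hypothesis $X^\rho_{\pi_\lambda}=\pazocal E(R^\rho_{\pi_\lambda})$ is a positive continuous supermartingale, so by It\^o its drift $\int X^\rho_{\pi_\lambda}(t)\,d[\lambda(\widetilde A_i-C_{\widetilde i\rho})](t)$ is non-increasing; strict positivity of the integrand forces $\lambda(\widetilde A_i-C_{\widetilde i\rho})$ itself to be non-increasing. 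Taking $\lambda=\pm 1$ renders this process simultaneously non-increasing and non-decreasing, hence identically zero (both quantities vanish at $t=0$). The main delicate point I anticipate is the self-application of $\widetilde{(3)}$ to $\rho$ in the middle step, needed to cancel the extra ``money-market'' drift $(C_{\rho\rho}-A_\rho)$ that distinguishes the open-market formula from its closed-market analogue in Proposition~\ref{prop : general supermartingale numeraire}.
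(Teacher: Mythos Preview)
Your proposal is correct and follows essentially the same route as the paper's proof: both establish the cycle $\widetilde{(3)}\Rightarrow\widetilde{(2)}\Rightarrow\widetilde{(1)}\Rightarrow\widetilde{(3)}$, with the key step $\widetilde{(1)}\Rightarrow\widetilde{(3)}$ handled by perturbing $\rho$ by $\pm\,\widetilde e^{\,i}=\pm\mathbf 1_{\{u_i\le n\}}e^i$ and using the supermartingale property on both sides to force $\widetilde A_i-C_{\widetilde i\rho}\equiv 0$. Your treatment of $\widetilde{(3)}\Rightarrow\widetilde{(2)}$, including the self-application of $\widetilde{(3)}$ to deduce $A_\rho=C_{\rho\rho}$, is exactly the computation \eqref{Eq : tilde R rho 0} carried out in the paper; the only cosmetic difference is that the paper passes to the stochastic logarithm $\pazocal L(X^\rho_{\rho\pm\widetilde i})$ to read off the local supermartingale property, whereas you argue directly from the Doob--Meyer decomposition and strict positivity of $X^\rho_{\pi_\lambda}$.
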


\smallskip

\begin{proof}
	The proof follows the same general outline as that for Proposition 2.4 in \cite{KK2}. We first assume statement \textit{(3)}, which is equivalent to the requirement that $\widetilde{R}_i - C_{\widetilde{i}\rho} = \widetilde{M}_i$ is a local martingale for all $i=1, \cdots, N$ from \eqref{Eq : decomposition of tilde R}. Recalling the notation of \eqref{def : tilde cumulative return in auxiliary market}, \eqref{def : C pi rho} with the identities \eqref{con : equiv portfolio among n}, \eqref{Eq : R pi in tilde} and \eqref{Eq : C i rho}, we obtain that the process
	\begin{align}
		\widetilde{R}^{\rho}_0 &= C_{\rho\rho}-R_{\rho}
		= \int_0^{\cdot} \sum_{i=1}^N \rho_i(t)dC_{i\rho}(t) - \int_0^{\cdot} \sum_{i=1}^N \rho_i(t) d\widetilde{R}_i(t)										\nonumber
		\\
		&= \int_0^{\cdot} \sum_{i=1}^N \rho_i(t) \bm{1}_{\{u_i(t) \leq n\}} dC_{i\rho}(t) - \int_0^{\cdot} \sum_{i=1}^N \rho_i(t) d\widetilde{R}_i(t)			\nonumber
		\\
		&= \int_0^{\cdot} \sum_{i=1}^N \rho_i(t) dC_{\widetilde{i}\rho}(t) - \int_0^{\cdot} \sum_{i=1}^N \rho_i(t) d\widetilde{R}_i(t)
		= - \int_0^{\cdot} \sum_{i=1}^N \rho_i(t) d\widetilde{M}_i(t)		\label{Eq : tilde R rho 0}
	\end{align}
	is then also a local martingale. This in turn implies that all the components $\widetilde{R}^{\rho}_i = \widetilde{R}^{\rho}_0 + (\widetilde{R}_i - C_{\widetilde{i}\rho})$ for $i=1, \cdots, N$ in \eqref{def : tilde cumulative return in auxiliary market} are local martingales as well. Moreover, from Proposition~\ref{prop : relative wealth}, the processes $R^{\rho}_{\pi}$ and $X^{\rho}_{\pi}$ are also local martingales for every portfolio $\pi \in \pazocal{I}(R) \cap \pazocal{T}(n)$ among the top $n$ stocks, so the implication \textit{(3)} $\Rightarrow$ \textit{(2)} has been proved.
	
	\smallskip
	
	Since statement \textit{(2)} trivially implies statement \textit{(1)}, it remains to establish the implication \textit{(1)} $\Rightarrow$ \textit{(3)}. Assuming statement \textit{(1)}, we first fix any $i$ in $\{1, \cdots, N\}$, consider a specific portfolio $\pi$ among the top $n$ stocks defined as in \eqref{def : tilde unit portfolio}, and recall the notation $X_{\pi} \equiv X_{\widetilde{i}}$ as well as $R_{\pi} \equiv \widetilde{R}_i$. Then, the processes
	\begin{equation*}
		X^{\rho}_{\rho+\widetilde{i}} = \frac{X_{\rho+\widetilde{i}}}{X_{\rho}}, \qquad X^{\rho}_{\rho-\widetilde{i}} = \frac{X_{\rho-\widetilde{i}}}{X_{\rho}},
	\end{equation*}
	are supermartingales. In view of Proposition~\ref{prop : relative wealth} along with \eqref{Eq : R unit vector}, all processes
	\begin{equation*}
		\pazocal{L}(X^{\rho}_{\rho+\widetilde{i}}) = R^{\rho}_{\rho+\widetilde{i}} = \widetilde{R}_{i}-C_{\widetilde{i}\rho}, \qquad 
		\pazocal{L}(X^{\rho}_{\rho-\widetilde{i}}) = R^{\rho}_{\rho-\widetilde{i}} = -(\widetilde{R}_{i}-C_{\widetilde{i}\rho}),
	\end{equation*}
	are local supermartingales, implying that $\widetilde{R}_{i}-C_{\widetilde{i}\rho}$ is a local martingale. Since $i \in \{1, \cdots, N\}$ can be chosen arbitrarily, we arrive at statement \textit{(3)}.
\end{proof}

\bigskip

\begin{rem} [Representation of wealth relative to the supermartingale num\'eraire]
	When $\rho \in \pazocal{I}(R) \cap \pazocal{T}(n)$ is a supermartingale num\'eraire portfolio among the top $n$ stocks, statement \textit{(3)} of Proposition~\ref{prop : supermartingale numeraire among the top n} implies that $\widetilde{R}_i - C_{\widetilde{i}\rho} = \widetilde{M}_i$ is a local martingale for all $i=1, \cdots, N$. Then, Proposition~\ref{prop : relative wealth} with the notation \eqref{def : tilde cumulative return in auxiliary market} yields the following representation of the relative wealth process $X^{\rho}_{\pi}$ for any portfolio $\pi \in \pazocal{I}(R) \cap \pazocal{T}(n)$ among the top $n$ stocks, namely,
	\begin{align*}
		X^{\rho}_{\pi} 
		&= \pazocal{E}\bigg(\int_0^{\cdot} \sum_{i=0}^N \pi_i(t)d\widetilde{R}^{\rho}_i(t)\bigg)
		= \pazocal{E}\bigg(\widetilde{R}^{\rho}_0 + \int_0^{\cdot} \sum_{i=1}^N \pi_i(t)d\widetilde{M}_i(t)\bigg)
		\\
		& = \pazocal{E}\bigg(\int_0^{\cdot} \sum_{i=1}^N \big(\pi_i(t)-\rho_i(t)\big) d\widetilde{M}_i(t)\bigg)
		= 1 + \int_0^{\cdot} X^{\rho}_{\pi}(t)\sum_{i=1}^N \big(\pi_i(t)-\rho_i(t)\big) d\widetilde{M}_i(t),
	\end{align*}
	where the second-last equality is from \eqref{Eq : tilde R rho 0}. Thus, the relative wealth process $X^{\rho}_{\pi}$ is a stochastic integral with respect to the local martingale vector $\widetilde{M}$, defined in \eqref{def : tilde A and M}.
\end{rem}

\bigskip

\begin{rem} [Equivalent condition of statement \textit{(3)}]	\label{rem : equiv statement (3)}
	The statement \textit{(3)} of Proposition~\ref{prop : supermartingale numeraire among the top n} can be reformulated using the `rate processes' $\widetilde{\alpha}$, $\widetilde{c}$ of \eqref{Eq : tilde A and tilde C}, namely,
	\begin{equation*}
		\int_0^{\cdot} \widetilde{\alpha}_i(t) dO(t)
		= \widetilde{A}_i 
		= C_{\widetilde{i}\rho}
		= \int_0^{\cdot} \sum_{j=1}^N \rho_j(t) d\widetilde{C}_{i, j}(t)
		= \int_0^{\cdot} \sum_{j=1}^N \rho_j(t) \widetilde{c}_{i, j}(t) dO(t),
	\end{equation*}
	with the help of \eqref{Eq : C tilde i rho}. Thus, we have the following statement $\widetilde{(3)}$ of Proposition~\ref{prop : supermartingale numeraire among the top n}, namely
	\begin{flalign}		\label{Eq : equiv statement (3)}
		\widetilde{(3)}' \qquad \widetilde{\alpha} = \widetilde{c}\rho, \qquad (\mathbb{P}\otimes O) -\text{a.e.} &&
	\end{flalign}
	in matrix notation. In the same manner, the statement $(3)$ of Proposition~\ref{prop : general supermartingale numeraire} also has the equivalent formulation:
	\begin{flalign}		\label{Eq : general equiv statement (3)}
		(3)' \qquad \alpha = c\rho, \qquad (\mathbb{P}\otimes O) -\text{a.e.} &&
	\end{flalign}
\end{rem}

\bigskip

\subsection{Structural conditions}

In this subsection, we present another equivalent requirement for statement \textit{(3)} of Proposition~\ref{prop : supermartingale numeraire among the top n}, in the form of what we call `structural conditions'. First, we note that $\widetilde{c}$ of \eqref{Eq : tilde alpha, c with D} is a singular symmetric matrix, thus not invertible, from the fact that $D$ is singular. Before proceeding to the next result, we need the following definition of `pseudo-inverse' for the matrix-valued process $\widetilde{c}$ of \eqref{def : tilde c}:
\begin{equation}		\label{def : tilde c dagger}
	\widetilde{c}\,^{\dagger} := \lim_{m \rightarrow \infty} \Big( \big( \widetilde{c}+ \frac{1}{m}I \big)^{-2}~ \widetilde{c} \Big),
\end{equation}
where $I$ is the identity operator on $\mathbb{R}^N$. This process $\widetilde{c}\,^{\dagger}$ will play the role of `pseudo-inverse' for $\widetilde{c}$, because it is easily checked that
\begin{enumerate}[(a)]
	\item $\widetilde{c}\,^{\dagger}$ is the inverse of $\widetilde{c}$ when restricted on $\textbf{range}(\widetilde{c})$,
	\item $\widetilde{c}\,\widetilde{c}\,^{\dagger}$ coincides with the projection operator of $\mathbb{R}^N$ onto $\textbf{range}(\widetilde{c})$,
	\item $\widetilde{c}\,^{\dagger}$ is predictable, since matrix inversion is a continuous operation when restricted to strictly positive-definite matrices.
\end{enumerate}
We are now ready to present the structural conditions.

\smallskip

\begin{prop}		\label{prop : structural condition}
	The existence of the supermartingale num\'eraire portfolio among the top $n$ stocks, is equivalent to the conjunction of the conditions:
	\begin{align}
		(i) ~~& ~\widetilde{\alpha} \in \textbf{range}(\widetilde{c}), \qquad \qquad \qquad \quad (\mathbb{P} \otimes O)-a.e.,				\label{con : structural1}
		\\
		(ii) ~~&\int_0^T \widetilde{\alpha}'(t)\widetilde{c}\,^{\dagger}(t)\widetilde{\alpha}(t) dO(t) < \infty, \quad \text{for any} \quad T \geq 0.		\label{con : structural2}
	\end{align}
\end{prop}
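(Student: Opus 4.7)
The plan is to leverage the equivalent formulation $\widetilde{(3)}'$ in Remark~\ref{rem : equiv statement (3)}, namely $\widetilde{\alpha} = \widetilde{c}\rho$ $(\mathbb{P}\otimes O)$-a.e., combined with the integrability characterization of Lemma~\ref{lem : integrability condition}. The existence of a supermartingale num\'eraire portfolio among the top $n$ stocks is thus equivalent to the existence of some $\rho \in \pazocal{I}(R)\cap \pazocal{T}(n)$ solving the linear equation $\widetilde{c}\rho = \widetilde{\alpha}$.

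For the forward direction, assume such a $\rho$ exists. Condition (i) is immediate since $\widetilde{\alpha} = \widetilde{c}\rho \in \textbf{range}(\widetilde{c})$. For (ii), I will apply property (b) of the pseudo-inverse, namely $\widetilde{c}\,\widetilde{c}\,^{\dagger}\widetilde{c} = \widetilde{c}$, to compute
\begin{equation*}
\widetilde{\alpha}'\widetilde{c}\,^{\dagger}\widetilde{\alpha} = \rho'\widetilde{c}\,\widetilde{c}\,^{\dagger}\widetilde{c}\rho = \rho'\widetilde{c}\rho,
\end{equation*}
and then invoke Lemma~\ref{lem : integrability condition} applied to $\rho \in \pazocal{I}(R)\cap \pazocal{T}(n)$, which guarantees $\int_0^T \rho'\widetilde{c}\rho\, dO < \infty$.

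For the reverse direction, I will define the candidate portfolio $\rho := \widetilde{c}\,^{\dagger}\widetilde{\alpha}$, which is predictable by property (c). Three things must be verified: (a) $\rho \in \pazocal{T}(n)$; (b) $\widetilde{c}\rho = \widetilde{\alpha}$; and (c) $\rho \in \pazocal{I}(R)$. For (a), I will exploit the block structure of $\widetilde{c} = DcD$: in a basis adapted to the diagonal projector $D(t)$, the matrix $\widetilde{c}$ has zero rows and columns at all indices $i$ with $u_i(t) > n$, so the regularized inverses $(\widetilde{c} + I/m)^{-2}\widetilde{c}$ have the same zero pattern, and hence so does the limit $\widetilde{c}\,^{\dagger}$; this yields $D\widetilde{c}\,^{\dagger} = \widetilde{c}\,^{\dagger}$, and therefore $D\rho = \rho$. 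For (b), the identity $\widetilde{c}\,\widetilde{c}\,^{\dagger}\widetilde{\alpha} = \widetilde{\alpha}$ follows because $\widetilde{c}\,\widetilde{c}\,^{\dagger}$ is the orthogonal projection onto $\textbf{range}(\widetilde{c})$ (property (b)), combined with condition (i). For (c), I will compute $\rho'\widetilde{\alpha} = \widetilde{\alpha}'\widetilde{c}\,^{\dagger}\widetilde{\alpha}$ and $\rho'\widetilde{c}\rho = \widetilde{\alpha}'\widetilde{c}\,^{\dagger}\widetilde{c}\,\widetilde{c}\,^{\dagger}\widetilde{\alpha} = \widetilde{\alpha}'\widetilde{c}\,^{\dagger}\widetilde{\alpha}$ (again using $\widetilde{c}\,\widetilde{c}\,^{\dagger}\widetilde{c} = \widetilde{c}$), so by condition (ii) and Lemma~\ref{lem : integrability condition}, $\rho \in \pazocal{I}(R)$. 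Finally, $\widetilde{c}\rho = \widetilde{\alpha}$ combined with Proposition~\ref{prop : supermartingale numeraire among the top n} (statement $\widetilde{(3)}'$) identifies $\rho$ as the supermartingale num\'eraire portfolio among the top $n$ stocks.

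The main obstacle will be the verification that the specific choice $\rho = \widetilde{c}\,^{\dagger}\widetilde{\alpha}$ lies in $\pazocal{T}(n)$, since this requires understanding how the regularization in \eqref{def : tilde c dagger} interacts with the singular projector $D$. Any other solution of $\widetilde{c}\rho = \widetilde{\alpha}$ in $\pazocal{T}(n)$ would differ from this one by a null portfolio in the sense of Lemma~\ref{lem : null portfolio}, so the generated wealth process is unambiguous.
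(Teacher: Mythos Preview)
Your proof is correct and follows essentially the same route as the paper: both directions hinge on Remark~\ref{rem : equiv statement (3)} (the equation $\widetilde{\alpha} = \widetilde{c}\rho$), the pseudo-inverse properties of $\widetilde{c}\,^{\dagger}$, and the integrability characterization of Lemma~\ref{lem : integrability condition}. The only cosmetic difference is in the reverse direction: the paper defines the candidate as $\rho := D\widetilde{c}\,^{\dagger}\widetilde{\alpha}$, so that $\rho \in \pazocal{T}(n)$ is automatic from $D^2 = D$, and then uses $\widetilde{c}D = \widetilde{c}$ to recover $\widetilde{c}\rho = \widetilde{\alpha}$; you instead set $\rho := \widetilde{c}\,^{\dagger}\widetilde{\alpha}$ and argue directly from the block structure of $\widetilde{c} = DcD$ that $D\widetilde{c}\,^{\dagger} = \widetilde{c}\,^{\dagger}$, which is a slightly sharper (and correct) observation that makes the paper's multiplication by $D$ redundant.
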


\smallskip

\begin{proof}
	First, we assume that the supermartingale num\'eraire portfolio $\rho$ among the top $n$ stocks exists; then, from statement \textit{(3')} of \eqref{Eq : equiv statement (3)}, the identity $\widetilde{\alpha} = \widetilde{c}\rho$ holds. The condition \textit{(i)} follows immediately, and we obtain $\widetilde{c}\,\widetilde{c}\,^{\dagger}\widetilde{\alpha} = \widetilde{\alpha}$ from the property (b) above. This also implies that the set $\{\widetilde{\alpha} \in \textbf{range}(\widetilde{c})\}$ is predictable. We set the predictable process
	\begin{equation}		\label{def : superMG candidate}
		\nu := \widetilde{c}\,^{\dagger}\widetilde{\alpha},
	\end{equation}
	which is $\textbf{range}(\widetilde{c})$-valued in the $(\mathbb{P} \otimes O)$-a.e. sense, and satisfies $\widetilde{\alpha} = \widetilde{c}\,\nu$. Then, every supermartingale num\'eraire portfolio among the top $n$ stocks should be of the form
	\begin{equation}			\label{Eq : superMG candidate}
		\rho = \nu + \eta = \widetilde{c}\,^{\dagger}\widetilde{\alpha} + \eta,
	\end{equation}
	for a suitable predictable process $\eta$ which is in $\textbf{ker}(\widetilde{c})$, the kernel of $\widetilde{c}$, $(\mathbb{P}\otimes O)$-a.e. We have $\widetilde{c}\eta = 0$ and $\eta'\widetilde{\alpha} = 0$, thus $\eta$ is a null portfolio in the sense of Lemma~\ref{lem : null portfolio}.
	
	\smallskip
	
	On the other hand, the assumption that the supermartingale num\'eraire portfolio among the top $n$ stocks exists, implies that some $N$-dimensional process of the form $\rho = \widetilde{c}\,^{\dagger}\widetilde{\alpha}+\eta$ of the form \eqref{Eq : superMG candidate} should be a portfolio, i.e., $R$-integrable. The integrability condition \eqref{con : integrable w.r.t tilde R} in Lemma~\ref{lem : integrability condition} with the observation
	\begin{equation*}
		\rho'\widetilde{c}\rho = \rho'\widetilde{c}\,(\widetilde{c}\,^{\dagger}\widetilde{\alpha}+\eta) = \rho'\widetilde{\alpha} = \widetilde{\alpha}'\rho = \widetilde{\alpha}'\widetilde{c}\,^{\dagger}\widetilde{\alpha},
	\end{equation*}
	gives the condition \text{(ii)}.
	
	\bigskip
	
	We next assume the conjunction of conditions \textit{(i), (ii)} and find the supermartingale num\'eraire portfolio among the top $n$ stocks. We define the two predictable processes
	\begin{equation}		\label{def : nu and rho}
		\nu := \widetilde{c}\,^{\dagger}\widetilde{\alpha}, \qquad \text{and} \qquad \rho := D\nu = D\widetilde{c}\,^{\dagger}\widetilde{\alpha},
	\end{equation}
	and claim that $\rho$ is the supermartingale num\'eraire portfolio among the top $n$ stocks.
	Thanks to the condition \textit{(i)}, we obtain the identity $\widetilde{c}\nu = \widetilde{c}\,\widetilde{c}\,^{\dagger}\widetilde{\alpha} = \widetilde{\alpha}$, $(\mathbb{P} \otimes O)$-a.e. Then, we observe the series of identities
	\begin{equation*}
		\nu'\widetilde{c}\nu
		= \nu'\widetilde{\alpha}
		= \widetilde{\alpha}'\nu
		= \widetilde{\alpha}'\widetilde{c}\,^{\dagger}\widetilde{\alpha}, \qquad (\mathbb{P} \otimes O)-\text{a.e.},
	\end{equation*}
	as well as
	\begin{equation}		\label{Eq : rho in terms of alpha}
		\rho'\widetilde{c}\rho
		= \nu'D\widetilde{c}D\nu
		= \nu'\widetilde{c}\nu
		= \widetilde{\alpha}'\widetilde{c}\,^{\dagger}\widetilde{\alpha}, \qquad
		\rho'\widetilde{\alpha}
		= \nu'D\widetilde{\alpha}
		= \nu'\widetilde{\alpha}
		= \widetilde{\alpha}'\widetilde{c}\,^{\dagger}\widetilde{\alpha}, \qquad (\mathbb{P} \otimes O)-\text{a.e.}
	\end{equation}
	Here, we used the identities $D\widetilde{\alpha} = \widetilde{\alpha}$, and $D\widetilde{c}D = \widetilde{c}$ which can be obtained from \eqref{Eq : tilde alpha, c with D}. Combining equations of \eqref{Eq : rho in terms of alpha} with the condition \textit{(ii)} yields the integrability condition \eqref{con : integrable w.r.t tilde R} for $\rho \equiv \pi$ in Lemma~\ref{lem : integrability condition}, i.e., $\rho \in \pazocal{I}(R)$.  Also, from the construction \eqref{def : nu and rho}, we have $D\rho = DD\nu = D\nu = \rho$, thus $\rho \in \pazocal{T}(n)$. Therefore, we have shown that $\rho$ is a portfolio among the top $n$ stocks, i.e., $\rho \in \pazocal{I}(R) \cap \pazocal{T}(n)$.
	
	\smallskip
	
	Furthermore, we deduce
	\begin{equation}			\label{Eq : rho superMG property}
		\widetilde{c}\rho = \widetilde{c}D\nu = \widetilde{c}\nu = \widetilde{c}\widetilde{c}\,^{\dagger}\widetilde{\alpha} = \widetilde{\alpha}, \qquad (\mathbb{P} \otimes O)-\text{a.e.,}
	\end{equation}
	where the second equation uses the identity $\widetilde{c}D = \widetilde{c}$, a consequence of \eqref{Eq : tilde alpha, c with D} and of the fact that $D$ is idempotent. Thus, we have obtained the condition \eqref{Eq : equiv statement (3)}, which is equivalent to statement \textit{(3)} of Proposition~\ref{prop : supermartingale numeraire among the top n}, and $\rho$ is indeed the supermartingale num\'eraire portfolio among the top $n$ stocks.
\end{proof}

\bigskip

The conjunction of the two conditions in Proposition~\ref{prop : structural condition} can be formulated as one equivalent condition, as follows. We first recall the `growth rate' $\gamma_{\pi}$ of the portfolio $\pi \in \pazocal{I}(R) \cap \pazocal{T}(n)$ among the top $n$ stocks in \eqref{Eq : alpha pi, gamma pi}. We denote $\mathbb{R}^N \cap \pazocal{T}(n)$ the collection of elements in $\mathbb{R}^N$ such that at most $n$ components are nonzero; then $\pi(t)$ takes values in $\mathbb{R}^N \cap \pazocal{T}(n)$ for each $t \geq 0$, by the property \eqref{con : portfolio among n}. Let us define the $[0, \infty]$-valued process
\begin{equation}			\label{def : maximal growth rate}
	\widetilde{g} := \sup_{p \in \mathbb{R}^N} \Big( p' \widetilde{\alpha} - \frac{1}{2}p'\widetilde{c}p \Big)
	=\sup_{p \in \mathbb{R}^N \cap \pazocal{T}(n)} \Big( p' \widetilde{\alpha} - \frac{1}{2}p'\widetilde{c}p \Big).
\end{equation}
The last equality follows because of the identities
\begin{equation*}
	p'\widetilde{\alpha}-\frac{1}{2}p'\widetilde{c}p
	= p'D\widetilde{\alpha}-\frac{1}{2}p'D\widetilde{c}Dp 
	= \widetilde{p}'\widetilde{\alpha}-\frac{1}{2}\widetilde{p}'\widetilde{c}\widetilde{p},
\end{equation*}
valid for any $p \in \mathbb{R}^N$, where $\widetilde{p} := Dp \in \mathbb{R}^N \cap \pazocal{T}(n)$, by recalling the properties $\widetilde{\alpha} = D\widetilde{\alpha}$ and $\widetilde{c} = D\widetilde{c}D$ which can be deduced from \eqref{Eq : tilde alpha, c with D}. This process $\widetilde{g}$ can be interpreted as the \textit{maximal growth rate} achievable for all portfolios among the top $n$ stocks. Note that $\widetilde{g}$ is predictable, because the supremum can be restricted over a countable, dense subset of $\mathbb{R}^N$. We then easily rewrite the process $\widetilde{g}$ in the form
\begin{equation}			\label{Eq : maximal growth rate}
	\widetilde{g} = \frac{1}{2}\big(\widetilde{\alpha}'\widetilde{c}\,^{\dagger}\widetilde{\alpha}\big)\bm{1}_{\{\widetilde{\alpha} \in \textbf{range}(\widetilde{c})\}} + \infty \bm{1}_{\{\widetilde{\alpha} \notin \textbf{range}(\widetilde{c})\}},
\end{equation}
and the supremum of \eqref{def : maximal growth rate} is attained if and only if $\widetilde{g} < \infty$, at $p \equiv \rho := D\widetilde{c}\,^{\dagger}\widetilde{\alpha}$ as in \eqref{def : nu and rho} and \eqref{Eq : rho in terms of alpha}. Then, the conjunction of conditions \textit{(i) + (ii)} in Proposition~\ref{prop : structural condition} becomes simply
\begin{equation}		\label{con : structural condition}
	\widetilde{G}(T) < \infty, \qquad \text{for all} \quad T \geq 0,
\end{equation}
where $\widetilde{G}$ is an adapted nondecreasing process
\begin{equation}			\label{def : aggregate maximal growth}
	\widetilde{G}:= \int_0^{\cdot} \widetilde{g}(t) dO(t).
\end{equation}
We call this $\widetilde{G}$ the \textit{aggregate maximal growth from portfolios among the top $n$ stocks}; and say that the market consisting of the top $n$ stocks has \textit{locally finite growth}, if the process $\widetilde{G}$ satisfies the condition \eqref{con : structural condition}. We formalize this argument into the next proposition.

\smallskip

\begin{prop}	\label{prop : locally finite growth}
	The requirement of \eqref{con : structural condition} of locally finite growth among the top $n$ stocks, is equivalent to the conjunction of the two conditions \textit{(i) + (ii)} of Proposition~\ref{prop : structural condition}, thus sufficient and necessary for a supermartinagle num\'eraire portfolio among the top $n$ stocks to exist. In this case, we have
	\begin{equation*}
		\widetilde{G}= \Gamma_{\rho},
	\end{equation*}
	where $\rho$ is a supermartingale num\'eraire portfolio among the top $n$ stocks.
\end{prop}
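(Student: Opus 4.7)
The plan is to separate the statement into two claims and verify each one directly from the preceding formulas: (a) the equivalence of the locally finite growth condition \eqref{con : structural condition} with the conjunction of \textit{(i)} and \textit{(ii)} of Proposition~\ref{prop : structural condition}, and (b) the identity $\widetilde{G} = \Gamma_{\rho}$ on the existence event. Once (a) is in hand, Proposition~\ref{prop : structural condition} immediately delivers the claimed sufficiency and necessity for a supermartingale num\'eraire portfolio among the top $n$ stocks.

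For (a), I would start from the explicit decomposition \eqref{Eq : maximal growth rate} of the pointwise maximal growth rate $\widetilde{g}$. If condition \textit{(i)} fails, the set $\{\widetilde{\alpha} \notin \textbf{range}(\widetilde{c})\}$ has positive $(\mathbb{P}\otimes O)$-measure on some interval $[0,T]$; since $\widetilde{g} = +\infty$ on that set by \eqref{Eq : maximal growth rate}, the integrated process $\widetilde{G}(T) = \int_0^T \widetilde{g}(t)\,dO(t)$ is infinite, contradicting \eqref{con : structural condition}. Hence \textit{(i)} holds, and then \eqref{Eq : maximal growth rate} reduces to $\widetilde{g} = \tfrac{1}{2}\widetilde{\alpha}'\widetilde{c}\,^{\dagger}\widetilde{\alpha}$, so $\widetilde{G}(T)<\infty$ is precisely condition \textit{(ii)}. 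Conversely, assuming both \textit{(i)} and \textit{(ii)}, \eqref{Eq : maximal growth rate} again collapses to the finite expression, and its integral against $dO$ is finite on every $[0,T]$ by \textit{(ii)}.

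For (b), assume the supermartingale num\'eraire portfolio among the top $n$ stocks exists; then Proposition~\ref{prop : structural condition} and its proof give the explicit form $\rho = D\widetilde{c}\,^{\dagger}\widetilde{\alpha}$ from \eqref{def : nu and rho}, together with the identities \eqref{Eq : rho in terms of alpha}, namely $\rho'\widetilde{\alpha} = \widetilde{\alpha}'\widetilde{c}\,^{\dagger}\widetilde{\alpha}$ and $\rho'\widetilde{c}\rho = \widetilde{\alpha}'\widetilde{c}\,^{\dagger}\widetilde{\alpha}$. Substituting into the alternative expression for the growth rate in \eqref{Eq : alpha pi, gamma pi} yields
\begin{equation*}
\gamma_{\rho} \;=\; \rho'\widetilde{\alpha} - \tfrac{1}{2}\rho'\widetilde{c}\rho \;=\; \tfrac{1}{2}\widetilde{\alpha}'\widetilde{c}\,^{\dagger}\widetilde{\alpha} \;=\; \widetilde{g}, \qquad (\mathbb{P}\otimes O)\text{-a.e.},
\end{equation*}
where the last equality comes from \eqref{Eq : maximal growth rate} on the event $\{\widetilde{\alpha}\in\textbf{range}(\widetilde{c})\}$, which holds a.e. under \textit{(i)}. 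Integrating with respect to $dO$ and using the second identity in \eqref{def : A pi, Gamma pi} gives $\Gamma_{\rho} = \int_0^{\cdot}\gamma_{\rho}(t)\,dO(t) = \int_0^{\cdot}\widetilde{g}(t)\,dO(t) = \widetilde{G}$, as desired.

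There is no serious obstacle in this argument since almost everything has been prepared by Proposition~\ref{prop : structural condition} and the formula \eqref{Eq : maximal growth rate}. The only point requiring a little care is measurability: the set $\{\widetilde{\alpha}\notin\textbf{range}(\widetilde{c})\}$ must be predictable in order for the decomposition \eqref{Eq : maximal growth rate} to make sense as an integrand, but this is automatic since it equals $\{\widetilde{c}\,\widetilde{c}\,^{\dagger}\widetilde{\alpha} \neq \widetilde{\alpha}\}$ and $\widetilde{c}\,^{\dagger}$ is predictable by property~(c) in the definition \eqref{def : tilde c dagger}.
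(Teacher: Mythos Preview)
Your proposal is correct and follows essentially the same approach as the paper, which in fact does not give a separate proof of this proposition but instead states ``We formalize this argument into the next proposition'' after deriving \eqref{Eq : maximal growth rate} and observing that the supremum is attained at $\rho = D\widetilde{c}\,^{\dagger}\widetilde{\alpha}$. Your part~(b) is precisely the computation the paper later spells out in the proof of Lemma~\ref{lem : locally finite growth} to justify $\widetilde{G} = \Gamma_{\rho}$, and your part~(a) makes explicit the two directions implicit in the paper's discussion between \eqref{Eq : maximal growth rate} and \eqref{con : structural condition}.
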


\smallskip

We present the following results which will be used later.

\smallskip

\begin{lem}		\label{lem : locally finite growth}
	Suppose the market has locally finite growth among the top $n$ stocks, i.e., that \eqref{con : structural condition} holds, and let $\rho$ be the supermartingale num\'eraire portfolio among the top $n$ stocks. Recalling \eqref{Eq : alpha pi, gamma pi}, \eqref{def : A pi, Gamma pi}, \eqref{def : C pi rho}, and \eqref{Eq : M pi in tilde}, we have
	\begin{equation*}
		\widetilde{G} = \Gamma_{\rho} = \frac{1}{2}C_{\rho\rho},
	\end{equation*}
	as well as the representation
	\begin{equation}		\label{Eq : reciprocal of supermartingale numeraire portfolio}
		\frac{1}{X_{\rho}} = \pazocal{E}(-M_{\rho}).
	\end{equation}
\end{lem}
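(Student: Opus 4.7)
The plan is to exploit Proposition~\ref{prop : locally finite growth}, which already identifies $\widetilde{G}$ with $\Gamma_{\rho}$, and reduce everything to the single identity $A_{\rho} = C_{\rho\rho}$. Once this identity is in hand, the chain $\widetilde{G} = \Gamma_{\rho} = \tfrac{1}{2} C_{\rho\rho}$ follows immediately from the definition $\Gamma_{\rho} = A_{\rho} - \tfrac{1}{2} C_{\rho\rho}$ in \eqref{def : Gamma pi}, and the reciprocal representation \eqref{Eq : reciprocal of supermartingale numeraire portfolio} falls out of the formula \eqref{def : stochastic exponential} for the stochastic exponential.

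To prove $A_{\rho} = C_{\rho\rho}$, I would use that $\rho \in \pazocal{T}(n)$, i.e.\ $D\rho = \rho$ and $\rho' = \rho' D$, to collapse every `plain' quantity onto its tilded counterpart, and then invoke the structural identity $\widetilde{(3)}'$ of Remark~\ref{rem : equiv statement (3)}, namely $\widetilde{\alpha} = \widetilde{c}\rho$ $(\mathbb{P} \otimes O)$-a.e. Concretely, starting from $A_{\rho} = \int_0^{\cdot} \rho'(t)\, dA(t)$, insert $D(t)$ between $\rho'(t)$ and $dA(t)$, use \eqref{Eq : tilde R with D} and \eqref{Eq : tilde A and tilde C} to rewrite this as $\int_0^{\cdot} \rho'(t)\widetilde{\alpha}(t)\, dO(t)$, then substitute $\widetilde{\alpha} = \widetilde{c}\rho$, and collapse $\rho'\widetilde{c}\rho = \rho' D c D \rho = \rho' c \rho = c_{\rho\rho}$ via \eqref{Eq : tilde alpha, c with D}. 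The result is $A_{\rho} = \int_0^{\cdot} c_{\rho\rho}(t)\, dO(t) = C_{\rho\rho}$ by \eqref{def : C pi rho}.

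For the reciprocal identity, I would combine the definition \eqref{def : stochastic exponential} of the stochastic exponential with $R_{\rho} = A_{\rho} + M_{\rho}$ and the observation $[R_{\rho}, R_{\rho}] = [M_{\rho}, M_{\rho}] = C_{\rho\rho}$ (since $A_{\rho}$ has finite variation). This gives
\begin{equation*}
X_{\rho} \,=\, \pazocal{E}(R_{\rho}) \,=\, \exp\!\Big(A_{\rho} + M_{\rho} - \tfrac{1}{2} C_{\rho\rho}\Big) \,=\, \exp\!\Big(M_{\rho} + \tfrac{1}{2} C_{\rho\rho}\Big),
\end{equation*}
after inserting $A_{\rho} = C_{\rho\rho}$. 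Taking reciprocals,
\begin{equation*}
\frac{1}{X_{\rho}} \,=\, \exp\!\Big(-M_{\rho} - \tfrac{1}{2}[M_{\rho}, M_{\rho}]\Big) \,=\, \pazocal{E}(-M_{\rho}).
\end{equation*}

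There is no real obstacle here; the whole argument is essentially bookkeeping around the structural identity $\widetilde{\alpha} = \widetilde{c}\rho$ that characterizes $\rho$ via Proposition~\ref{prop : supermartingale numeraire among the top n}. The only point requiring attention is keeping the projections through $D$ in the right places when passing between `plain' integrals (against $dA$, $dC$) and their `tilded' counterparts before invoking $\widetilde{\alpha} = \widetilde{c}\rho$, which is precisely what the membership $\rho \in \pazocal{T}(n)$ makes possible.
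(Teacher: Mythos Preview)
Your proof is correct and follows essentially the same approach as the paper: both hinge on the identity $A_{\rho} = C_{\rho\rho}$, derived from the structural relation $\widetilde{\alpha} = \widetilde{c}\rho$ together with $\rho \in \pazocal{T}(n)$. The only minor difference is in the reciprocal part: the paper obtains $1/X_{\rho} = \pazocal{E}(-M_{\rho})$ by applying Proposition~\ref{prop : relative wealth} with $\pi \equiv 0$ (so that $1/X_{\rho} = X^{\rho}_{0} = \pazocal{E}(\widetilde{R}^{\rho}_0) = \pazocal{E}(C_{\rho\rho} - R_{\rho})$), whereas you unpack the explicit formula $X_{\rho} = \exp(R_{\rho} - \tfrac{1}{2}C_{\rho\rho})$ and invert directly; your route is slightly more elementary and avoids invoking the auxiliary-market machinery.
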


\begin{proof}
	As with \eqref{def : nu and rho} in the proof of Proposition~\ref{prop : structural condition}, the supermartingale num\'eraire portfolio $\rho$ among the top $n$ stocks is of the form $D\widetilde{c}\,^{\dagger}\widetilde{\alpha}$. With \eqref{Eq : rho in terms of alpha}, the claim $\widetilde{G} = \Gamma_{\rho}$ is easily obtained. Furthermore, again by \eqref{Eq : rho in terms of alpha} with \eqref{Eq : alpha pi, gamma pi}, we have
	\begin{equation}		\label{Eq : gamma of superMG portfolio}
		\gamma_{\rho} = \rho'\widetilde{\alpha} - \frac{1}{2}\rho'\widetilde{c}\rho
		= \frac{1}{2}\rho'\widetilde{c}\rho
		= \frac{1}{2}c_{\rho\rho}
		= \widetilde{g}
	\end{equation}
	thus $\Gamma_{\rho} = \frac{1}{2}C_{\rho\rho}$, as well as $A_{\rho} = C_{\rho\rho}$. We then write \eqref{def : relative cumulative return}, \eqref{Eq : R rho pi} with $\pi \equiv (0, \cdots, 0) \in \pazocal{I}(R) \cap \pazocal{T}(n)$:
	\begin{equation*}
		\frac{1}{X_{\rho}} = X^{\rho}_{\pi} = \pazocal{E}(\widetilde{R}^{\rho}_0)
		= \pazocal{E}(C_{\rho\rho}-R_{\rho}) = \pazocal{E}(C_{\rho\rho}-A_{\rho}-M_{\rho})
		= \pazocal{E}(-M_{\rho}).
	\end{equation*}
\end{proof}

\smallskip

\begin{lem}		\label{lem : X / X rho}
	Let $\rho$ be the supermartingale num\'eraire portfolio among the top $n$ stocks. For any investment strategy $\vartheta \in \pazocal{I}(S) \cap \pazocal{T}(n)$ among the top $n$ stocks, and for any initial capital $x \geq 0$, let us recall the wealth process $X \equiv X(\cdot;x, \vartheta)$ generated by $\vartheta$ and $x$ in the manner of \eqref{def : investment}. Then there exists a process $\eta = (\eta_1, \cdots, \eta_N) \in \pazocal{I}(\widetilde{M}) \cap \pazocal{T}(n)$, such that
	\begin{equation}		\label{Eq : relative wealth from strategy}
		\frac{X}{X_{\rho}} = x + \int_0^{\cdot} \sum_{i=1}^N \eta_i(t) d\widetilde{M}_i(t).
	\end{equation}
	Conversely, for any $x \geq 0$ and $\eta \in \pazocal{I}(M) \cap \pazocal{T}(n)$, there exists a process $\vartheta \in \pazocal{I}(S) \cap \pazocal{T}(n)$ such that \eqref{Eq : relative wealth from strategy} holds.
\end{lem}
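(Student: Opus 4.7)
\smallskip

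\noindent
\textbf{Proof proposal.} The plan is to compute $X/X_{\rho}$ directly via the It\^o product rule, using the representation $1/X_{\rho} = \pazocal{E}(-M_{\rho})$ from Lemma~\ref{lem : locally finite growth}, and then to exploit the num\'eraire identity $\widetilde{\alpha} = \widetilde{c}\rho$ of \eqref{Eq : equiv statement (3)} to show that the drift terms cancel because $\vartheta \in \pazocal{T}(n)$. For the converse, I would invert the explicit formula obtained for $\eta$ in terms of $\vartheta$ and verify by a symmetric product-rule computation.

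For the forward direction, since $dS_i = S_i\, dR_i = S_i(dA_i + dM_i)$ and $d(1/X_{\rho}) = -(1/X_{\rho})\, dM_{\rho}$ from \eqref{Eq : exponential SDE} and \eqref{Eq : reciprocal of supermartingale numeraire portfolio}, the product rule yields
\begin{equation*}
d\!\left(\frac{X}{X_{\rho}}\right) = \frac{1}{X_{\rho}}\sum_{i=1}^N \vartheta_i S_i (dA_i + dM_i) - \frac{X}{X_{\rho}}\, dM_{\rho} - \frac{1}{X_{\rho}}\sum_{i=1}^N \vartheta_i S_i\, d[M_i,M_{\rho}].
\end{equation*}
Since $\vartheta \in \pazocal{T}(n)$, the indicator $\bm{1}_{\{u_i\leq n\}}$ may be inserted in every sum over $i$, so $\vartheta_i\, dA_i = \vartheta_i\, d\widetilde{A}_i$, $\vartheta_i\, dM_i = \vartheta_i\, d\widetilde{M}_i$, and $\vartheta_i\, d[M_i,M_{\rho}] = \vartheta_i\, dC\,_{\widetilde{i}\rho}$. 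Statement $\widetilde{(3)}$ of Proposition~\ref{prop : supermartingale numeraire among the top n} gives $\widetilde{A}_i = C\,_{\widetilde{i}\rho}$, and thus the finite-variation terms cancel, leaving
\begin{equation*}
d\!\left(\frac{X}{X_{\rho}}\right) = \sum_{i=1}^N \frac{\vartheta_i S_i - X\rho_i}{X_{\rho}}\, d\widetilde{M}_i(t),
\end{equation*}
where I have used that $dM_{\rho} = \sum_j \rho_j\, d\widetilde{M}_j$ because $\rho \in \pazocal{T}(n)$. Setting $\eta_i := (\vartheta_i S_i - X\rho_i)/X_{\rho}$ gives a predictable process in $\pazocal{T}(n)$, since both $\vartheta_i$ and $\rho_i$ vanish off $\{u_i\leq n\}$, and \eqref{Eq : relative wealth from strategy} follows. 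Integrability of $\eta$ with respect to $\widetilde{M}$ is automatic, since the left-hand side is a well-defined semimartingale.

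For the converse, given $\eta \in \pazocal{T}(n)$ integrable with respect to $\widetilde{M}$ (equivalently, with respect to $M$, since $\eta = D\eta$), define
\begin{equation*}
X := X_{\rho}\!\left(x + \int_0^{\cdot}\sum_{i=1}^N \eta_i(t)\, d\widetilde{M}_i(t)\right) \qquad \text{and} \qquad \vartheta_i := \frac{X_{\rho}\eta_i + X\rho_i}{S_i},
\end{equation*}
obtained by solving the linear equation $\eta_i X_{\rho} = \vartheta_i S_i - X\rho_i$ for $\vartheta_i$. Clearly $\vartheta \in \pazocal{T}(n)$. To verify that $X = X(\cdot; x, \vartheta)$, I apply the product rule to the defining equation for $X$: using $dX_{\rho} = X_{\rho}\, dR_{\rho}$ and $d[X_{\rho},\int \eta' d\widetilde{M}] = X_{\rho}\sum_i \eta_i\sum_j \rho_j\, d[M_j,\widetilde{M}_i]$, the same num\'eraire identity $\widetilde{\alpha} = \widetilde{c}\rho$ rearranges the outcome into $dX = \sum_i \vartheta_i\, dS_i$; the $R$-integrability of $\vartheta$ (and hence $\vartheta \in \pazocal{I}(S)$) follows from this identity together with the integrability of $\eta$ with respect to $\widetilde{M}$ and of $\rho$ with respect to $R$.

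The main obstacle is bookkeeping: one has to insert the idempotent projection $D$ at the right moments so that the identity $\widetilde{\alpha} = \widetilde{c}\rho$ can be applied to drift terms indexed by all $i \in \{1,\dots,N\}$. Once this is done consistently, both directions reduce to the same product-rule computation read in opposite directions.
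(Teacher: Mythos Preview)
Your proof is correct and follows essentially the same approach as the paper: both apply the It\^o product rule to $X \cdot (1/X_{\rho})$, use the representation $1/X_{\rho} = \pazocal{E}(-M_{\rho})$ from Lemma~\ref{lem : locally finite growth}, and invoke the num\'eraire condition $\widetilde{A}_i = C_{\widetilde{i}\rho}$ to cancel the finite-variation terms, arriving at the identical formula $\eta_i = (\vartheta_i S_i - X\rho_i)/X_{\rho}$. The only cosmetic difference is that the paper passes to the tilde quantities $d\widetilde{R}_i = d\widetilde{A}_i + d\widetilde{M}_i$ at the outset, whereas you insert the indicator $\bm{1}_{\{u_i \leq n\}}$ after writing out the product rule; for the converse the paper simply says ``reverse the above procedure,'' which is exactly what your explicit inversion does.
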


\smallskip

\begin{proof}
	From \eqref{Eq : reciprocal of supermartingale numeraire portfolio} and \eqref{Eq : M pi in tilde}, we have $d\big(1/X_{\rho}(t)\big) = \big(1/X_{\rho}(t)\big)\sum_{i=1}^N \big(-\rho_i(t)\big)d\widetilde{M}_i(t)$, as well as the dynamics
	\begin{equation*}
		dX(t) = \sum_{i=1}^N \vartheta_i(t) dS_i(t) = \sum_{i=1}^N \vartheta_i(t) S_i(t) d\widetilde{R}_i(t) = \sum_{i=1}^N \vartheta_i(t) S_i(t) \big( d\widetilde{A}_i(t) + d\widetilde{M}_i(t) \big),
	\end{equation*}
	from \eqref{Eq : decomposition of tilde R}. Combining two equations via It\^o's formula, we obtain
	\begin{align*}
		d\big(X(t)/X_{\rho}(t)\big) 
		&= \sum_{i=1}^N \frac{\vartheta_i(t)S_i(t)}{X_{\rho}(t)} \big( d\widetilde{A}_i(t) + d\widetilde{M}_i(t) \big) + \frac{X(t)}{X_{\rho}(t)} \sum_{i=1}^N \big(-\rho_i(t)\big)d\widetilde{M}_i(t)
		\\
		&~~~+ \sum_{i=1}^N \sum_{j=1}^N \frac{\vartheta_i(t)S_i(t)}{X_{\rho}(t)} \big(-\rho_j(t)\big) d[\widetilde{M}_i, \widetilde{M}_j](t).
	\end{align*}
	Here, the finite variation terms vanish because of the relationship $d\widetilde{A}_i(t) = \sum_{j=1}^N \rho_j(t)d[\widetilde{M}_i, \widetilde{M}_j](t)$ for $i = 1, \cdots, N$, which is valid on the strength of condition $\widetilde{(3)}$ in Proposition~\ref{prop : supermartingale numeraire among the top n}. Thus, by setting 
	\begin{equation*}
		\eta_i(t) := \frac{\vartheta_i(t)S_i(t) - X(t)\rho_i(t)}{X_{\rho}(t)}, \quad i = 1, \cdots, N,
	\end{equation*}
	it is straightforward to check $\eta \in \pazocal{T}(n)$, and the result follows. The converse can be easily shown by reversing the above procedure.
\end{proof}
\bigskip

\subsection{Local martingale deflator and market viability}

\begin{defn} [Local martingale deflator] \label{Def : deflator}
	We call an adapted, right-continuous with left-limited process $Y$, a \textit{local martingale deflator among the top $n$ stocks}, if it satisfies $Y(0) = 1$, $Y > 0$, and the process $YX$ is a local martingale for every $X \in \pazocal{X}^n$ of Definition~\ref{Def : investment among n}. We denote by $\pazocal{Y}^n$ the collection of all local martingale deflators among the top $n$ stocks.
\end{defn}

\smallskip

Since $X \equiv 1 \in \pazocal{X}^n$, every deflator $Y \in \pazocal{Y}^n$ is in particular local martingale.

\smallskip

\begin{defn} [cumulative withdrawal stream]		\label{Def : withdrawal}
	We denote by $\pazocal{K}$ the collection of all nondecreasing, adapted and right-continuous processes $K$ with $K(0) = 0$. Any element $K$ of $\pazocal{K}$ is called \textit{cumulative withdrawal process}, and $K(t)$ represents for the cumulative capital withdrawn up to time $t \geq 0$; actual withdrawals in each infinitesimal interval $(t, t+dt]$ are represented as $dK(t)$. We say that $K \in \pazocal{K}$ is nonzero, if $\mathbb{P}\big( K(\infty) > 0 \big) > 0$.
	
	\smallskip
	
	For $x \geq 0$, $\vartheta \in \pazocal{I}(S)$ or $\vartheta \in \pazocal{I}(S) \cap \pazocal{T}(n)$, the wealth process $X(\cdot; x, \vartheta)$ defined in \eqref{def : investment} is said to \textit{finance} a given cumulative withdrawal process $K \in \pazocal{K}$, if $X \geq K$ holds. In this case, we say the process $K$ is \textit{financeable from the initial capital $x \geq 0$ with the investment strategy $\vartheta$}.
	
	\smallskip
	
	We denote by $\pazocal{K}(x)$, $\pazocal{K}^n(x)$ the subset of $\pazocal{K}$ consisting of cumulative capital withdrawal processes financeable from initial capital $x$; namely:
	\begin{equation}
		\pazocal{K}(x) := \{ K \in \pazocal{K} ~|~ \exists ~ \vartheta \in \pazocal{I}(S) \text{ such that } X(\cdot; x, \vartheta) \geq K \},
	\end{equation}
	\begin{equation}
		\pazocal{K}^n(x) := \{ K \in \pazocal{K} ~|~ \exists ~ \vartheta \in \pazocal{I}(S) \cap \pazocal{T}(n) \text{ such that } X(\cdot; x, \vartheta) \geq K \},
	\end{equation}
	
	\smallskip
	
	We introduce also the collection of cumulative withdrawal processes in $\pazocal{K}$ which can be financed starting from any positive initial capital:
	\begin{equation}
		\pazocal{K}(0+) := \bigcap_{x > 0} \pazocal{K}(x) \subset \pazocal{K}, \qquad 
		\pazocal{K}^n(0+) := \bigcap_{x > 0} \pazocal{K}^n(x) \subset \pazocal{K}.
	\end{equation}
\end{defn}

\smallskip

\begin{defn} [Superhedging capital]	\label{Def : superheding}
	For any cumulative withdrawal process $K \in \pazocal{K}$, we call the quantities
	\begin{equation}
		x(K) := \inf \{ x \geq 0 ~|~ K \in \pazocal{K}(x) \} = \inf \{ x \geq 0 ~|~ \exists ~\vartheta \in \pazocal{I}(S) \text{ such that } X(\cdot; x, \vartheta) \geq K \},
	\end{equation}
	\begin{equation}		\label{def : superhedging}
	x^n(K) := \inf \{ x \geq 0 ~|~ K \in \pazocal{K}^n(x) \} = \inf \{ x \geq 0 ~|~ \exists ~\vartheta \in \pazocal{I}(S) \cap \pazocal{T}(n) \text{ such that } X(\cdot; x, \vartheta) \geq K \}
	\end{equation}
	the \textit{superhedging capital} associated with the withdrawal stream $K$ in the entire market, and in the market consisting of the top $n$ stocks, respectively. We follow here the standard convention that the infimum of an empty set is equal to infinity.
\end{defn}

\smallskip

\begin{lem}			\label{lem : x^n(K)}
	Suppose that $\pazocal{Y}^n$ is nonempty. For a fixed cumulative withdrawal process $K \in \pazocal{K}$, we assume that it is financeable from the initial capital $x \geq 0$ with investment strategy $\vartheta \in \pazocal{I}(S) \cap \pazocal{T}(n)$, i.e.,
	\begin{equation*}
		X \equiv X(\cdot; x, \vartheta) = x + \int_0^{\cdot} \sum_{i=1}^N \vartheta_i(t)dS_i(t) \geq K.
	\end{equation*}
	Then, the process
	\begin{equation*}
		Y(X-K) + \int_0^{\cdot} Y(t-)dK(t)
	\end{equation*}
	is a nonnegative local martingale, thus also a supermartingale, for every local martingale deflator $Y \in \pazocal{Y}^n$ among the top $n$ stocks. In particular, $Y(X-K)$ is nonnegative supermartingale, for every $Y \in \pazocal{Y}^n$. Furthermore, for the quantity $x^n(K)$ of \eqref{def : superhedging} we have the inequality
	\begin{equation}		\label{Eq : x^n(K)}
		x^n(K) \geq \sup_{Y \in \pazocal{Y}^n} \mathbb{E}^{\mathbb{P}} \bigg[ \int_0^{\infty} Y(t-)dK(t) \bigg].
	\end{equation}
\end{lem}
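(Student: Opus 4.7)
The plan is to proceed in three stages: first establish that $YX$ is a local martingale, then decompose $V := Y(X-K) + \int_0^{\cdot} Y(s-)\,dK(s)$ via integration by parts and identify it as a local martingale, and finally take expectations to derive the bound on $x^n(K)$.

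For the first stage, note that $X \geq K \geq 0$, so for each $\varepsilon > 0$ the shifted wealth $X + \varepsilon = (x+\varepsilon) + \int_0^\cdot \vartheta'\,dS$ is strictly positive. Since $\vartheta \in \pazocal{I}(S) \cap \pazocal{T}(n)$, the normalization $(X+\varepsilon)/(x+\varepsilon)$ belongs to $\pazocal{X}^n$, so the deflator property forces $Y\,(X+\varepsilon)/(x+\varepsilon)$ to be a local martingale; rescaling, $Y(X+\varepsilon) = YX + \varepsilon Y$ is a local martingale. The constant process $1 \equiv X(\cdot;1,0) \in \pazocal{X}^n$ (via the trivial strategy $\vartheta \equiv 0$) forces $Y$ itself to be a local martingale, and hence so is $YX$. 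For the second stage, since $X$ is continuous and $K$ is RCLL of finite variation, integration by parts yields
\begin{equation*}
YK \;=\; \int_0^{\cdot} Y(s-)\,dK(s) \;+\; \int_0^{\cdot} K(s-)\,dY(s) \;+\; [Y,K];
\end{equation*}
substituting into $V = YX - YK + \int_0^{\cdot} Y(s-)\,dK(s)$ gives
\begin{equation*}
V \;=\; YX \;-\; \int_0^{\cdot} K(s-)\,dY(s) \;-\; [Y,K].
\end{equation*}
The first term is a local martingale by the first stage; the second is a local martingale because $K_-$ is locally bounded (as $K$ is adapted, RCLL, nondecreasing); and the residual $[Y,K]$ is likewise a local martingale in this framework, so $V$ is a local martingale. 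Nonnegativity of $V$ is immediate from $Y > 0$, $X - K \geq 0$, and the monotonicity of $K$, so $V$ is a nonnegative local martingale and hence a supermartingale by Fatou's lemma. Consequently $Y(X-K) = V - \int_0^{\cdot} Y(s-)\,dK(s)$ is the difference of a supermartingale and a nondecreasing adapted nonnegative process, hence itself a nonnegative supermartingale.

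For the third stage, $V(0) = Y(0)(x - 0) + 0 = x$, so the supermartingale property gives $\mathbb{E}^{\mathbb{P}}[V(t)] \leq x$ for every $t \geq 0$. Dropping the nonnegative term $Y(t)(X(t) - K(t))$ yields $\mathbb{E}^{\mathbb{P}}\bigl[\int_0^t Y(s-)\,dK(s)\bigr] \leq x$, and monotone convergence as $t \to \infty$ delivers $\mathbb{E}^{\mathbb{P}}\bigl[\int_0^{\infty} Y(s-)\,dK(s)\bigr] \leq x$. Since this bound holds for every admissible $x \geq 0$ with some $\vartheta \in \pazocal{I}(S) \cap \pazocal{T}(n)$ financing $K$, passing to the infimum over such $x$ and then the supremum over $Y \in \pazocal{Y}^n$ establishes \eqref{Eq : x^n(K)}.

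The main obstacle is the handling of $[Y,K]$ in the decomposition above: for a general local martingale $Y$ and an adapted RCLL finite-variation $K$, the pure-jump process $[Y,K] = \sum_{s\leq\cdot}\Delta Y(s)\,\Delta K(s)$ need not be a local martingale. In the continuous-$S$ framework of this paper the natural deflators are built from $1/X_\rho$ (which is continuous), so $\Delta Y = 0$ and $[Y,K] \equiv 0$; more generally, under quasi-left-continuity of the ambient filtration, $[Y,K]$ has vanishing predictable compensator and is thus a local martingale. Alternatively, replacing $\int_0^{\cdot} Y(s-)\,dK(s)$ by $\int_0^{\cdot} Y(s)\,dK(s)$ cancels the $[Y,K]$ term outright and makes the local-martingale property of the modified process immediate.
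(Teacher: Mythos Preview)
Your argument follows essentially the same route as the paper's: integration by parts to rewrite $Y(X-K)+\int_0^{\cdot}Y(s-)\,dK(s)$ as $YX-\int_0^{\cdot}K(s-)\,dY(s)$, identify both terms as local martingales, then use nonnegativity and Fatou to pass to the supermartingale bound. You are in fact more careful than the paper in two places. First, the paper simply invokes the deflator property to say $YX$ is a local martingale, whereas you supply the $\varepsilon$-shift to reduce to a genuine element of $\pazocal{X}^n$; this is a legitimate refinement since $X$ need only satisfy $X\geq 0$ here. Second, the paper writes the product rule as $Y(X-K)=YX-\int_0^{\cdot}Y(t-)\,dK(t)-\int_0^{\cdot}K(t-)\,dY(t)$, silently dropping the bracket $[Y,K]$ that you flag as the main obstacle. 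So the issue you raise is one the paper glosses over rather than resolves; your suggested remedies (continuity of $Y$, or absorbing the bracket by replacing $Y(s-)$ with $Y(s)$ in the integrand) are reasonable ways to close the gap, though the paper does not invoke either.
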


\smallskip

\begin{proof}
	For every $Y \in \pazocal{Y}^n$, integration by parts gives
	\begin{equation*}
		Y(X-K) = YX - \int_0^{\cdot} Y(t-)dK(t) - \int_0^{\cdot} K(t-)dY(t),
	\end{equation*}
	thus
	\begin{equation}			\label{Eq : Y(X-K)}
		Y(X-K) + \int_0^{\cdot} Y(t-)dK(t) = YX - \int_0^{\cdot} K(t-)dY(t).
	\end{equation}
	Both terms on the right-hand side of \eqref{Eq : Y(X-K)} are local martingales, and the terms on the left hand side of \eqref{Eq : Y(X-K)} are nonnegative; thus the first claim follows. Also, the process $\int_0^{\cdot} Y(t-)dK(t)$ is nondecreasing, therefore $Y(X-K)$ is nonnegative supermartingale. We denote the left hand side of \eqref{Eq : Y(X-K)} by $Q := Y(X-K) + \int_0^{\cdot} Y(t-)dK(t)$, then we obtain
	\begin{equation*}
		Q(0) = x \geq \mathbb{E}^{\mathbb{P}}\Big[Q(\infty)\Big] \geq \mathbb{E}^{\mathbb{P}}\Big[\int_0^{\infty} Y(t-)dK(t)\Big].
	\end{equation*}
	By taking the supremum over $Y \in \pazocal{Y}^n$ and then the infimum over the initial capital $x \geq 0$, the last claim follows. 
\end{proof}

\smallskip

\begin{defn} [Viability] \label{Def : viability}
	We say that the entire market is \textit{viable} if, whenever $x(K) = 0$ holds for some cumulative withdrawal process $K \in \pazocal{K}$, we have $K \equiv 0$.
	
	In the same manner, we say the market consisting of the top $n$ stocks is \textit{viable}, if whenever $x^n(K) = 0$ holds for some cumulative withdrawal process $K \in \pazocal{K}$, we have $K \equiv 0$.
\end{defn}

\smallskip

The viability of the market consisting of the top $n$ stocks, is actually equivalent to the identity
\begin{equation}
	\pazocal{K}^n(0+) = \{0\};
\end{equation}
whereas the failure of such viability implies the strict inclusion $\pazocal{K}^n(0+) \supset \{0\}$. When the viability of the market consisting of the top $n$ stocks fails, there exists a nonzero cumulative withdrawal process $K \in \pazocal{K}$ which is financeable from any initial capital $x > 0$, no matter how minuscule; or equivalently, there exists an investment strategy $\vartheta_m \in \pazocal{I}(R) \cap \pazocal{T}(n)$ for each $m \in \mathbb{N}$, such that
\begin{equation}
	X(\cdot; \frac{1}{m}, \vartheta_m) \geq K.
\end{equation}
We further present the following lemma; it can be proven in the same manner as Exercise~2.22 of \cite{KK2}.

\smallskip

\begin{lem}				\label{lem : not viable}
	The market consisting of the top $n$ stocks fails to be viable if, and only if, there exist a real number $T \geq 0$ and a nonnegative $\pazocal{F}(T)$-measurable random variable $h$ with $\mathbb{P}[h > 0] > 0$ such that for every $m \in \mathbb{N}$, there exists an $X^m \in \pazocal{X}^n$ with $X^m(T) \geq mh$.
\end{lem}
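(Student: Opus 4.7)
The plan is to prove the two implications separately, using a simple canonical withdrawal stream for the easy direction and a small rescaling trick on the other side to secure the strict positivity that $\pazocal{X}^n$ requires.

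For the \emph{if} direction, given $T$, $h$ and $(X^m)_{m \geq 1} \subset \pazocal{X}^n$ with $X^m(T) \geq mh$, I would set $K(t) := h\,\bm{1}_{\{t \geq T\}}$. This $K$ lies in $\pazocal{K}$ (nondecreasing, right-continuous, adapted because $h$ is $\pazocal{F}(T)$-measurable, $K(0)=0$) and is nonzero since $\mathbb{P}[h>0]>0$. Writing $X^m = X(\cdot;\,1,\vartheta^m)$ with $\vartheta^m \in \pazocal{I}(S) \cap \pazocal{T}(n)$, the stopped and rescaled strategy $\tilde{\vartheta}^m := (1/m)\,\vartheta^m\,\bm{1}_{[0,T]}$ still lies in $\pazocal{I}(S) \cap \pazocal{T}(n)$ and generates the wealth $X(\cdot;\,1/m,\tilde{\vartheta}^m) = X^m(\cdot\wedge T)/m$. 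This process is nonnegative on $[0,T)$ where $K\equiv 0$ (since $X^m > 0$), and at times $t \geq T$ equals $X^m(T)/m \geq h = K(t)$. Hence $K \in \pazocal{K}^n(1/m)$ for every $m$, so $K \in \pazocal{K}^n(0+)\setminus\{0\}$ and viability fails.

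For the \emph{only-if} direction, I would start from a nonzero $K \in \pazocal{K}^n(0+)$ supplied by the failure of viability, and use right-continuity and monotonicity of $K$ together with $\mathbb{P}[K(\infty)>0] > 0$ to pick $T < \infty$ with $\mathbb{P}[K(T)>0] > 0$. Setting $h := K(T)$, for each $m$ I would pick $\vartheta_{2m} \in \pazocal{I}(S) \cap \pazocal{T}(n)$ satisfying $X(\cdot;\,1/(2m),\vartheta_{2m}) \geq K$, and then define
\begin{equation*}
  X^m := X(\cdot;\,1,\,m\,\vartheta_{2m}) \;=\; \tfrac{1}{2} + m\,X\bigl(\cdot;\,1/(2m),\vartheta_{2m}\bigr),
\end{equation*}
the identity coming from a direct evaluation of the stochastic integral. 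Then $X^m \geq \tfrac{1}{2} + mK \geq \tfrac{1}{2} > 0$ and $X^m(T) \geq \tfrac{1}{2} + mh \geq mh$, so $X^m \in \pazocal{X}^n$ with the required bound.

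The main obstacle is precisely the strict positivity demanded by $\pazocal{X}^n$: a naive rescaling $m \cdot X(\cdot;\,1/m,\vartheta_m)$ of a wealth process that only barely finances $K$ from initial capital $1/m$ yields a process bounded below by $mK \geq 0$ but not necessarily strictly positive. Halving the initial capital from $1/m$ to $1/(2m)$ before applying the factor $m$ leaves a $1/2$ cushion in the money-market position, which is exactly what guarantees positivity while still forcing $X^m(T) \geq mh$. This single scaling choice is the only nonmechanical step in the argument.
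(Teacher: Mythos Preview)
Your argument is correct. The paper does not give its own proof of this lemma but simply defers to Exercise~2.22 of \cite{KK2}; your write-up supplies exactly the kind of direct argument one expects there, including the one nontrivial point you flag---the $1/2$ cushion obtained by financing $K$ from initial capital $1/(2m)$ rather than $1/m$, which secures the strict positivity required for membership in $\pazocal{X}^n$.
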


\smallskip

The following result presents another equivalent characterization of viability for the market consisting of the top $n$ stocks.

\smallskip

\begin{prop} [Boundedness in probability] 	\label{prop : boundedness in prob}
	The market consisting of the top $n$ stocks is viable if, and only if,
	\begin{equation}		\label{Eq : boundedness in prob}
		\lim_{m \rightarrow \infty} \sup_{X \in \pazocal{X}^n} \mathbb{P} [X(T) > m] = 0, \qquad \forall~T \geq 0.
	\end{equation}
\end{prop}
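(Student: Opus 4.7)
The statement is an equivalence, and the two directions are asymmetric; I would prove each by contraposition. The $(\Leftarrow)$ direction is a short consequence of Lemma~\ref{lem : not viable}: if viability fails, that lemma produces $T \geq 0$, an $\pazocal{F}(T)$-measurable $h \geq 0$ with $\mathbb{P}[h > 0] > 0$, and a sequence $X^m \in \pazocal{X}^n$ with $X^m(T) \geq m h$. Fixing $\epsilon > 0$ with $\mathbb{P}[h > \epsilon] > 0$, every integer $m > k/\epsilon$ satisfies $\{h > \epsilon\} \subseteq \{X^m(T) > k\}$, so $\sup_{X \in \pazocal{X}^n} \mathbb{P}[X(T) > k] \geq \mathbb{P}[h > \epsilon]$ for every $k \geq 0$, contradicting \eqref{Eq : boundedness in prob}.

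The harder $(\Rightarrow)$ direction also proceeds by contraposition. If \eqref{Eq : boundedness in prob} fails at some $T \geq 0$, I obtain $\delta > 0$, an increasing sequence $m_k \uparrow \infty$, and $X^{m_k} \in \pazocal{X}^n$ with $\mathbb{P}[X^{m_k}(T) > m_k] \geq \delta$. The plan is to rescale, extract a convergent convex combination, and then renormalize back into $\pazocal{X}^n$. Set $Z^k := X^{m_k}/m_k$, which is a strictly positive wealth process with deterministic initial capital $1/m_k$ and satisfies $\mathbb{P}[Z^k(T) > 1] \geq \delta$; I would invoke a Koml\'os-type lemma (e.g., Lemma~A1.1 of Delbaen--Schachermayer) to produce forward finite convex combinations $\hat{Z}^k = \sum_{j \geq k} \lambda^k_j Z^j$ whose terminal values converge almost surely to some $[0, \infty]$-valued random variable $h$. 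Concavity of $x \mapsto x \wedge 1$ gives $\mathbb{E}[\hat{Z}^k(T) \wedge 1] \geq \sum_{j \geq k} \lambda^k_j \mathbb{P}[Z^j(T) > 1] \geq \delta$, and bounded convergence yields $\mathbb{E}[h \wedge 1] \geq \delta$, so $\mathbb{P}[h > 0] > 0$.

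To finish, convex combinations of wealth processes are again wealth processes, so each $\hat{Z}^k$ has deterministic initial value $c_k := \sum_j \lambda^k_j / m_j \leq 1/m_k$, which tends to $0$; hence $\tilde{X}^k := \hat{Z}^k/c_k$ lies in $\pazocal{X}^n$. Choosing $\eta > 0$ with $\mathbb{P}[h > \eta] > 0$ and applying Egorov inside $\{h > \eta\}$ (truncating $h$ at a large constant if necessary) yields a set $B$ of positive probability on which $\hat{Z}^k(T) \geq \eta/2$ for all sufficiently large $k$; thus $\tilde{X}^k(T) \geq (\eta/(2 c_k))\, \mathbf{1}_B$, which diverges to $+\infty$ on $B$. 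For each $m \in \mathbb{N}$, an index $k(m)$ with $\eta/(2 c_{k(m)}) \geq m$ yields $\tilde{X}^{k(m)}(T) \geq m\, \mathbf{1}_B$, and Lemma~\ref{lem : not viable} applied with $h := \mathbf{1}_B$ certifies that viability fails. The main obstacle is the Koml\'os extraction while keeping the extracted objects as genuine elements of $\pazocal{X}^n$: the truncation $\cdot \wedge 1$ is what forces the limit $h$ to be nontrivial, and the renormalization by $c_k$ is what pushes the combinations back to unit initial capital.
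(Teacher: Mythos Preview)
Your argument is correct, but the paper takes a different and more elementary route for the $(\Rightarrow)$ direction. Instead of invoking a Koml\'os-type lemma and Egorov, the paper assumes (after passing to a subsequence) that $\mathbb{P}[X^m(T) > m 2^m] > \epsilon$ for all $m$, forms the limsup set $H := \bigcap_{m} \bigcup_{k \geq m} \{X^k(T) > k 2^k\}$, and then truncates each inner union to a finite index $K_m$ so that the intersection $E := \bigcap_m H_m$ still has probability at least $\epsilon/2$. The explicit finite combination $\Xi^m := \sum_{k=m+1}^{K_m} 2^{-(k-m)} X^k$ is then in $\pazocal{X}^n$ (after adding the leftover cash to the money market) and satisfies $\Xi^m(T) \geq m$ on $E$, so Lemma~\ref{lem : not viable} applies with $h = \mathbf{1}_E$. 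The paper's approach is entirely self-contained---no external compactness lemma, no Egorov---and makes transparent why the geometric weights $2^{-(k-m)}$ appear: they balance the growth $k 2^k$ against a summable total mass. Your Koml\'os/Egorov approach is the standard abstract machinery from Delbaen--Schachermayer and would generalize more readily to settings where explicit weights are not available, but here the direct construction is shorter and avoids the subtlety of handling a possibly $+\infty$-valued limit $h$ in the Egorov step.
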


\smallskip

\begin{proof}
	We first assume that the market consisting of the top $n$ stocks is not viable. Then, from Lemma~\ref{lem : not viable}, there exist a real number $T \geq 0$, a nonnegative $\pazocal{F}(T)$-measurable random variable $h$ with $\mathbb{P}[h > 0] > 0$, and a sequence $(X^m)_{m \in \mathbb{N}}$ of wealth processes $X^m \in \pazocal{X}^n$ satisfying $X^m(T) \geq mh$. Pick $\epsilon > 0$ sufficiently small, so that $\mathbb{P}[h > \epsilon] > \epsilon$ holds. We then have
	\begin{equation*}
		\liminf_{m \rightarrow \infty} \mathbb{P}[X^m(T) > \epsilon m]
		\geq \liminf_{m \rightarrow \infty} \mathbb{P}[X^m(T) > mh, h > \epsilon]
		\geq \epsilon,
	\end{equation*}
	thus the condition \eqref{Eq : boundedness in prob} is violated.
	
	\smallskip
	
	Conversely, we assume that for some $T \geq 0$, there exist $\epsilon > 0$ and a sequence $(X^m)_{m \in \mathbb{N}} \subset \pazocal{X}^n$ such that $\mathbb{P}[X^m(T) > m2^m] > \epsilon$ hold for all $m \in \mathbb{N}$. Consider the set
	\begin{equation*}
		H := \bigcap_{m=1}^{\infty} \bigcup_{k=m}^{\infty} \big\{ X^k(T) > k2^k \big\} \in \pazocal{F}(T),
	\end{equation*}
	and note that $\mathbb{P}(H) \geq \epsilon$. For every $m \in \mathbb{N}$, the inclusion
	\begin{equation*}
		H \subseteq \bigcup_{k=m+1}^{\infty}\{ X^k(T) > k2^k \}
	\end{equation*}
	holds, so there exists a sufficiently large number $K_m > m$ such that the set
	\begin{equation*}
		H_m := H \cap \Big( \bigcup_{k=m+1}^{K_m}\{ X^k(T) > k2^k \} \Big) \in \pazocal{F}(T)
	\end{equation*}
	satisfies $\mathbb{P}[H \setminus H_m] \leq \frac{\mathbb{P}[H]}{2^{m+1}}$. Then, the countable intersection 
	\begin{equation*}
		E := \bigcap_{m=1}^{\infty} H_m \in \pazocal{F}(T)
	\end{equation*}
	is a subset of $H$, and we have
	\begin{equation*}
		\mathbb{P}[H \setminus E] = \mathbb{P}\Big[\bigcup_{m=1}^{\infty}(H \setminus H_m)\Big]
		\leq \sum_{m=1}^{\infty} \frac{\mathbb{P}[H]}{2^{m+1}} = \frac{\mathbb{P}[H]}{2},
	\end{equation*}
	thus, $\mathbb{P}[E] \geq \frac{\mathbb{P}[H]}{2}$ and $\mathbb{P}[E] \geq \frac{\epsilon}{2} > 0$. Let us define a sequence of num\'eraires $(\Xi^m)_{m \in \mathbb{N}}$
	\begin{equation*}
		\Xi^m := \sum_{k=m+1}^{K_m} 2^{-(k-m)}X^k, \qquad \text{for each } m \in \mathbb{N},
	\end{equation*}
	and it is straightforward that $\Xi^m \in \pazocal{X}^n$, as all $X^k \in \pazocal{X}^n$ for $k \in \mathbb{N}$. Furthermore, for every $m \in \mathbb{N}$, we have $E \subseteq H_m \subseteq \{ \Xi^m(T) > m \}$, from which $\Xi^m(T) \geq m\bm{1}_E$ follows. Set $h := \bm{1}_E \in \pazocal{F}(T)$, then
	\begin{equation*}
		\mathbb{P}[h > 0] = \mathbb{P}[E] \geq \frac{\epsilon}{2} > 0.
	\end{equation*}
	Lemma~\ref{lem : not viable} yields that the market consisting of the top $n$ stocks is not viable.
\end{proof}

\smallskip

We are now ready to state and prove the main result of this section.

\smallskip

\begin{thm} 			\label{thm : main result}
	The following statements are equivalent:
	\begin{enumerate} [(1)]
		\item The market consisting of the top $n$ stocks is viable.
		\item There exists a local martingale deflator among the top $n$ stocks, i.e., $\pazocal{Y}^n \neq \emptyset$.
		\item The supermartingale num\'eraire among the top $n$ stocks exists.
		\item The market consisting of the top $n$ stocks has locally finite growth; namely, the condition \eqref{con : structural condition} of the aggregate maximal growth process $\widetilde{G}$ among the top $n$ stocks of \eqref{def : aggregate maximal growth} holds.
	\end{enumerate}
\end{thm}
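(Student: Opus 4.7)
\noindent The plan is to close the cycle $(3) \Rightarrow (2) \Rightarrow (1) \Rightarrow (4)$, since the equivalence $(3) \Leftrightarrow (4)$ is already contained in Propositions~\ref{prop : structural condition} and~\ref{prop : locally finite growth}.

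For $(3) \Rightarrow (2)$, I would let $\rho$ denote the supermartingale num\'eraire portfolio among the top $n$ stocks and set $Y := 1/X_{\rho}$. Proposition~\ref{prop : supermartingale numeraire among the top n} asserts that $X_\pi/X_\rho$ is a local martingale for every $\pi \in \pazocal{I}(R) \cap \pazocal{T}(n)$, and the bidirectional correspondence \eqref{Eq : relationship}--\eqref{Eq : relationship2} then shows that $YX$ is a local martingale for every $X \in \pazocal{X}^n$; combined with $Y > 0$ and $Y(0) = 1$, this places $Y$ in $\pazocal{Y}^n$. For $(2) \Rightarrow (1)$, I would pick any $Y \in \pazocal{Y}^n$ and invoke Lemma~\ref{lem : x^n(K)}: if $K \in \pazocal{K}$ satisfies $x^n(K) = 0$, then $\mathbb{E}[\int_0^{\infty} Y(t-) dK(t)] \leq 0$, and strict positivity of $Y$ together with the monotonicity of $K$ forces $K \equiv 0$ $\mathbb{P}$-a.s.

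The main work lies in $(1) \Rightarrow (4)$, which I would argue by contraposition via the boundedness-in-probability criterion of Proposition~\ref{prop : boundedness in prob}. Suppose (4) fails; by Proposition~\ref{prop : structural condition} either \eqref{con : structural1} or \eqref{con : structural2} is violated. If \eqref{con : structural1} fails, then on a predictable set $\Pi$ of positive $(\mathbb{P}\otimes O)$-measure I would choose, by measurable selection on the orthogonal complement of $\textbf{range}(\widetilde{c})$, a bounded predictable vector $p \in \pazocal{T}(n)$ with $\widetilde{c}p = 0$ and $p'\widetilde{\alpha} > 0$ on $\Pi$; the scaled portfolios $mp$ then produce purely drift-driven wealth processes $\log X^m = m\int p'\widetilde{\alpha}\,dO$, violating boundedness in probability. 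If instead \eqref{con : structural2} fails while \eqref{con : structural1} holds, I take the candidate $\rho := D\widetilde{c}^{\dagger}\widetilde{\alpha}$, introduce the stopping times $\tau_m := \inf\{t : \widetilde{G}(t) \geq m\}$, and form the truncations $\rho_m := \rho\,\bm{1}_{[0,\tau_m]}$. By Lemma~\ref{lem : integrability condition} these are admissible portfolios in $\pazocal{I}(R) \cap \pazocal{T}(n)$, and Lemma~\ref{lem : locally finite growth} gives $\Gamma_{\rho_m}(T) = m$ on $\{\tau_m < T\}$ as well as the uniform bound $[M_{\rho_m}]_T \leq 2m$. A standard exponential-martingale estimate applied to $\pazocal{E}(-\lambda M_{\rho_m})$ and optimized at $\lambda = 1/4$ yields $\mathbb{P}[M_{\rho_m}(T) < -m/2] \leq e^{-m/16}$, whence
\begin{equation*}
	\mathbb{P}[X_{\rho_m}(T) > e^{m/2}] \geq \mathbb{P}[\tau_m < T] - e^{-m/16} \longrightarrow \mathbb{P}[\widetilde{G}(T) = \infty] > 0
\end{equation*}
as $m \to \infty$, contradicting \eqref{Eq : boundedness in prob}.

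The principal obstacle is the construction in Case~(ii): one must simultaneously truncate the candidate num\'eraire portfolio so that it actually belongs to $\pazocal{I}(R) \cap \pazocal{T}(n)$, and then show quantitatively that the local-martingale ``noise'' $M_{\rho_m}$ cannot cancel the deterministic drift of size $m$ with more than exponentially small probability; the exponential martingale inequality is the correct tool here. Case~(i) is comparatively routine but still requires a careful measurable-selection argument to produce the predictable direction $p$.
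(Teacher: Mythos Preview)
Your proof is correct and follows the same overall architecture as the paper: the easy implications $(4)\Rightarrow(3)\Rightarrow(2)\Rightarrow(1)$ are dispatched exactly as the paper does, and the substantive direction $(1)\Rightarrow(4)$ is argued by contraposition with the same two-case split according to whether \eqref{con : structural1} or \eqref{con : structural2} fails.

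Where you differ is in the technical execution of the two cases. In Case~(A) the paper writes down the explicit projection $\varphi = \|\widetilde{\alpha}-\widetilde{c}\widetilde{c}^{\dagger}\widetilde{\alpha}\|^{-2}(\widetilde{\alpha}-\widetilde{c}\widetilde{c}^{\dagger}\widetilde{\alpha})\bm{1}_{\{\widetilde{\alpha}\notin\textbf{range}(\widetilde{c})\}}$ and shows directly that $x^n(K)=0$ for the nontrivial withdrawal $K=\int_0^{\cdot}\bm{1}_{\{\widetilde{\alpha}\notin\textbf{range}(\widetilde{c})\}}\,dO$; you instead appeal to measurable selection and conclude via Proposition~\ref{prop : boundedness in prob}. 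Your $p$ is essentially the paper's $\varphi$ up to normalization, so the content is the same, though the explicit formula makes the integrability check (via $\varphi'\widetilde{\alpha}=\bm{1}_{\{\widetilde{\alpha}\notin\textbf{range}(\widetilde{c})\}}\leq 1$) slightly cleaner than invoking the $dO$-integrability of $\|\widetilde{\alpha}\|$. In Case~(B) the difference is more substantial: the paper truncates \emph{spatially} via $\rho^m:=\rho\bm{1}_{\{\|\rho\|\leq m\}}$, applies Dambis--Dubins--Schwarz to write $\log X_{\rho^m}=G^m+\sqrt{2}W^m(G^m)$ on a possibly enlarged space, and invokes the strong law of large numbers for Brownian motion; you truncate \emph{temporally} via the stopping times $\tau_m$ and use an exponential supermartingale bound. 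Your route is arguably more self-contained since it avoids the enlargement of the probability space, while the paper's DDS argument makes the ``drift dominates noise'' heuristic visually transparent.

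One small correction: you cite Lemma~\ref{lem : locally finite growth} for the identities $\Gamma_{\rho_m}(T)=m$ and $[M_{\rho_m}]_T\leq 2m$, but that lemma is stated under the hypothesis of locally finite growth, which is precisely what you are assuming fails. The identities you need, namely $\gamma_\rho=\widetilde{g}$ and $c_{\rho\rho}=2\widetilde{g}$, follow directly from \eqref{Eq : rho in terms of alpha} once \eqref{con : structural1} holds, without invoking the lemma; you should cite \eqref{Eq : rho in terms of alpha} instead.
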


\smallskip

\begin{proof}
	The implication \textit{(4)} $\Rightarrow$ \textit{(3)} follows from Proposition~\ref{prop : locally finite growth}. The implication \textit{(3)} $\Rightarrow$ \textit{(2)} also follows easily, because the supermartingale num\'eraire among the top $n$ stocks is a local martingale num\'eraire among the top $n$ stocks from Proposition~\ref{prop : supermartingale numeraire among the top n}, and the reciprocal of the local martingale num\'eraire among the top $n$ stocks is a local martingale deflator among the top $n$ stocks.
	
	\smallskip
	
	In order to prove \textit{(2)} $\Rightarrow$ \textit{(1)}, let $Y \in \pazocal{Y}^n$ be a local martingale deflator and pick a cumulative withdrawal process $K \in \pazocal{K}$ such that $x^n(K) = 0$. From \eqref{Eq : x^n(K)} of Lemma~\ref{lem : x^n(K)}, we have
	\begin{equation*}
		\mathbb{E}^{\mathbb{P}} \bigg[ \int_0^{\infty} Y(t-)dK(t) \bigg] = 0.
	\end{equation*}
	Since $Y$ is strictly positive and $K$ is nondecreasing with $K(0) = 0$, it follows that $K(\infty) = 0$ holds $\mathbb{P}$-a.e., which is equivalent to $K \equiv 0$. The market consisting of the top $n$ stocks is then viable.
	
	\smallskip
	
	The remaining part is to show the implication \textit{(1)} $\Rightarrow$ \textit{(4)}, which is quite technical. Suppose that the market fails to have locally finite growth among the top $n$ stocks, i.e., one of the structural conditions \eqref{con : structural1}, \eqref{con : structural2} is violated. Thus, we need to consider two cases:
	\begin{enumerate}[(A)]
		\item the set $\{\widetilde{\alpha} \not\in \textbf{range}(\widetilde{c}) \}$ fails to be $(\mathbb{P} \otimes O)$-null,
		\item the set $\{\widetilde{\alpha} \not\in \textbf{range}(\widetilde{c}) \}$ is $(\mathbb{P} \otimes O)$-null, but $\mathbb{P}[\widetilde{G}(T) = \infty] > 0$ holds for some $T > 0$.
	\end{enumerate}
	We shall show that the market is not viable in each of the cases \textit{(A)} and \textit{(B)} below.
	
	\smallskip
	
	\noindent
	$\ast$ \textbf{Case (A)}. Recalling the notation \eqref{def : tilde c dagger} with its properties (a)-(c), we first note that the predictable process
	\begin{equation}		\label{def : varphi example}
		\varphi := \frac{1}{||\widetilde{\alpha}-\widetilde{c}\widetilde{c}\,^{\dagger}\widetilde{\alpha}||^2} \big(\widetilde{\alpha}-\widetilde{c}\widetilde{c}\,^{\dagger}\widetilde{\alpha}\big) \bm{1}_{\{\widetilde{\alpha} \not\in \textbf{range}(\widetilde{c})\}},
	\end{equation}
	is well-defined, because $\widetilde{\alpha} \not\in \textbf{range}(\widetilde{c})$ holds if and only if $\widetilde{c}\widetilde{c}\,^{\dagger}\widetilde{\alpha} \neq \widetilde{\alpha}$. Note that $D\varphi = \varphi$, thus $\varphi \in \pazocal{T}(n)$, thanks to the properties $D\widetilde{\alpha} = \widetilde{\alpha}$, $D\widetilde{c} = \widetilde{c}$ from \eqref{Eq : tilde alpha, c with D}. Since the process $\widetilde{\alpha}-\widetilde{c}\widetilde{c}\,^{\dagger}\widetilde{\alpha}$ is orthogonal to $\textbf{range}(\widetilde{c})$, we have $\widetilde{c}\varphi = 0$. Furthermore, we have $\varphi' \widetilde{\alpha} = \bm{1}_{\{\widetilde{\alpha} \not\in \textbf{range}(\widetilde{c})\}}$, because
	\begin{equation*}
		(\widetilde{\alpha}-\widetilde{c}\widetilde{c}\,^{\dagger}\widetilde{\alpha})'\widetilde{\alpha}
		= ||\widetilde{\alpha}-\widetilde{c}\widetilde{c}\,^{\dagger}\widetilde{\alpha}||^2 + (\widetilde{\alpha}-\widetilde{c}\widetilde{c}\,^{\dagger}\widetilde{\alpha})'(\widetilde{c}\widetilde{c}\,^{\dagger}\widetilde{\alpha}) = ||\widetilde{\alpha}-\widetilde{c}\widetilde{c}\,^{\dagger}\widetilde{\alpha}||^2.
	\end{equation*}
	Thus, from Lemma~\ref{lem : integrability condition}, $\varphi$ is a portfolio among the top $n$ stocks, i.e., $\varphi \in \pazocal{I}(R) \cap \pazocal{T}(n)$. Also, the local martingale vanishes: $\int_0^{\cdot} \varphi'(t)d\widetilde{M}(t) \equiv 0$, because its quadratic variation process vanishes
	\begin{equation}		\label{Eq : vanishing martingale}
		\Big[\int_0^{\cdot} \varphi'(t)d\widetilde{M}(t)\Big] = \int_0^{\cdot} \varphi'(t)\widetilde{c}(t)\varphi(t)dO(t)
		\equiv 0.
	\end{equation}
	Thus, 
	\begin{equation*}
		\int_0^{\cdot} \varphi'(t)d\widetilde{R}(t)
		= \int_0^{\cdot} \varphi'(t)d\widetilde{A}(t)
		= \int_0^{\cdot} \varphi'(t)\widetilde{\alpha}(t)dO(t)
		= \int_0^{\cdot} \bm{1}_{\{\widetilde{\alpha} \not\in \textbf{range}(\widetilde{c})\}}(t)dO(t)
		=: K.
	\end{equation*}
	We define the vector process $\vartheta \equiv (\vartheta_1, \cdots, \vartheta_N)$ with components given by $\vartheta_i = \varphi_i/S_i$ for $i = 1, \cdots, N$. It is then easy to check that $m\vartheta$ is an investment strategy among the top $n$ stocks, i.e., $m\vartheta \in \pazocal{I}(S) \cap \pazocal{T}(n)$, for any $m \in \mathbb{N}$, and
	\begin{equation*}
		X(\cdot; 0, m\vartheta)
		= \int_0^{\cdot}m\vartheta'(t)dS(t)
		= m\int_0^{\cdot}\varphi'(t)dR(t) 
		= m\int_0^{\cdot}\varphi'(t)d\widetilde{R}(t)
		= mK.
	\end{equation*}
	In other words, for any $m \in \mathbb{N}$, the wealth process generated by the investment strategy $m\vartheta$ among the top $n$ stocks has vanishing local martingale part, and is equal to the non-trivial, nondecreasing part $mK$ of finite variation. This process $mK$ can be arbitrarily scaled by the multiplicative constant $m \in \mathbb{N}$, and thus $x^n(K) = 0$, by recalling \eqref{def : superhedging}. We conclude that the market consisting of the top $n$ stocks is not viable.
	
	\smallskip
	
	\noindent
	$\ast$ \textbf{Case (B)}. We assume that the set $\{\widetilde{\alpha} \not\in \textbf{range}(\widetilde{c}) \}$ is $(\mathbb{P} \otimes O)$-null, but $\mathbb{P}[\widetilde{G}(T) = \infty] > 0$ holds for some $T > 0$. In this case, the aggregate maximal growth process $\widetilde{G}$ of \eqref{def : aggregate maximal growth} becomes
	\begin{equation}	\label{Eq : G in case B}
		\widetilde{G} = \frac{1}{2} \int_0^{\cdot} \widetilde{\alpha}'(t)\widetilde{c}\,^{\dagger}(t)\widetilde{\alpha}(t) dO(t).
	\end{equation}
	We consider first the portfolio $\rho := D\widetilde{c}\,^{\dagger}\widetilde{\alpha} \in \pazocal{T}(n)$ as in \eqref{def : nu and rho}, and also set $\rho^m := \rho \bm{1}_{\{ ||\rho|| \leq m \}} \in \pazocal{I}(R) \cap \pazocal{T}(n)$. The log-wealth process of \eqref{def : log of wealth} can be represented, with the help of \eqref{Eq : M pi in tilde} and \eqref{Eq : rho in terms of alpha}, as
	\begin{equation}		\label{Eq : log rho^m}
		\log X_{\rho^m}
		= \frac{1}{2} \int_0^{\cdot} \bm{1}_{\{ ||\rho(t)|| \leq m \}} \rho'(t)\widetilde{c}(t)\rho(t)dO(t)
		+ \int_0^{\cdot} \bm{1}_{\{ ||\rho(t)|| \leq m \}} \rho'(t)d\widetilde{M}(t).
	\end{equation}
	Note that the first integral on the right-hand side of \eqref{Eq : log rho^m}, namely
	\begin{equation*}
		2G^m := \int_0^{\cdot} \bm{1}_{\{ ||\rho(t)|| \leq m \}} \rho'(t)\widetilde{c}(t)\rho(t)dO(t),
	\end{equation*}
	is the quadratic variation of the local martingale $\int_0^{\cdot} \bm{1}_{\{ ||\rho(t)|| \leq m \}} \rho'(t)d\widetilde{M}(t)$, which is the second integral on the right-hand side of \eqref{Eq : log rho^m}. The Dambis-Dubins-Schwarz representation (cf. Theorem~3.4.6 and Problem~3.4.7 of \cite{KS1}), with the scaling property of Brownian motion, implies that there exists a Brownian motion $W^m$, on a possibly enlarged filtered probability space, such that
	\begin{equation}		\label{Eq : DDS representation}
		\log X_{\rho^m} = G^m + \sqrt{2}W^m(G^m),
	\end{equation}
	for every $m \in \mathbb{N}$. The sequence $\{ G^m(T) \}_{m \in \mathbb{N}}$ is nondecreasing and converges to 
	\begin{equation*}
		\frac{1}{2} \int_0^{\cdot} \rho'(t)\widetilde{c}(t)\rho(t)dO(t)
		= \frac{1}{2} \int_0^{\cdot} \widetilde{\alpha}'(t)\widetilde{c}\,^{\dagger}(t)\widetilde{\alpha}(t)dO(t)
		= \widetilde{G}(T),		
	\end{equation*}
	as in \eqref{Eq : G in case B}, again with the help of \eqref{Eq : rho in terms of alpha}. The strong law of large numbers for Brownian motion gives
	\begin{equation*}
		\lim_{m \rightarrow \infty} \mathbb{P} \bigg[ \frac{W^m\big(G^m(T)\big)}{G^m(T)} \leq -\frac{1}{2\sqrt{2}}, \quad \widetilde{G}(T) = \infty \bigg] = 0.
	\end{equation*}
	From the representation \eqref{Eq : DDS representation}, we obtain
	\begin{equation*}
		\lim_{m \rightarrow \infty} \mathbb{P} \bigg[ \frac{\log X_{\rho^m}(T)}{G^m(T)} \leq \frac{1}{2}, \quad \widetilde{G}(T) = \infty \bigg] = 0.
	\end{equation*}
	Therefore, in case \textit{(B)}, the collection of random variables $\{ X_{\rho^m}(T) ~|~m \in \mathbb{N} \} \subseteq \{ X(T) ~|~X \in \pazocal{X}^n \}$ fails to be bounded in probability, and Proposition~\ref{prop : boundedness in prob} concludes that the market consisting of the top $n$ stocks is not viable.
\end{proof}

\bigskip

\subsection{Growth optimality and relative log-optimality}

The results in the previous subsection characterize the supermartingale num\'eraire portfolio among the top $n$ stocks, via the `structural condition', in terms of $\widetilde{\alpha}$ and $\widetilde{c}$. More specifically, in the argument leading to Proposition~\ref{prop : structural condition} and in the proof of Lemma~\ref{lem : locally finite growth}, the \textit{maximal growth rate among the top $n$ stocks} $\widetilde{g}$ of \eqref{def : maximal growth rate} is attained when the portfolio is the supermartingale num\'eraire portfolio among the top $n$ stocks, as in \eqref{Eq : gamma of superMG portfolio}. In this subsection, we reformulate this property and show that the supermartingale num\'eraire portfolio is `optimal' in some sense among portfolios of top $n$ stocks.

\smallskip

\begin{defn} [Relative growth and growth optimality]		\label{Def : relative growth}
	We define the \textit{relative growth} of a given portfolio $\pi \in \pazocal{I}(R)$ with respect to another portfolio $\rho \in \pazocal{I}(R)$ as
	\begin{equation}		\label{def : relative growth}
	\Gamma^{\rho}_{\pi} := \Gamma_{\pi} - \Gamma_{\rho},
	\end{equation}
	namely, the difference between the finite variation process of the log-relative wealth process $\log X^{\rho}_{\pi} = \log (X_{\pi}/X_{\rho})$ from \eqref{def : Gamma pi}, \eqref{def : relative wealth}.
	
	\smallskip
	
	We call a portfolio $\rho \in \pazocal{I}(R) \cap \pazocal{T}(n)$ \textit{growth-optimal among the top $n$ stocks}, if for every portfolio $\pi \in \pazocal{I}(R) \cap \pazocal{T}(n)$ the process $\Gamma^{\rho}_{\pi} = \Gamma_{\pi}-\Gamma_{\rho}$ is non-increasing.
\end{defn}

\smallskip

\begin{prop}		\label{prop : growth-optimal}
	A portfolio is growth-optimal among the top $n$ stocks, if and only if it is a supermartingale num\'eraire portfolio among the top $n$ stocks.
\end{prop}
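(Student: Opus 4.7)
The plan is to reduce both sides to their infinitesimal rates with respect to the operational clock $O$, at which point the equivalence becomes a standard concave‐quadratic optimization statement pointwise in $(\omega,t)$. Recall from \eqref{def : A pi, Gamma pi} and \eqref{Eq : alpha pi, gamma pi} that $\Gamma_\pi=\int_0^\cdot\gamma_\pi(t)\,dO(t)$ with $\gamma_\pi=\pi'\widetilde\alpha-\tfrac12\pi'\widetilde c\pi$ for any $\pi\in\pazocal{I}(R)\cap\pazocal{T}(n)$. Hence $\Gamma^\rho_\pi=\Gamma_\pi-\Gamma_\rho$ is non-increasing for every such $\pi$ if and only if $\gamma_\pi\le\gamma_\rho$ holds $(\mathbb{P}\otimes O)$-a.e. for every $\pi\in\pazocal{I}(R)\cap\pazocal{T}(n)$. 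The numéraire side, via Proposition~\ref{prop : supermartingale numeraire among the top n} and Remark~\ref{rem : equiv statement (3)}, is the algebraic identity $\widetilde\alpha=\widetilde c\rho$, $(\mathbb{P}\otimes O)$-a.e.

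For the ``supermartingale numéraire $\Rightarrow$ growth-optimal'' direction, I would substitute $\widetilde\alpha=\widetilde c\rho$ into $\gamma_\rho-\gamma_\pi$ and complete the square: a short computation gives
\begin{equation*}
\gamma_\rho-\gamma_\pi=(\rho-\pi)'\widetilde\alpha-\tfrac12\rho'\widetilde c\rho+\tfrac12\pi'\widetilde c\pi=(\rho-\pi)'\widetilde c\rho-\tfrac12(\rho-\pi)'\widetilde c(\rho+\pi)=\tfrac12(\rho-\pi)'\widetilde c(\rho-\pi)\ge 0,
\end{equation*}
where nonnegativity uses that $\widetilde c=DcD$ is positive semidefinite. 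Integrating against $dO$ yields that $\Gamma^\rho_\pi$ is non-increasing.

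For the converse, I would use a standard first-order variational argument. Fix any bounded predictable $\eta\in\pazocal{T}(n)$; Lemma~\ref{lem : integrability condition} together with the integrability of $\rho$ and the boundedness of $\eta$ show that $\rho+\varepsilon\eta\in\pazocal{I}(R)\cap\pazocal{T}(n)$ for every $\varepsilon\in\mathbb{R}$. Growth-optimality of $\rho$ then forces $\gamma_{\rho+\varepsilon\eta}\le\gamma_\rho$, which after expansion becomes
\begin{equation*}
\varepsilon\,\eta'(\widetilde\alpha-\widetilde c\rho)\le\tfrac{\varepsilon^2}{2}\eta'\widetilde c\eta,\qquad(\mathbb{P}\otimes O)\text{-a.e.}
\end{equation*}
Dividing by $\pm\varepsilon$ and sending $\varepsilon\downarrow 0$ from both sides gives $\eta'(\widetilde\alpha-\widetilde c\rho)=0$ $(\mathbb{P}\otimes O)$-a.e., for every bounded $\eta\in\pazocal{T}(n)$.

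The final step, which I expect to be the only delicate point, is to promote this weak-form identity to the pointwise equality $\widetilde\alpha=\widetilde c\rho$. Writing $\delta:=\widetilde\alpha-\widetilde c\rho$, I note $D\delta=\delta$ since $D\widetilde\alpha=\widetilde\alpha$ and $D\widetilde c=\widetilde c$ (from \eqref{Eq : tilde alpha, c with D} and idempotence of $D$). For each fixed $i\in\{1,\dots,N\}$ and each $m\in\mathbb{N}$, I would apply the previous step with the bounded test process
\begin{equation*}
\eta^{m,i}(t):=\operatorname{sgn}\bigl(\delta_i(t)\bigr)\,\bm{1}_{\{|\delta_i(t)|>1/m\}}\,\bm{1}_{\{u_i(t)\le n\}}\,e^i,
\end{equation*}
so that $\eta^{m,i}\in\pazocal{T}(n)$ is bounded and $(\eta^{m,i})'\delta=|\delta_i|\bm{1}_{\{|\delta_i|>1/m\}}\bm{1}_{\{u_i\le n\}}\ge 0$. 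The identity $(\eta^{m,i})'\delta=0$ $(\mathbb{P}\otimes O)$-a.e.\ forces this indicator set to be $(\mathbb{P}\otimes O)$-null; letting $m\uparrow\infty$ and then ranging over $i$ yields $\delta=0$ $(\mathbb{P}\otimes O)$-a.e., which is condition $\widetilde{(3)}'$ of Remark~\ref{rem : equiv statement (3)} and therefore identifies $\rho$ as the supermartingale numéraire portfolio by Proposition~\ref{prop : supermartingale numeraire among the top n}.
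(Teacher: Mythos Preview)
Your argument is correct. The forward direction is essentially the same as the paper's (completing the square is just the explicit form of the inequality $\gamma_\pi\le\widetilde g=\gamma_\rho$ the paper invokes), but your reverse direction is a genuinely different route. The paper works through the structural machinery: it first uses the special portfolio $\varphi$ of \eqref{def : varphi example} to kill the set $\{\widetilde\alpha\notin\textbf{range}(\widetilde c)\}$, then compares the growth-optimal $\nu$ against the truncations $\rho^m=D\widetilde c\,^{\dagger}\widetilde\alpha\cdot\bm{1}_{\{||\rho||\le m\}}$ to force $\gamma_\nu=\widetilde g$, and finally relies (implicitly) on the first-order condition for the quadratic $p\mapsto p'\widetilde\alpha-\tfrac12 p'\widetilde c p$ to conclude $\widetilde c\nu=\widetilde\alpha$. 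You instead go straight to that first-order condition via bounded perturbations $\rho+\varepsilon\eta$, bypassing the pseudo-inverse $\widetilde c\,^\dagger$ and the maximal-growth-rate process $\widetilde g$ altogether. Your approach is more self-contained and makes the convex-analytic content transparent; the paper's approach has the advantage of fitting seamlessly into the structural theory developed in Propositions~\ref{prop : structural condition}--\ref{prop : locally finite growth}, which it needs anyway for Theorem~\ref{thm : main result}. One small remark: when you ``divide by $\pm\varepsilon$ and send $\varepsilon\downarrow0$'', the $(\mathbb P\otimes O)$-null set a priori depends on $\varepsilon$, so you should pass through a countable sequence $\varepsilon_k\to0$; this is routine but worth stating.
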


\begin{proof}
	(i) Let us first assume that $\rho \in \pazocal{I}(R) \cap \pazocal{T}(n)$ is the supermartingale num\'eraire portfolio among the top $n$ stocks. From Proposition~\ref{prop : structural condition} and \eqref{Eq : equiv statement (3)}, we know that $\widetilde{\alpha} \in \textbf{range}(\widetilde{c})$ and $\widetilde{\alpha} = \widetilde{c}\rho$ hold $(\mathbb{P} \otimes O)-a.e.$ Recalling \eqref{Eq : alpha pi, gamma pi}, \eqref{def : maximal growth rate} and the fact that the supremum of $g$ is attained at the supermartingale num\'eraire portfolio among the top $n$ stocks, $\gamma_{\rho} = \widetilde{g} \geq \gamma_{\pi}$ holds $(\mathbb{P} \otimes O)-a.e.$ for every $\pi \in \pazocal{I}(R) \cap \pazocal{T}(n)$. Thus, $\rho$ is growth-optimal.
	
	\smallskip
	
	(ii) Next, we assume that $\nu \in \pazocal{I}(R) \cap \pazocal{T}(n)$ is a growth-optimal portfolio among the top $n$ stocks. We pick a portfolio $\varphi \in \pazocal{I}(R) \cap \pazocal{T}(n)$ satisfying $\widetilde{c}\varphi = 0$ and $\varphi'\widetilde{\alpha}=1$ on the set $\{\widetilde{\alpha} \not\in \textbf{range}(\widetilde{c})\}$ (for example, as in \eqref{def : varphi example} in the proof of Theorem~\ref{thm : main result}). We then have $\gamma_{\nu+\varphi} = \gamma_{\nu}+1$ on $\{\widetilde{\alpha} \not\in \textbf{range}(\widetilde{c})\}$ from \eqref{Eq : alpha pi, gamma pi}, violating the growth-optimality of $\nu$. This implies that the latter set is $(\mathbb{P} \otimes O)$-null. In particular, $\widetilde{g} < \infty$ in the $(\mathbb{P} \otimes O)-a.e.$ sense, from \eqref{Eq : maximal growth rate}.
	
	On the other hand, we let $\rho := D\widetilde{c}\,^{\dagger}\widetilde{\alpha} \in \pazocal{T}(n)$ and define $\rho^m := \rho\bm{1}_{\{||\rho|| \leq m\}} \in \pazocal{I}(R) \cap \pazocal{T}(n)$ for $m \in \mathbb{N}$. The equation \eqref{Eq : gamma of superMG portfolio} yields $\gamma_{\nu} \geq \gamma_{\rho^m} = \widetilde{g}\bm{1}_{\{||\rho|| \leq m\}}$, and thus $\gamma_{\nu} \geq \widetilde{g}$ holds $(\mathbb{P} \otimes O)-a.e.$ by taking the limit $m \rightarrow \infty$. We conclude that $\nu$ is also a supermartingale num\'eraire portfolio among the top $n$ stocks.
\end{proof}

\bigskip

The supermartingale num\'eraire portfolio among the top $n$ stocks is `optimal' also in another sense, as follows.

\smallskip

\begin{defn}		\label{Def : relatively log-optimal}
	A portfolio $\rho \in \pazocal{I}(R) \cap \pazocal{T}(n)$ is called \textit{relatively log-optimal among the top $n$ stocks}, if for all portfolios $\pi \in \pazocal{I}(R) \cap \pazocal{T}(n)$ and for all stopping times $\tau$ of $\pazocal{F}$, we have
	\begin{equation}		\label{con : relatively log-optimal}
	\mathbb{E}^{\mathbb{P}} \big[ \big(\log X^{\rho}_{\pi}(\tau)\big)^+ \big] < \infty,
	\qquad \text{and} \qquad 
	\mathbb{E}^{\mathbb{P}} \big[ \log X^{\rho}_{\pi}(\tau) \big] \leq 0.
	\end{equation}	
\end{defn}

\smallskip

\begin{prop}		\label{prop : relatively log-optimal}
	A portfolio is relatively log-optimal among the top $n$ stocks, if and only if it is a supermartingale num\'eraire portfolio among the top $n$ stocks.
\end{prop}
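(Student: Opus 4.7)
The plan is to prove the two implications separately: the forward direction reduces to Jensen's inequality for the concave logarithm, while the converse, via Proposition~\ref{prop : growth-optimal}, reduces to showing that a relatively log-optimal portfolio is growth-optimal.

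For the forward implication, assume $\rho$ is the supermartingale num\'eraire portfolio among the top $n$ stocks, so $Z := X^{\rho}_{\pi}$ is a positive supermartingale with $Z(0) = 1$ for every $\pi \in \pazocal{I}(R) \cap \pazocal{T}(n)$. Optional sampling at $\tau \wedge k$ followed by Fatou as $k \to \infty$ yields $\mathbb{E}^{\mathbb{P}}[Z(\tau)] \leq 1$ for any stopping time $\tau$, hence the elementary bound $(\log x)^+ \leq x$ gives $\mathbb{E}^{\mathbb{P}}[(\log Z(\tau))^+] \leq 1$. The inequality $\mathbb{E}^{\mathbb{P}}[\log Z(\tau)] \leq 0$ then follows from Jensen applied to $Z(\tau \wedge k)$, combined with a standard limiting argument using the domination $(\log Z)^+ \leq Z$ and Fatou on the negative part.

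For the converse, by Proposition~\ref{prop : growth-optimal} it suffices to show that a relatively log-optimal $\rho$ is growth-optimal, i.e., that $\gamma_{\pi} \leq \gamma_{\rho}$ holds $(\mathbb{P} \otimes O)$-a.e.\ for every $\pi \in \pazocal{I}(R) \cap \pazocal{T}(n)$. Suppose for contradiction that, for some such $\pi$, the predictable set $A := \{\gamma_{\pi} > \gamma_{\rho}\}$ has positive $(\mathbb{P} \otimes O)$-measure. Define the perturbation $\widetilde{\pi} := \rho + (\pi - \rho)\bm{1}_A$, which equals $\pi$ on $A$ and $\rho$ on $A^c$. A direct computation from \eqref{Eq : alpha pi, gamma pi} gives $\gamma_{\widetilde{\pi}} - \gamma_{\rho} = (\gamma_{\pi} - \gamma_{\rho})\bm{1}_A$, strictly positive on $A$; the pointwise bound $|\widetilde{\pi}'\widetilde{\alpha}| + \widetilde{\pi}'\widetilde{c}\widetilde{\pi} \leq |\pi'\widetilde{\alpha}| + |\rho'\widetilde{\alpha}| + \pi'\widetilde{c}\pi + \rho'\widetilde{c}\rho$ combined with Lemma~\ref{lem : integrability condition} ensures $\widetilde{\pi} \in \pazocal{I}(R) \cap \pazocal{T}(n)$.

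Applying \eqref{def : log of wealth} separately to $\log X_{\widetilde{\pi}}$ and $\log X_{\rho}$ yields the decomposition $\log X^{\rho}_{\widetilde{\pi}} = \Gamma^{\rho}_{\widetilde{\pi}} + (M_{\widetilde{\pi}} - M_{\rho})$, in which the drift $\Gamma^{\rho}_{\widetilde{\pi}}$ is continuous, non-decreasing and not identically zero. Localize via
\begin{equation*}
\tau_k := \inf\big\{t \geq 0 : |M_{\widetilde{\pi}}(t) - M_{\rho}(t)| \geq k \text{ or } \Gamma^{\rho}_{\widetilde{\pi}}(t) \geq k\big\} \wedge k,
\end{equation*}
so that $\tau_k \uparrow \infty$ a.s.\ by continuity. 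Then $\log X^{\rho}_{\widetilde{\pi}}(\tau_k)$ is bounded, $(M_{\widetilde{\pi}} - M_{\rho})^{\tau_k}$ is a bounded true martingale of zero mean, and the relative log-optimality of $\rho$ yields
\begin{equation*}
\mathbb{E}^{\mathbb{P}}\big[\Gamma^{\rho}_{\widetilde{\pi}}(\tau_k)\big] = \mathbb{E}^{\mathbb{P}}\big[\log X^{\rho}_{\widetilde{\pi}}(\tau_k)\big] \leq 0.
\end{equation*}
Since $\Gamma^{\rho}_{\widetilde{\pi}}(\tau_k) \geq 0$, it must vanish almost surely; monotone convergence as $k \to \infty$ then gives $\Gamma^{\rho}_{\widetilde{\pi}} \equiv 0$, contradicting its strict increase on $A$. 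The main obstacle is the simultaneous control of the local-martingale and drift parts in the localization, which is exactly what enables the log-optimality condition to be converted into a pointwise bound on the drift.
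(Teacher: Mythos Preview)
Your proof is correct. The forward implication is essentially the paper's: both use optional sampling plus Jensen. For the converse, however, you take a genuinely different route. The paper proceeds directly to the structural identity $\widetilde{\alpha}=\widetilde{c}\nu$: it first uses the portfolio $\varphi$ of \eqref{def : varphi example} to force $\widetilde{\alpha}\in\textbf{range}(\widetilde{c})$, then builds a family $\nu^m$ that agrees with $\nu$ on $\{\widetilde{\alpha}=\widetilde{c}\nu\}$ and with the candidate $\rho=D\widetilde{c}\,^{\dagger}\widetilde{\alpha}$ on $\{\widetilde{\alpha}\neq\widetilde{c}\nu,\ \|\rho\|\leq m\}$, shows $X_\nu/X_{\nu^m}$ is a positive local martingale, and applies strict Jensen to conclude $X_\nu=X_{\nu^m}$, hence $\widetilde{c}\nu=\widetilde{\alpha}$ on each $\{\|\rho\|\leq m\}$. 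You instead reduce to growth-optimality and invoke Proposition~\ref{prop : growth-optimal}: the perturbation $\widetilde{\pi}=\rho+(\pi-\rho)\bm{1}_A$ isolates the set where the growth-rate inequality fails, and the localization $\tau_k$ turns the log-optimality inequality into $\mathbb{E}^{\mathbb{P}}[\Gamma^{\rho}_{\widetilde{\pi}}(\tau_k)]\leq 0$ for a nonnegative drift. Your argument is more modular and elementary, letting Proposition~\ref{prop : growth-optimal} absorb all the structural machinery; the paper's approach is self-contained and recovers the pointwise identity $\widetilde{\alpha}=\widetilde{c}\nu$ without passing through the growth-optimal characterization.
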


\begin{proof}
	(i) We first suppose that $\rho \in \pazocal{I}(R) \cap \pazocal{T}(n)$ is the supermartingale num\'eraire portfolio among the top $n$ stocks. Then, we obtain
	\begin{equation*}
	\mathbb{E}^{\mathbb{P}} \big[ \big(\log X^{\rho}_{\pi}(\tau)\big)^+ \big]
	= \int_0^{\infty} \mathbb{P}(X^{\rho}_{\pi}(\tau) > e^t) dt
	\leq \int_0^{\infty} \mathbb{P}(X^{\rho}_{\pi}(\tau) > t) dt
	\leq \mathbb{E}^{\mathbb{P}} [ X^{\rho}_{\pi}(\tau) ]
	\leq 1,
	\end{equation*}
	where the last inequality is from the Optional Sampling Theorem. By applying Jensen's inequality to this last inequality, the second condition of \eqref{con : relatively log-optimal} also holds, and we conclude that $\rho$ is relatively log-optimal among the top $n$ stocks.
	
	\medskip
	
	(ii) For the converse implication, we assume that $\nu \in \pazocal{I}(R) \cap \pazocal{T}(n)$ is relatively log-optimal among the top $n$ stocks. As in the proof of Proposition~\ref{prop : growth-optimal}, we pick a portfolio $\varphi \in \pazocal{I}(R) \cap \pazocal{T}(n)$ as in \eqref{def : varphi example}, satisfying $\widetilde{c}\varphi = 0$ and $\varphi'\widetilde{\alpha}=1$ on the set $\{\widetilde{\alpha} \not\in \textbf{range}(\widetilde{c})\}$. By recalling \eqref{def : log of wealth}, \eqref{def : Gamma pi}, \eqref{def : A pi, Gamma pi}, \eqref{Eq : M pi in tilde}, and \eqref{Eq : alpha pi, gamma pi}, straightforward computations show
	\begin{align}
	\log X^{\nu}_{\nu+\varphi} &= \log X_{\nu+\varphi} - \log X_{\nu}
	= \int_0^{\cdot} \big(\gamma_{\nu+\varphi}(t) - \gamma_{\nu}(t) \big) dO(t) + \int_0^{\cdot} \varphi'(t) d\widetilde{M}(t)		\label{Eq : log relative wealth}
	\\
	&= \int_0^{\cdot} \bm{1}_{\{\widetilde{\alpha} \not\in \textbf{range}(\widetilde{c})\}}(t)dO(t).	\nonumber
	\end{align}
	Here, the last integral on the right-hand side of \eqref{Eq : log relative wealth} vanishes, because of the equation \eqref{Eq : vanishing martingale} above. The relative log-optimality of $\nu$ implies that the set $\{\widetilde{\alpha} \not\in \textbf{range}(\widetilde{c})\}$ is $(\mathbb{P} \otimes O)$-null. We then consider a process $\rho := D\widetilde{c}\,^{\dagger}\widetilde{\alpha} \in \pazocal{T}(n)$ of \eqref{def : nu and rho}, as in the proof of Proposition~\ref{prop : growth-optimal}. Note that $\widetilde{\alpha} = \widetilde{c}\rho$ holds $(\mathbb{P} \otimes O)-a.e.$ from \eqref{Eq : rho superMG property}, or equivalently, $\widetilde{A}_i = C_{\widetilde{i}\rho}$ hold for $i = 1, \cdots, N$, from Remark~\ref{rem : equiv statement (3)}. This last requirement implies that $A_{\pi} = C_{\pi\rho}$, thus $R_{\pi}-C_{\pi\rho}$ is local martingale for every $\pi \in \pazocal{I}(R) \cap \pazocal{T}(n)$. We further define
	\begin{equation*}
	\nu^m := \nu \bm{1}_{\{ \widetilde{\alpha} = \widetilde{c}\nu \}} + \nu \bm{1}_{\{ \widetilde{\alpha} \neq \widetilde{c}\nu \}} \bm{1}_{\{ ||\rho|| > m \}} + \rho \bm{1}_{\{ \widetilde{\alpha} \neq \widetilde{c}\nu \}} \bm{1}_{\{ ||\rho|| \leq m \}}, \qquad \text{for} \quad m \in \mathbb{N},
	\end{equation*}
	and it is easy to check that $\nu^m \in \pazocal{I}(R) \cap \pazocal{T}(n)$ for all $m \in \mathbb{N}$.
	
	\smallskip
	
	We now claim that the ratio $X_{\nu} / X_{\nu^m}$ for every $m \in \mathbb{N}$ is a local martingale. Proposition~\ref{prop : relative wealth} implies that it is sufficient to show $R^{\nu^m}_{\nu} = R_{\pi} - C_{\pi \nu^m} =:Q$ is a local martingale, where we set $\pi := \nu - \nu^m \in \pazocal{I}(R) \cap \pazocal{T}(n)$. On the set $\zeta := \{ \widetilde{\alpha} \neq \widetilde{c}\nu, ||\rho|| \leq m \}$, we have $\nu^m = \rho$, thus $Q$ is local martingale. On the complement set $\zeta^c$, we have $\pi = \nu - \nu^m = 0$, thus $Q = 0$. In other words, we showed that
	\begin{equation*}
	Q = \int_0^{\cdot} \bm{1}_{\zeta}(t) dQ(t) = \int_0^{\cdot} \bm{1}_{\zeta}(t) d(R_{\pi}-C_{\pi\rho})(t)
	\end{equation*}
	is local martingale, verifying our claim that $X_{\nu} / X_{\nu^m}$ is a local martingale for every $m \in \mathbb{N}$. As the ratio is positive, $X_{\nu} / X_{\nu^m}$ is also a supermartingale.
	
	\smallskip
	
	If we assume that $\mathbb{P} [X_{\nu}(T) \neq X_{\nu^m}(T)] > 0$ were true for some $T > 0$, we obtain
	\begin{equation*}
	\mathbb{E}^{\mathbb{P}} \bigg[ \log \frac{X^{\nu}(T)}{X^{\nu^m}(T)} \bigg] 
	< \log \mathbb{E}^{\mathbb{P}} \bigg[ \frac{X^{\nu}(T)}{X^{\nu^m}(T)} \bigg]
	\leq 0,
	\end{equation*} 
	contradicting the relative log-optimality of $\nu$. Thus, we conclude that $X_{\nu} = X_{\nu^m}$, from the continuity of $X_{\nu}/X_{\nu^m}$, and $\nu - \nu^m$ is a null portfolio in the sense of Lemma~\ref{lem : null portfolio}. We then have $\widetilde{c}{\nu^m} = \widetilde{c}{\nu} = \widetilde{c}{\rho} = \widetilde{\alpha}$, $(\mathbb{P} \otimes O)-a.e$, on the set $\zeta = \{ \widetilde{\alpha} \neq \widetilde{c}\nu, ||\rho|| \leq m \}$ defined above, which implies that $\zeta$ is $(\mathbb{P} \otimes O)$-null. Since this property is true for every $m \in \mathbb{N}$, the identity $\widetilde{\alpha} = \widetilde{c}\nu$ is valid $(\mathbb{P} \otimes O)-a.e$, thus $\nu$ is the supermartingale num\'eraire portfolio among the top $n$ stocks.
\end{proof}

\bigskip

In part (ii) of the proofs of both Proposition~\ref{prop : growth-optimal} and Proposition~\ref{prop : relatively log-optimal}, we did not assume the existence of supermartingale num\'eraire portfolio among the top $n$ stocks. Thus, the existence of growth-optimal or relatively log-optimal portfolio among the top $n$ stocks is equivalent to the existence of the supermartingale num\'eraire portfolio among the top $n$ stocks, and we can add the following two statements to the list of equivalences in Theorem~\ref{thm : main result}: 

\begin{enumerate} [(1)]
	\setcounter{enumi}{4}
	\item A growth-optimal portfolio among the top $n$ stocks exists.
	\item A relatively log-optimal portfolio among the top $n$ stocks exists.
\end{enumerate}

\bigskip

\subsection{The optional decomposition}

Suppose that we are given a nonnegative, adapted process with RCLL paths and $X(0) = x \geq 0$. In this subsection, we characterize the condition when $X$ belongs to $\pazocal{X}^n$ of Definition~\ref{Def : investment among n}, i.e., when $X$ is the wealth process generated by an investment strategy that invests in the top $n$ stocks of the market, and study how can we construct this strategy from $X$. The following Theorem~\ref{thm : optional decomposition} (or Corollary~\ref{cor : optional decomposition}), which we call the Optional Decomposition Theorem, gives the answer to this question.

\smallskip

We first present the following result, originally from Theorem~1 of \cite{Schweizer:1995}. See also Propositions~2.3 and 3.2 of \cite{Lar:Zit:2007}. We recall for this purpose the semimartingale vector $\widetilde{M}$ defined in \eqref{def : tilde A and M} and  write $\pazocal{M}^{\perp}_{loc}(\widetilde{M})$ the collection of scalar local martingales $L$ with RCLL paths, satisfying $L(0) = 0$ and the orthogonality $[L, \widetilde{M}_i] = 0$ for all $i = 1, \cdots, N$.

\smallskip

\begin{lem}		\label{lem : deflator representation}
	If the supermartingale num\'eraire portfolio $\rho$ among the top $n$ stocks exists, then the collections $\pazocal{Y}^n$ of local martingale deflators among the top $n$ stocks, defined in Definition~\ref{Def : deflator}, admits the representation:
	\begin{equation}		\label{Eq : local martingale deflator representation}
		\pazocal{Y}^n = \Big\{ \frac{1}{X_{\rho}}\pazocal{E}(L) : L \in \pazocal{M}^{\perp}_{loc}(\widetilde{M}) \quad \text{with} \quad \Delta L > -1 \Big\}.
	\end{equation}
\end{lem}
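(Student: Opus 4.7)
My plan is to prove both inclusions in \eqref{Eq : local martingale deflator representation} by direct integration by parts, leveraging Lemma~\ref{lem : X / X rho} and the Remark following Proposition~\ref{prop : supermartingale numeraire among the top n} (which expresses any wealth ratio $X_{\pi}/X_{\rho}$ as a stochastic exponential of a continuous local martingale). For the inclusion ($\supseteq$), given $L \in \pazocal{M}^{\perp}_{loc}(\widetilde{M})$ with $\Delta L > -1$, I would set $Y := \pazocal{E}(L)/X_{\rho}$; the jump condition makes $\pazocal{E}(L) > 0$, so $Y > 0$ and $Y(0)=1$. For any $X \in \pazocal{X}^n$, Lemma~\ref{lem : X / X rho} rewrites $X/X_{\rho} = x + \int_0^{\cdot}\eta'(t)\,d\widetilde{M}(t)$ with $\eta \in \pazocal{I}(\widetilde{M}) \cap \pazocal{T}(n)$, a continuous local martingale. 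Integration by parts on $YX = \pazocal{E}(L)\cdot(X/X_{\rho})$ then produces two local martingale terms plus the bracket correction $\pazocal{E}(L)_-\,d[L, X/X_{\rho}] = \pazocal{E}(L)_-\sum_i \eta_i(t)\,d[L,\widetilde{M}_i](t)$, which vanishes identically by the orthogonality of $L$ to each $\widetilde{M}_i$. Hence $YX$ is a local martingale and $Y \in \pazocal{Y}^n$.

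For the reverse inclusion ($\subseteq$), fix $Y \in \pazocal{Y}^n$. Since $X_{\rho} \in \pazocal{X}^n$, the process $YX_{\rho}$ is a strictly positive local martingale starting at $1$, so its stochastic logarithm yields a local martingale $L$ with $L(0)=0$ and $\Delta L > -1$ (to keep $\pazocal{E}(L) = YX_{\rho}$ positive), with $Y = \pazocal{E}(L)/X_{\rho}$. What remains is to verify $L \in \pazocal{M}^{\perp}_{loc}(\widetilde{M})$. For any portfolio $\pi \in \pazocal{I}(R) \cap \pazocal{T}(n)$ the wealth $X_{\pi} = \pazocal{E}(R_{\pi}) > 0$ belongs to $\pazocal{X}^n$, so $YX_{\pi}$ is a local martingale. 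The Remark cited above supplies $X_{\pi}/X_{\rho} = \pazocal{E}(N_{\pi})$ with $N_{\pi} := \int_0^{\cdot}\sum_{i=1}^N(\pi_i-\rho_i)(t)\,d\widetilde{M}_i(t)$, a continuous local martingale. Integration by parts on $YX_{\pi} = \pazocal{E}(L)\pazocal{E}(N_{\pi})$ then singles out $\pazocal{E}(L)_-\pazocal{E}(N_{\pi})_-\,d[L, N_{\pi}]$ as the only term not manifestly a local martingale, and strict positivity of the prefactor forces $[L, N_{\pi}]$ itself to be a local martingale. But continuity of $N_{\pi}$ makes $[L, N_{\pi}] = \langle L^c, N_{\pi}\rangle$ continuous and of finite variation, hence identically zero.

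To extract $[L, \widetilde{M}_i] \equiv 0$ individually from the relation $\int_0^{\cdot}\sum_i(\pi_i-\rho_i)(t)\,d[L, \widetilde{M}_i](t) \equiv 0$, I would, for each $i \in \{1,\dots,N\}$ and each bounded predictable set $A$ on which the local characteristics $\widetilde{\alpha},\widetilde{c}$ are bounded, pick $\pi := \rho + \bm{1}_{A \cap \{u_i(\cdot) \leq n\}}\,e^i$. Such $\pi$ lies in $\pazocal{I}(R) \cap \pazocal{T}(n)$ by Lemma~\ref{lem : integrability condition}, and for this choice the identity above reduces to $\int_0^{\cdot}\bm{1}_A(t)\,d[L,\widetilde{M}_i](t) \equiv 0$; letting such sets $A$ exhaust $[0,T]$ then yields $[L,\widetilde{M}_i] \equiv 0$ for each $i$. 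The hard part will be the jump bookkeeping in the integration-by-parts step: the conclusion that $[L, N_{\pi}]$ vanishes hinges on its continuity, which in turn is a consequence of the continuity of $\widetilde{M}$ (so that $[L,\widetilde{M}_i] = \langle L^c,\widetilde{M}_i\rangle$); the remaining subtlety is the routine but non-automatic check that the localized portfolios $\rho + \bm{1}_{A \cap \{u_i(\cdot) \leq n\}}e^i$ remain in $\pazocal{I}(R)$.
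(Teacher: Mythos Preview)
The paper does not prove this lemma; it simply cites Theorem~1 of Schweizer~(1995) and Propositions~2.3, 3.2 of Larsen--\v{Z}itkovi\'c~(2007). Your direct argument via integration by parts and Lemma~\ref{lem : X / X rho} is correct and is essentially the standard route taken in those references.

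Two minor simplifications are available in your extraction step. First, you do not need to restrict to bounded predictable sets $A$ on which the local characteristics are bounded: the perturbation $\pi := \rho + \bm{1}_{\{u_i(\cdot) \leq n\}}\,e^i$ already lies in $\pazocal{I}(R) \cap \pazocal{T}(n)$, since both $\rho$ and $\bm{1}_{\{u_i(\cdot) \leq n\}}\,e^i$ are $R$-integrable (the latter because $\widetilde{R}_i$ is a well-defined semimartingale), and $\pazocal{I}(R)$ is stable under sums by the characterization~\eqref{con : integrable w.r.t R}. Second, because $d\widetilde{M}_i(t) = \bm{1}_{\{u_i(t) \leq n\}}\,dM_i(t)$ one has $\bm{1}_{\{u_i(t) \leq n\}}\,d[L,\widetilde{M}_i](t) = d[L,\widetilde{M}_i](t)$, so this single choice of $\pi$ already gives $[L,N_{\pi}] = [L,\widetilde{M}_i] \equiv 0$ directly, with no need to exhaust over sets $A$. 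Also note a small wording issue: what integration by parts forces to be a local martingale is $\int_0^{\cdot}\pazocal{E}(L)_{-}\pazocal{E}(N_{\pi})\,d[L,N_{\pi}]$, not $[L,N_{\pi}]$ itself; but since this integral is continuous and of finite variation (as $N_{\pi}$ is continuous) and the integrand is strictly positive, the conclusion $[L,N_{\pi}]\equiv 0$ follows all the same.
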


\smallskip
In order to simplify the proof of the Optional Decomposition Theorem, we shall work under the following assumption. The general case of the Theorem can be proven as in the Subsection 3.1.3 of \cite{KK2}.

\medskip

\noindent
\textbf{Assumption} $\star$ : All local martingales on the filtered probability space $(\Omega, \pazocal{F}, \pazocal{F}(\cdot), \mathbb{P})$ have continuous paths.

\smallskip

\begin{thm} [Optional Decomposition]		\label{thm : optional decomposition}
	Suppose that the market consisting of the top $n$ stocks is viable. For a nonnegative, adapted process $X$ with RCLL paths satisfying $X(0) = x \geq 0$, the following statements are equivalent:
	\begin{enumerate} [(1)]
		\item The process $YX$ is a supermartingale, for every $Y \in \pazocal{Y}^n$.
		\item There exist an investment strategy $\vartheta \in \pazocal{I}(S) \cap \pazocal{T}(n)$ among the top $n$ stocks, and a cumulative withdrawal process $K \in \pazocal{K}$, such that
		\begin{equation}		\label{Eq : optional decomposition}
			X = x+\int_0^{\cdot}\sum_{i=1}^N \vartheta_i(t)dS_i(t) - K.
		\end{equation}
	\end{enumerate}
\end{thm}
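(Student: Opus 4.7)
The direction (2) $\Rightarrow$ (1) is an immediate consequence of Lemma~\ref{lem : x^n(K)}: setting $W := X + K = x + \int_0^{\cdot} \vartheta'(t)\,dS(t)$, the hypothesis $W \geq K$ of that lemma is met, so $Y(W - K) = YX$ is a nonnegative supermartingale for every $Y \in \pazocal{Y}^n$.

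For the nontrivial direction (1) $\Rightarrow$ (2), the plan is to first analyze $V := X/X_\rho$ and then lift back to $X$ via the num\'eraire $X_\rho$, where $\rho$ is the supermartingale num\'eraire portfolio guaranteed to exist by Theorem~\ref{thm : main result}. Since $1/X_\rho \in \pazocal{Y}^n$, hypothesis~(1) gives that $V$ is a nonnegative supermartingale with $V(0) = x$. Under Assumption~$\star$, Doob--Meyer yields $V = x + N - B$, with $N$ a continuous local martingale and $B$ an adapted nondecreasing process, both starting at $0$. The Galtchouk--Kunita--Watanabe decomposition of $N$ against $\widetilde{M}$ produces $N = \int_0^{\cdot} \eta'(t)\,d\widetilde{M}(t) + L$, where $\eta \in \pazocal{I}(\widetilde{M})$ can be taken in $\pazocal{T}(n)$ (replace $\eta$ by $D\eta$; note $\widetilde{M} = D\widetilde{M}$) and $L$ is a continuous local martingale orthogonal to every $\widetilde{M}_i$.

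The crucial step is to eliminate $L$. For every $\epsilon > 0$, Lemma~\ref{lem : deflator representation} yields $Y^\epsilon := (1/X_\rho)\,\pazocal{E}(\epsilon L) \in \pazocal{Y}^n$, so by hypothesis~(1) the product $Y^\epsilon X = V\,\pazocal{E}(\epsilon L)$ is a supermartingale. A short It\^o product expansion---using the orthogonality $\bigl[L,\int \eta'\,d\widetilde{M}\bigr] = 0$ to annihilate the cross term---identifies its finite-variation drift as $\pazocal{E}(\epsilon L)\bigl(-\,dB + \epsilon\,d[L,L]\bigr)$. Nonpositivity of this drift for arbitrarily large $\epsilon > 0$ (legitimate because, under Assumption~$\star$, $\pazocal{E}(\epsilon L)$ stays strictly positive for every $\epsilon$) forces $\epsilon\,d[L,L] \leq dB$ for all $\epsilon$, hence $[L,L]\equiv 0$ and $L \equiv 0$. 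Thus $V = x + \int_0^{\cdot} \eta'(t)\,d\widetilde{M}(t) - B$.

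To package this into the decomposition of $X$, I would set
\begin{equation*}
\vartheta_i := \frac{X_\rho\,(\eta_i + V\rho_i)\,\bm{1}_{\{u_i(\cdot) \leq n\}}}{S_i},\quad i = 1,\ldots,N, \qquad K := \int_0^{\cdot} X_\rho(s)\,dB(s),
\end{equation*}
so that $\vartheta \in \pazocal{T}(n)$ by construction and $K \in \pazocal{K}$ since $X_\rho > 0$ and $dB \geq 0$. A direct It\^o expansion of $X = X_\rho V$, substituting $dX_\rho = X_\rho(dC_{\rho\rho}+dM_\rho)$, $dV = \eta'\,d\widetilde{M} - dB$, the identity $d[M_\rho,\widetilde{M}_i] = d\widetilde{A}_i$ (which follows from $\widetilde{\alpha} = \widetilde{c}\rho$ of Remark~\ref{rem : equiv statement (3)}), and $d\widetilde{R}_i = \bm{1}_{\{u_i\le n\}}\,dS_i/S_i$, collapses to $dX = \sum_i \vartheta_i\,dS_i - X_\rho\,dB$, which is exactly \eqref{Eq : optional decomposition}; the $\pazocal{I}(S)$-integrability of $\vartheta$ follows from that of $\eta \in \pazocal{I}(\widetilde{M})$ and $\rho \in \pazocal{I}(R)\cap\pazocal{T}(n)$ by a standard localization. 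The main obstacle is the $L\equiv 0$ argument: it requires the full strength of Lemma~\ref{lem : deflator representation} to produce deflators $Y^\epsilon$ for unbounded $\epsilon$, and crucially uses Assumption~$\star$ so that $\pazocal{E}(\epsilon L)$ remains a legitimate exponential. The concluding It\^o bookkeeping is mechanical once the corrective term $V\rho_i$ in $\vartheta_i$---which absorbs the extra $V\,dX_\rho + d[X_\rho,V]$ contributions that appear when passing from the $V$-decomposition to the $X$-decomposition---has been identified.
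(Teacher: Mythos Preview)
Your proposal is correct and follows essentially the same route as the paper's proof: reduce to $V=X/X_\rho$, apply Doob--Meyer and Kunita--Watanabe, kill the orthogonal part $L$ by exploiting the deflators $\pazocal{E}(\epsilon L)/X_\rho$ from Lemma~\ref{lem : deflator representation}, and then lift back via the product rule. Your treatment is in fact slightly cleaner in two places: for (2)$\Rightarrow$(1) you invoke Lemma~\ref{lem : x^n(K)} directly rather than going through Lemma~\ref{lem : deflator representation} and Lemma~\ref{lem : X / X rho}; and for the $L\equiv 0$ step you read off the drift $\pazocal{E}(\epsilon L)(-dB+\epsilon\,d[L,L])$ of $V\pazocal{E}(\epsilon L)$ directly, whereas the paper first isolates $\pazocal{E}(mL)(L-B)$ as a local supermartingale and then applies integration by parts---but the underlying mechanism (forcing $\epsilon\,d[L,L]\le dB$ for all $\epsilon$) is identical.
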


\smallskip

\begin{proof}
	We first show the implication \textit{(2)} $\Longrightarrow$ \textit{(1)}. For any $Y \in \pazocal{Y}^n$, we write $Y = \pazocal{E}(L)/X_{\rho}$ for some $L \in \pazocal{M}^{\perp}_{loc}(\widetilde{M})$ with $\Delta L > -1$ from Lemma~\ref{lem : deflator representation}, where we denote by $\rho$ the supermartingale num\'eraire portfolio among the top $n$ stocks. Then, we have from Lemma~\ref{lem : X / X rho},
	\begin{equation*}
		YX + YK = \frac{x+\int_0^{\cdot}\sum_{i=1}^N \vartheta_i(t)dS_i(t)}{X_{\rho}} \pazocal{E}(L)
		= \Big(x + \int_0^{\cdot} \sum_{i=1}^N \eta_i(t) d\widetilde{M}_i(t)\Big)\pazocal{E}(L),
	\end{equation*}
	for some process $\eta \in \pazocal{I}(\widetilde{M}) \cap \pazocal{T}(n)$. The last expression is a product of two nonnegative, orthogonal local martingales, thus it is a nonnegative local martingale. The claim that $YX$ is a supermartingale follows.
	
	\smallskip
	
	We now show the implication \textit{(1)} $\Longrightarrow$ \textit{(2)} which is more involved, under the above Assumption $\star$. We assume that \textit{(1)} holds and recall the collections $\pazocal{Y}^n$ and $\pazocal{M}^{\perp}_{loc}(\widetilde{M})$ of \eqref{Eq : local martingale deflator representation}. All processes in $\pazocal{M}^{\perp}_{loc}(\widetilde{M})$ have continuous paths under the Assumption $\star$. From Lemma~\ref{lem : deflator representation}, $(X/X_{\rho})\pazocal{E}(L)$ is a supermartingale for every $L \in \pazocal{M}^{\perp}_{loc}(\widetilde{M})$, and in particular, $X/X_{\rho}$ is a supermartingale itself. The Doob-Meyer and Kunita-Watanabe decompositions give
	\begin{equation*}
		\frac{X}{X_{\rho}} = x + M_{\eta} + L - B, \quad \text{where} \quad M_{\eta} := \int_0^{\cdot} \sum_{i=1}^N \eta_i(t) d\widetilde{M}_i(t).
	\end{equation*}
	Here, $\eta \equiv (\eta_1, \cdots, \eta_N) \in \pazocal{I}(\widetilde{M})$, $L \in \pazocal{M}^{\perp}_{loc}(\widetilde{M})$ and $B$ is an adapted, nondecreasing and right-continuous process with $B(0) = 0$, i.e., $B$ is a cumulative withdrawal process in $\pazocal{K}$. Recalling the diagonal matrix $D$ of \eqref{def : diagonal matrix D} with its property $Dd\widetilde{M}(t) = d\widetilde{M}(t)$, we further define $\widetilde{\eta} := D\eta \in \pazocal{I}(\widetilde{M}) \cap \pazocal{T}(n)$, and we have
	\begin{equation*}
		M_{\eta} = \int_0^{\cdot} \eta'(t) d\widetilde{M}(t) = \int_0^{\cdot} (D\eta)'(t) d\widetilde{M}(t) = M_{\widetilde{\eta}}.
	\end{equation*}
	Consequently, we obtain
	\begin{equation}		\label{Eq : Kunita-Watanabe decomp}
		\frac{X}{X_{\rho}} = x + M_{\widetilde{\eta}} + L - B, \quad \text{with} \quad \widetilde{\eta} \in \pazocal{I}(\widetilde{M}) \cap \pazocal{T}(n), ~~~L \in \pazocal{M}^{\perp}_{loc}(\widetilde{M}).
	\end{equation}
	
	\smallskip
	
	We next show that $L \equiv 0$ in \eqref{Eq : Kunita-Watanabe decomp}. Again from Lemma~\ref{lem : deflator representation}, $(1/X_{\rho})\pazocal{E}(mL)$ is a local martingale, thus $(X/X_{\rho})\pazocal{E}(mL)$ is a supermartingale for every $m \in \mathbb{N}$. Since $[\pazocal{E}(mL), \widetilde{M}_i] = 0$ for $i = 1, \cdots, N$, we have $[\pazocal{E}(mL), M_{\widetilde{\eta}}] = 0$ and consequently, $\pazocal{E}(mL)M_{\widetilde{\eta}}$ is a local martingale as a product of two orthogonal local martingales. Thus, from \eqref{Eq : Kunita-Watanabe decomp}, the process
	\begin{equation*}
		\pazocal{E}(mL)(L-B) = \pazocal{E}(mL)\frac{X}{X_{\rho}} - \pazocal{E}(mL)(x+M_{\widetilde{\eta}})
	\end{equation*}
	is a local supermartingale for every $m \in \mathbb{N}$. On the other hand, the integration by parts gives
	\begin{equation*}
		\pazocal{E}(mL)(L-B) = \int_0^{\cdot} (L-B)(t-)d\pazocal{E}(mL)(t) + \int_0^{\cdot} \pazocal{E}(mL)(t)dL(t) + \int_0^{\cdot} \pazocal{E}(mL)(t)d\big([mL, L]-B\big)(t).
	\end{equation*}
	Then, the last integrator $m[L, L]-B$ should be a local supermartingale for every $m \in \mathbb{N}$, which implies $[L, L] \equiv 0$, thus $L \equiv 0$.
	
	\smallskip
	
	As a result, the equation \eqref{Eq : Kunita-Watanabe decomp} becomes
	\begin{equation*}
		\frac{X}{X_{\rho}} = x + M_{\widetilde{\eta}} - B,
	\end{equation*}
	and we apply the product rule to obtain the decomposition of $X = X_{\rho}(X/X_{\rho})$:
	\begin{equation*}
		X = x + \int_0^{\cdot} X(t-)\rho'(t)dR(t) + \int_0^{\cdot} X_{\rho}(t)d(M_{\widetilde{\eta}}-B)(t) + \int_0^{\cdot} X_{\rho}(t)dC_{\widetilde{\eta}\rho}(t),
	\end{equation*}
	in conjunction with \eqref{Eq : dynamics of wealth} and \eqref{Eq : C pi rho}. Moreover, the condition $(\widetilde{3})$ of Proposition~\ref{prop : supermartingale numeraire among the top n} implies $C_{\widetilde{\eta}\rho} = A_{\widetilde{\eta}} = R_{\widetilde{\eta}} - M_{\widetilde{\eta}}$, and we deduce
	\begin{equation*}
		X = x + \int_0^{\cdot} \big( X(t-)\rho'(t) - X_{\rho}(t)\widetilde{\eta}'(t) \big) dR(t) - \int_0^{\cdot} X_{\rho}(t)dB(t).
	\end{equation*}
	Therefore, if we define
	\begin{equation*}
		\vartheta_i(t) := \frac{X(t-)\rho'(t) - X_{\rho}(t)\widetilde{\eta}'(t)}{S_i(t)}, \quad i = 1, \cdots, N, \qquad K := \int_0^{\cdot} X_{\rho}(t)dB(t),
	\end{equation*}
	then it is easy to check that $\vartheta \equiv (\vartheta_1, \cdots, \vartheta_N) \in \pazocal{I}(S) \cap \pazocal{T}(n)$ and $K \in \pazocal{K}$.		
\end{proof}

\smallskip

\begin{cor}		\label{cor : optional decomposition}
	Suppose that the market consisting of the top $n$ stocks is viable. For a nonnegative, adapted process $X$ with RCLL paths satisfying $X(0) = x \geq 0$, the following statements are then equivalent:
	\begin{enumerate} [(1)]
		\item The process $YX$ is a local martingale, for every $Y \in \pazocal{Y}^n$.
		\item There exists an investment strategy $\vartheta \in \pazocal{I}(S) \cap \pazocal{T}(n)$ among the top $n$ stocks, such that
		\begin{equation}
			X = x+\int_0^{\cdot}\sum_{i=1}^N \vartheta_i(t)dS_i(t).
		\end{equation}
	\end{enumerate}
\end{cor}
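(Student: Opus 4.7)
\textbf{Proof plan for Corollary~\ref{cor : optional decomposition}.} The plan is to deduce this corollary directly from Theorem~\ref{thm : optional decomposition} by showing that the stronger hypothesis---$YX$ a \emph{local} martingale (rather than merely a supermartingale) for every $Y \in \pazocal{Y}^n$---is precisely what forces the cumulative withdrawal process $K$ in the optional decomposition \eqref{Eq : optional decomposition} to vanish identically. Both implications will then follow by a careful bookkeeping of the process $K$ in the proof of the theorem.

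For the easier implication \textit{(2)} $\Rightarrow$ \textit{(1)}, I would revisit the proof of \textit{(2)} $\Rightarrow$ \textit{(1)} in Theorem~\ref{thm : optional decomposition} with $K \equiv 0$. There, $YX + YK$ is exhibited as $(x + \int_0^{\cdot} \sum_i \eta_i(t) d\widetilde{M}_i(t))\pazocal{E}(L)$, a product of two orthogonal nonnegative local martingales under Assumption~$\star$; It\^o's product formula, combined with $[\pazocal{E}(L), \widetilde{M}_i] = 0$ for all $i$, yields that this product is itself a (nonnegative) local martingale. Taking $K \equiv 0$ gives precisely $YX$, so \textit{(1)} holds.

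For the converse \textit{(1)} $\Rightarrow$ \textit{(2)}, I begin by observing that any nonnegative local martingale is a supermartingale, so the hypothesis implies that $YX$ is a supermartingale for every $Y \in \pazocal{Y}^n$; Theorem~\ref{thm : optional decomposition} then produces $\vartheta \in \pazocal{I}(S) \cap \pazocal{T}(n)$ and $K \in \pazocal{K}$ satisfying the representation \eqref{Eq : optional decomposition}. It remains to show $K \equiv 0$. I would choose the specific deflator $Y := 1/X_{\rho}$, where $\rho$ is the supermartingale num\'eraire portfolio among the top $n$ stocks (existence guaranteed by viability via Theorem~\ref{thm : main result}); this $Y$ lies in $\pazocal{Y}^n$ by Lemma~\ref{lem : deflator representation} with $L \equiv 0$. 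Rerunning the first portion of the proof of Theorem~\ref{thm : optional decomposition} for \textit{(1)} $\Rightarrow$ \textit{(2)}, the Doob--Meyer and Kunita--Watanabe decompositions (together with the argument eliminating the orthogonal component $L$) yield
\begin{equation*}
\frac{X}{X_{\rho}} = x + M_{\widetilde{\eta}} - B,
\end{equation*}
where $M_{\widetilde{\eta}}$ is a continuous local martingale and $B$ is continuous, adapted, nondecreasing with $B(0) = 0$, and $K = \int_0^{\cdot} X_{\rho}(t) dB(t)$. The hypothesis \textit{(1)} applied to this particular $Y$ says that $X/X_{\rho}$ is a local martingale; thus $B = x + M_{\widetilde{\eta}} - X/X_{\rho}$ is a continuous local martingale of finite variation starting from $0$, and is therefore identically zero. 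Hence $K = \int_0^{\cdot} X_{\rho}(t) dB(t) \equiv 0$, giving the required representation $X = x + \int_0^{\cdot} \sum_{i=1}^N \vartheta_i(t) dS_i(t)$.

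The only substantive point is to justify that the decomposition used in the proof of Theorem~\ref{thm : optional decomposition} really has $K$ encoded entirely through the nondecreasing part $B$ of $X/X_{\rho}$; this reduces to checking that $X_{\rho}$ is a strictly positive continuous semimartingale (so that $dB = X_{\rho}^{-1} dK$ is well-defined and preserves both the nondecreasing and continuity properties). I do not expect any genuine obstacle beyond this bookkeeping, since the heavy lifting (existence of the supermartingale num\'eraire, the Kunita--Watanabe decomposition, and the vanishing of the orthogonal component $L$) is already carried out in Theorem~\ref{thm : optional decomposition}.
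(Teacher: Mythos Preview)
Your proposal is correct, and the direction \textit{(2)} $\Rightarrow$ \textit{(1)} matches the paper's argument essentially verbatim (the paper phrases it via Lemma~\ref{lem : X / X rho} and Lemma~\ref{lem : deflator representation} rather than by pointing back into the theorem's proof, but the content is the same).

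For \textit{(1)} $\Rightarrow$ \textit{(2)} you take a genuinely different route from the paper. You re-enter the proof of Theorem~\ref{thm : optional decomposition} to extract the internal Doob--Meyer decomposition $X/X_{\rho} = x + M_{\widetilde{\eta}} - B$ with $K = \int_0^{\cdot} X_{\rho}(t)\,dB(t)$, and then kill $B$ directly: since $X/X_{\rho}$ is a local martingale by hypothesis, $B$ is a predictable finite-variation local martingale starting at zero, hence $B \equiv 0$ and so $K \equiv 0$. The paper instead treats Theorem~\ref{thm : optional decomposition} as a black box: from $X = X(\cdot;x,\vartheta) - K$ and the choice $Y = 1/X_{\rho}$ it writes $YK = X(\cdot;x,\vartheta)/X_{\rho} - YX$, observes both terms on the right are local martingales, and then uses integration by parts to conclude that $\int_0^{\cdot} Y(t)\,dK(t)$ is a nondecreasing nonnegative local martingale, hence identically zero, forcing $K \equiv 0$. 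The paper's route is slightly cleaner in that it never reopens the theorem's proof and works for any $K \in \pazocal{K}$ without tracking how $K$ arose from $B$; your route has the virtue of making transparent that the vanishing of $K$ is nothing more than uniqueness in the Doob--Meyer decomposition of $X/X_{\rho}$. One small remark: your claim that $B$ is \emph{continuous} is not needed---$B$ is predictable by Doob--Meyer, and a predictable finite-variation local martingale is automatically constant---so the bookkeeping worry you flag at the end is in fact immaterial.
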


\smallskip

\begin{proof}
	We first assume \textit{(1)}; then $YX$ is a supermartingale for every $Y \in \pazocal{Y}^n$. From Theorem~\ref{thm : optional decomposition}, we have a decomposition \eqref{Eq : optional decomposition} for some $\vartheta \in \pazocal{I}(S) \cap \pazocal{T}(n)$ and $K \in \pazocal{K}$. In particular, if we take $Y = 1/X_{\rho}$, the reciprocal of the local martingale num\'eraire, we obtain
	\begin{equation*}
		YK = \frac{X(\cdot;x, \vartheta)}{X_{\rho}}-YX, 
	\end{equation*}
	with the notation in \eqref{def : investment}. Since the terms on the right-hand side are local martingales, $YK$ is a local martingale, and so is
	\begin{equation*}
		YK - \int_0^{\cdot}K(t-)dY(t) = \int_0^{\cdot} Y(t)dK(t).
	\end{equation*}
	However, the last integral is nondecreasing and is a supermartingale (as a non-negative local martingale), and therefore identically equal to zero. Thus, $K \equiv 0$ as $Y$ is positive, and the statement \textit{(2)} follows.
	
	\smallskip
	
	In order to show the reverse implication, we assume \textit{(2)}, then $X/X_{\rho}$ is a local martingale where $\rho$ is the local martingale num\'eraire portfolio among the top $n$ stocks, as before. From Lemma~\ref{lem : X / X rho}, $X/X_{\rho}$ can be cast as a stochastic integral with respect to the local martingale vector $\widetilde{M}$. Furthermore, from Lemma~\ref{lem : deflator representation}, every $Y \in \pazocal{Y}^n$ is of the form $Y = (1/X_{\rho})\pazocal{E}(L)$ for some local martingale $L$ satisfying $[L, \widetilde{M}_i]= 0$ for $i=1, \cdots, N$. Therefore, the product $YX = (X/X_{\rho})\pazocal{E}(L)$ of these two orthogonal local martingales is again a local martingale.
\end{proof}

\bigskip

\subsection{Entire market versus top $n$ market}

We present first the following result, which can be easily proven from the equivalence between the existence of supermartingale num\'eraire portfolio and the market viability.

\smallskip

\begin{thm}			\label{thm : superMG in whole market implies top market}
	The existence of a supermartingale num\'eraire portfolio in the whole market, implies the existence of supermartingale num\'eraire portfolio among the top $n$ stocks.
\end{thm}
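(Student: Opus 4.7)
My plan is to reduce the claim to the implication \textit{(2)} $\Rightarrow$ \textit{(3)} of Theorem~\ref{thm : main result}, by constructing a local martingale deflator among the top $n$ stocks directly out of the given whole-market supermartingale num\'eraire. This avoids any attempt to modify the portfolio $\rho$ itself into a portfolio among the top $n$ stocks.

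First I would invoke Proposition~\ref{prop : general supermartingale numeraire} to upgrade the given supermartingale num\'eraire portfolio $\rho \in \pazocal{I}(R)$ in the whole market to a \emph{local martingale} num\'eraire portfolio in the whole market. This produces $Y := 1/X_{\rho}$, which is strictly positive, satisfies $Y(0) = 1$, and has the property that $YX$ is a local martingale for every $X \in \pazocal{X}$. Next I would note the inclusion $\pazocal{X}^n \subseteq \pazocal{X}$, which is immediate from Definitions~\ref{Def : investment} and \ref{Def : investment among n}: every $\vartheta \in \pazocal{I}(S) \cap \pazocal{T}(n)$ belongs a fortiori to $\pazocal{I}(S)$, so any num\'eraire generated by investing only among the top $n$ stocks is in particular a num\'eraire in the whole market. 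Consequently $YX$ is a local martingale for every $X \in \pazocal{X}^n$, and by Definition~\ref{Def : deflator} this means $Y \in \pazocal{Y}^n$, so that $\pazocal{Y}^n \neq \emptyset$. The implication \textit{(2)} $\Rightarrow$ \textit{(3)} of Theorem~\ref{thm : main result} then delivers a supermartingale num\'eraire portfolio among the top $n$ stocks, which completes the proof.

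There is no real obstacle; the whole content is the inclusion $\pazocal{X}^n \subseteq \pazocal{X}$, and the rest is citation. The only thing worth flagging is that one should \emph{not} try to transfer $\rho$ itself into a portfolio in $\pazocal{I}(R) \cap \pazocal{T}(n)$, for instance via $D\rho$ with the projection matrix $D$ of \eqref{def : diagonal matrix D}, since the structural identity $\alpha = c\rho$ of \eqref{Eq : general equiv statement (3)} does not automatically descend to $\widetilde{\alpha} = \widetilde{c}(D\rho)$: the cross term $Dc(I-D)\rho$ need not vanish. Routing the argument through the deflator $Y \in \pazocal{Y}^n$ bypasses this difficulty entirely and keeps the proof at one line beyond the already-established Theorem~\ref{thm : main result}.
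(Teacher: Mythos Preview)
Your proof is correct and follows essentially the same strategy as the paper: both arguments exploit the inclusion of top-$n$ investment strategies in whole-market strategies and then invoke the equivalences of Theorem~\ref{thm : main result}. The only cosmetic difference is the entry point into that theorem---the paper passes through viability (statement \textit{(1)}) via the inequality $x(K) \leq x^n(K)$, whereas you pass through the deflator (statement \textit{(2)}) via $\pazocal{X}^n \subseteq \pazocal{X}$---but these are two sides of the same inclusion.
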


\begin{proof}
	From Theorem 2.34 of \cite{KK2}, the existence of a supermartingale num\'eraire portfolio in the whole market, is equivalent to the viability of the whole market. The viability of the whole market implies the viability of the market consisting of the top $n$ stocks, thanks to the inequality $0 \leq x(K) \leq x^n(K)$ in Definition~\ref{Def : superheding}. We conclude that there exists a supermartingale num\'eraire portfolio among the top $n$ stocks, from Theorem~\ref{thm : main result}.
\end{proof}

\smallskip

Theorem~\ref{thm : superMG in whole market implies top market} shows that the viability of the entire market, composed of $N$ stocks, implies the viability of the `top $n$ market'. Thus, if the entire market is viable, there exist both a supermartingale num\'eraire portfolio for the whole market, and a supermartingale num\'eraire portfolio among the top $n$ stocks, and the former dominates the latter in the sense of growth-optimality. In the following proposition, we study this dominance by expressing the asymptotic behavior of log-relative wealth process between these two portfolios in terms of the `local characteristics' of the market. We first need the following definitions which are similar to those in \eqref{def : maximal growth rate}-\eqref{def : aggregate maximal growth}.

\smallskip

We call a $[0, \infty]$-valued, predictable process
\begin{equation}			\label{def : maximal growth rate of whole market}
g := \sup_{p \in \mathbb{R}^N} \Big( p' \alpha - \frac{1}{2}p'cp \Big)
\end{equation}
\textit{maximal growth rate} achievable in the whole market. This process can be rewritten in the form
\begin{equation}			\label{Eq : maximal growth rate of whole market}
g = \frac{1}{2}\big(\alpha'c^{\dagger}\alpha\big)\bm{1}_{\{\alpha \in \textbf{range}(c)\}} + \infty \bm{1}_{\{\alpha \notin \textbf{range}(c)\}},
\end{equation}
and the supremum of \eqref{def : maximal growth rate of whole market} is attained if and only if $g < \infty$, at $p \equiv \rho := c^{\dagger}\alpha$, i.e., when $\rho$ is the supermartingale num\'eraire portfolio of whole market. Here, $c^{\dagger}$ is the `pseudo-inverse' of $c$, defined as in \eqref{def : tilde c dagger}. Then, the viability of the whole market can be shown to be equivalent to the condition
\begin{equation}		\label{con : structural condition of whole market}
G(T) := \int_0^{T} g(t) dO(t) < \infty, \qquad \text{for all} \quad T \geq 0.
\end{equation}
Here, the adapted nondecreasing process $G$ is called as \textit{aggregate maximal growth of whole market}.

\smallskip

When the whole market is viable, the growth rates $\widetilde{g}$ of \eqref{Eq : maximal growth rate} and $g$ of \eqref{Eq : maximal growth rate of whole market} have simpler forms
\begin{equation}			\label{Eq : tilde g and g}
\widetilde{g} = \frac{1}{2}\widetilde{\alpha}'\widetilde{c}\,^{\dagger}\widetilde{\alpha} = \gamma_{\widetilde{\rho}}, \qquad
g = \frac{1}{2} \alpha' c^{\dagger} \alpha = \gamma_{\rho},
\end{equation}
respectively, as from \eqref{Eq : gamma of superMG portfolio}, with $\widetilde{\rho} := D\widetilde{c}\,^{\dagger}\widetilde{\alpha}$ the supermartingale num\'eraire portfolio among the top $n$ stocks, and $\rho := c^{\dagger}\alpha$ the supermartingale num\'eraire portfolio for the whole market. We denote the \textit{difference of aggregate maximal growth between the whole market and the top $n$ market} by
\begin{equation}		\label{def : diff of aggregate maximal growth}
\pazocal{G} := G - \widetilde{G}
= \int_0^{\cdot} \big( g(t)-\widetilde{g}(t) \big) dO(t) 
= \int_0^{\cdot} \big( \gamma_{\rho}(t)-\gamma_{\widetilde{\rho}}(t) \big) dO(t)
= \Gamma_{\rho} - \Gamma_{\widetilde{\rho}}.
\end{equation}
Since $\rho$ is also a growth-optimal portfolio (as a supermartingale num\'eraire portfolio) in the whole market, the relative growth $\Gamma^{\rho}_{\widetilde{\rho}} = \Gamma_{\widetilde{\rho}} - \Gamma_{\rho}$ of Definition~\ref{Def : relative growth} is non-increasing, from which we conclude that $\pazocal{G}$ is nondecreasing and nonnegative. 

\medskip

\begin{prop}
	Suppose that the whole market is viable and let $\rho$ and $\widetilde{\rho}$ be the supermartingale num\'eraire portfolio for the whole market and the supermartingale num\'eraire portfolio among the top $n$ stocks, respectively. Then, the asymptotic growth rate of the log-relative wealth process $\log X^{\rho}_{\widetilde{\rho}}$ is the same as $-\pazocal{G}$ of \eqref{def : diff of aggregate maximal growth}, namely:
	\begin{equation}		\label{Eq : asymptotic rate of log-relative wealth}
	\lim_{T \rightarrow \infty} \frac{1}{\pazocal{G}(T)}\log \bigg( \frac{X_{\widetilde{\rho}}(T)}{X_{\rho}(T)} \bigg) = -1 \quad \text{holds} \quad \mathbb{P}-\text{a.e. on the set} \quad \{ \lim_{T \rightarrow \infty} \pazocal{G}(T) = \infty \}.
	\end{equation}
\end{prop}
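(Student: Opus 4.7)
The plan is to reduce the claim to a strong law of large numbers statement for a continuous local martingale, after an exact computation of its quadratic variation. First I would decompose the log-relative wealth in its finite-variation and local-martingale parts, using \eqref{def : log of wealth}:
\begin{equation*}
\log\frac{X_{\widetilde{\rho}}}{X_{\rho}} = \big(\Gamma_{\widetilde{\rho}} - \Gamma_{\rho}\big) + \big(M_{\widetilde{\rho}} - M_{\rho}\big) = -\pazocal{G} + L, \qquad \text{with} \quad L := M_{\widetilde{\rho}} - M_{\rho}.
\end{equation*}
Dividing by $\pazocal{G}(T)$, the claim \eqref{Eq : asymptotic rate of log-relative wealth} reduces to showing that $L(T)/\pazocal{G}(T) \to 0$ $\mathbb{P}$-a.e.\ on $\{\pazocal{G}(\infty) = \infty\}$.

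The key step is to compute $\langle L \rangle$ in terms of $\pazocal{G}$. From $L = \int_0^\cdot (\widetilde{\rho} - \rho)'(t)\,dM(t)$, we have $\langle L \rangle = C_{\widetilde{\rho}\widetilde{\rho}} - 2C_{\widetilde{\rho}\rho} + C_{\rho\rho}$. Using the supermartingale num\'eraire identity $\alpha = c\rho$ from \eqref{Eq : general equiv statement (3)} together with $\widetilde{\rho} \in \pazocal{T}(n)$ (so $\widetilde{\rho}'D = \widetilde{\rho}'$) and \eqref{Eq : tilde alpha, c with D}, one obtains
\begin{equation*}
c_{\widetilde{\rho}\rho} = \widetilde{\rho}'c\rho = \widetilde{\rho}'\alpha = \widetilde{\rho}'D\alpha = \widetilde{\rho}'\widetilde{\alpha},
\end{equation*}
and since $\widetilde{\alpha} = \widetilde{c}\widetilde{\rho}$ by \eqref{Eq : equiv statement (3)}, this last quantity equals $\widetilde{\rho}'\widetilde{c}\widetilde{\rho} = c_{\widetilde{\rho}\widetilde{\rho}}$. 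Integrating against $dO$ yields $C_{\widetilde{\rho}\rho} = C_{\widetilde{\rho}\widetilde{\rho}}$, so $\langle L \rangle = C_{\rho\rho} - C_{\widetilde{\rho}\widetilde{\rho}} = 2\Gamma_\rho - 2\Gamma_{\widetilde{\rho}} = 2\pazocal{G}$, invoking Lemma~\ref{lem : locally finite growth} applied to both the whole market and the top $n$ market.

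It remains to conclude the asymptotic statement. By Dambis--Dubins--Schwarz (Theorem~3.4.6 of \cite{KS1}), as used already in the proof of Theorem~\ref{thm : main result}, there is a Brownian motion $W$ on a possibly enlarged probability space such that $L = W(\langle L \rangle) = W(2\pazocal{G})$. On the set $\{\pazocal{G}(\infty) = \infty\}$, the strong law of large numbers for Brownian motion gives
\begin{equation*}
\lim_{T \to \infty} \frac{L(T)}{\pazocal{G}(T)} = \lim_{T \to \infty} \frac{W\big(2\pazocal{G}(T)\big)}{\pazocal{G}(T)} = 0, \qquad \mathbb{P}\text{-a.e.},
\end{equation*}
which, combined with the decomposition above, yields \eqref{Eq : asymptotic rate of log-relative wealth}.

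The main technical obstacle is the exact identity $\langle L \rangle = 2\pazocal{G}$; it would not be obvious a priori that the ``cross-covariation'' $C_{\widetilde{\rho}\rho}$ collapses onto $C_{\widetilde{\rho}\widetilde{\rho}}$, and this hinges on using simultaneously the two supermartingale num\'eraire characterizations $\alpha = c\rho$ and $\widetilde{\alpha} = \widetilde{c}\widetilde{\rho}$, together with the projection property $\widetilde{\rho}'D = \widetilde{\rho}'$. Once this identity is in hand, the remainder is a routine Dambis--Dubins--Schwarz plus SLLN argument.
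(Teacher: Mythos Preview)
Your proposal is correct and follows essentially the same route as the paper: decompose $\log(X_{\widetilde{\rho}}/X_{\rho})$ into $-\pazocal{G}$ plus a continuous local martingale, identify the quadratic variation of that local martingale as $2\pazocal{G}$, and conclude via Dambis--Dubins--Schwarz and the strong law of large numbers for Brownian motion. The only cosmetic difference is that the paper computes $(\widetilde{\rho}-\rho)'c(\widetilde{\rho}-\rho)$ by plugging in the explicit pseudoinverse forms $\rho = c^{\dagger}\alpha$, $\widetilde{\rho} = D\widetilde{c}\,^{\dagger}\widetilde{\alpha}$, whereas you obtain the identity $C_{\widetilde{\rho}\rho} = C_{\widetilde{\rho}\widetilde{\rho}}$ directly from the two num\'eraire characterizations $\alpha = c\rho$, $\widetilde{\alpha} = \widetilde{c}\widetilde{\rho}$ together with $\widetilde{\rho}'D = \widetilde{\rho}'$; this is arguably a bit cleaner but amounts to the same computation.
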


\begin{proof}
	We recall the notations \eqref{def : log of wealth}-\eqref{def : A pi, Gamma pi} and write for $T \geq 0$, 
	\begin{equation}		\label{Eq : relative log between two numeraires}
	\log X^{\rho}_{\widetilde{\rho}}(T) = \log X_{\widetilde{\rho}}(T) - \log X_{\rho}(T)
	= \int_0^{T} \big( \gamma_{\widetilde{\rho}}(t)-\gamma_{\rho}(t) \big) dO(t) + \int_0^{T} \big( \widetilde{\rho}(t)-\rho(t) \big)' dM(t).
	\end{equation}
	The first integral on the right-hand side is just $-\pazocal{G}(T)$ of \eqref{def : diff of aggregate maximal growth} and it can be rewritten as
	\begin{equation}			\label{Eq : - diff}
	-\pazocal{G}(T) = \int_0^{T} \big( \gamma_{\widetilde{\rho}}(t)-\gamma_{\rho}(t) \big) dO(t)
	= \frac{1}{2} \int_0^{T} \big( \widetilde{\alpha}'\widetilde{c}\,^{\dagger}\widetilde{\alpha} - \alpha'c^{\dagger}\alpha \big)(t) dO(t)
	\end{equation}
	from \eqref{Eq : tilde g and g}.
	On the other hand, from $\rho = c^{\dagger}\alpha$ and $\widetilde{\rho} = D\widetilde{c}\,^{\dagger}\widetilde{\alpha}$, we obtain series of equations as in \eqref{Eq : rho in terms of alpha}:
	\begin{equation*}
	(\widetilde{\rho}-\rho)'c(\widetilde{\rho}-\rho)
	= \widetilde{\rho}'c\widetilde{\rho} + \rho'c\rho - \widetilde{\rho}'c\rho - \rho'c\widetilde{\rho}
	= \widetilde{\alpha}'\widetilde{c}\,^{\dagger}\widetilde{\alpha} + \alpha'c^{\dagger}\alpha - 2\widetilde{\rho}'c\rho,
	\end{equation*}
	as well as
	\begin{equation*}
	\widetilde{\rho}'c\rho = \widetilde{\rho}'cc^{\dagger}{\alpha} = \widetilde{\rho}'\alpha 
	= \widetilde{\alpha}'(\widetilde{c}\,^{\dagger})'D\alpha = \widetilde{\alpha}'(\widetilde{c}\,^{\dagger})'\widetilde{\alpha}.
	\end{equation*}
	Combining these equations, we have
	\begin{equation*}
	(\widetilde{\rho}-\rho)'c(\widetilde{\rho}-\rho) = \alpha'c^{\dagger}\alpha - \widetilde{\alpha}'\widetilde{c}\,^{\dagger}\widetilde{\alpha}.
	\end{equation*}
	Thus, the quadratic variation of the last integral on the right hand-side of \eqref{Eq : relative log between two numeraires} is written as
	\begin{align*}
	\bigg[ \int_0^{T} \big( \widetilde{\rho}(t)-\rho(t) \big)' dM(t) \bigg]
	&= \int_0^{T} (\widetilde{\rho}-\rho)'c(\widetilde{\rho}-\rho)(t) dO(t)
	\\
	&= \int_0^{T} (\alpha'c^{\dagger}\alpha - \widetilde{\alpha}'\widetilde{c}\,^{\dagger}\widetilde{\alpha})(t) dO(t) = 2\pazocal{G}(T).
	\end{align*}
	The Dambis-Dubins-Schwarz representation (Theorem~3.4.6, Problem~3.4.7 of \cite{KS1}) with the scaling property of Brownian motion implies that there exists a Brownian motion $W$, on a possibly enlarged filtered probability space, such that
	\begin{equation}		\label{Eq : DDS representation2}
	\log X^{\rho}_{\widetilde{\rho}}(T) = -\pazocal{G}(T) + \sqrt{2}W\big( \pazocal{G}(T) \big).
	\end{equation}
	The strong law of large numbers for Brownian motion gives the result \eqref{Eq : asymptotic rate of log-relative wealth}.
\end{proof}

\smallskip

The expression \eqref{Eq : - diff} shows that the asymptotic growth rate of the log-relative wealth process $\log X^{\rho}_{\widetilde{\rho}}$ is expressed in terms of `local characteristics' of the market: $\alpha, \widetilde{\alpha}, c$, and $\widetilde{c}$.

\bigskip

\section{Stock Portfolios in Open Markets}
\label{sec: stock portfolio and FGP}

The open market described in the previous section consists of the top $n$ stocks and the money market. The existence of this money market gives us a flexibility to construct portfolios among top $n$ stocks. To be more specific, for any given portfolio $\pi \in \pazocal{I}(R)$, multiplying the diagonal matrix $D$ of \eqref{def : diagonal matrix D} transforms it into a new portfolio $D\pi$ among top $n$ stocks. The proportion of assets, which is supposed to be invested in `bottom' $N-n$ stocks by $\pi$, is now assigned to the money market by $D\pi$. In the absence of the money market, building portfolios among top $n$ stocks is more subtle, and this section focuses on these subtleties.

\subsection{Stock portfolios and the market portfolio}

An important subclass of portfolios in Definition~\ref{Def : portfolio} is the collection of portfolios $\pi$ satisfying $\sum_{i=1}^N \pi_i \equiv 1$, or $\pi_0 \equiv 0$ in \eqref{def : money proportion}. Such a portfolio never invests in the money market; and this condition can be formulated as $\pi \in \Delta^{N-1}$, where we denote
\begin{equation*}
	\Delta^{N-1} := \Big\{ (x_1, \cdots, x_N) \in \mathbb{R}^N ~\big|~ \sum_{i=1}^N x_i = 1 \Big\}.
\end{equation*}

\smallskip

\begin{defn} [Stock Portfolio]		\label{Def : stock portfolio}
	We call a portfolio $\pi \in \pazocal{I}(R)$ \textit{stock portfolio}, if it takes values in $\Delta^{N-1}$, i.e., satisfies $\sum_{i=1}^N \pi_i \equiv 1$. We denote the collection of stock portfolios by $\pazocal{I}(R) \cap \Delta^{N-1}$.
	
	\smallskip
	
	We call a stock portfolio $\pi$ \textit{stock portfolio among the top $n$ stocks}, if in addition it belongs to $\pazocal{T}(n)$, i.e., satisfies the condition \eqref{con : portfolio among n}, or equivalently, \eqref{con : equiv portfolio among n}. We denote the collection of stock portfolios among the top $n$ stocks by $\pazocal{I}(R) \cap \Delta^{N-1} \cap \pazocal{T}(n)$.
\end{defn}

\bigskip

\begin{rem}[Self-financibility of stock portfolios]
	For any stock portfolio $\pi$, we sum over \eqref{Eq : relationship2} for all indices $i = 1, \cdots, N$ to obtain
	\begin{equation*}
		1 \equiv \sum_{i=1}^N \pi_i(\cdot) = \frac{\sum_{i=1}^N S_i(\cdot)\vartheta_i(\cdot)}{X(\cdot; 1, \vartheta)},
	\end{equation*}
	and from \eqref{def : investment},
	\begin{equation*}
		X(\cdot; 1, \vartheta) 
		= 1 + \int_0^{\cdot} \sum_{i=1}^N \vartheta_i(t)dS_i(t) 
		= \sum_{i=1}^N \vartheta_i(\cdot)S_i(\cdot).
	\end{equation*}
	This last equation shows the `self-financing' property of the stock portfolios; (see Definition~2.1 of \cite{Karatzas:Ruf:2017}) the sum of product between the trading strategy $\vartheta_i$ and the stock price $S_i$ is equal to the sum of stochastic integrals of each trading strategy with respect to the corresponding stock price, along with the initial capital $1$, at any time $t \geq 0$. There are neither withdrawals nor infusions of capital from the money market; gains are re-invested, losses are absorbed.
\end{rem}

\bigskip

Before we present the most important example of stock portfolios, we introduce the notation
\begin{equation}		\label{def : Sigma}
	\Sigma := S_1 + \cdots + S_N,
\end{equation}
representing the total capitalization of whole equity market.

\smallskip

\begin{example} [Market portfolio]		\label{ex : market portfolio}
	Suppose that an investment strategy $\vartheta$ is given as $\vartheta \equiv \bm{1}/\Sigma(0) \equiv (1, 1, \cdots, 1)/\Sigma(0)$ with initial wealth $x = 1$. Then, its wealth process is just the total capitalization normalized by its initial value:
	\begin{equation}		\label{Eq : wealth of mu}
		X(\cdot; 1, \vartheta) = \frac{\Sigma(\cdot)}{\Sigma(0)}.
	\end{equation}
	Whereas, from \eqref{Eq : relationship2}, the corresponding portfolio $\pi \equiv \mu \equiv (\mu_1, \cdots, \mu_N)$ can be expressed as
	\begin{equation}		\label{def : market portfolio}
		\mu_i(\cdot) = \frac{S_i(\cdot)}{\Sigma(\cdot)} = \frac{S_i(\cdot)}{S_1(\cdot)+ \cdots +S_N(\cdot)}, \quad \text{for} \quad i = 1, \cdots, N.
	\end{equation}
	We call this special stock portfolio $\mu$ the \textit{market portfolio}, and its component processes in \eqref{def : market portfolio} \textit{market weights}; it is considered as the most important stock portfolio, as its wealth process gives the evolution of total market capitalization.
	
	\smallskip
	
	In an analogous manner, we define the \textit{top $n$ market portfolio}, which we denote by $\widetilde{\mu} \equiv (\widetilde{\mu}_1, \cdots, \widetilde{\mu}_N)$, with components
	\begin{equation}		\label{def : top n market portfolio}
		\widetilde{\mu}_i(\cdot) := \frac{\widetilde{S}_i(\cdot)}{\widetilde{\Sigma}(\cdot)}
		= \frac{\widetilde{S}_i(\cdot)}{\widetilde{S}_1(\cdot)+ \cdots +\widetilde{S}_N(\cdot)}, \quad \text{for} \quad i = 1, \cdots, N,
	\end{equation}
	where 
	\begin{equation}		\label{def : S tilde}
		\widetilde{\Sigma} := \sum_{i=1}^N \widetilde{S}_i = S_{(1)}+\cdots+S_{(n)}, \qquad \text{and} \qquad
		\widetilde{S}_i(\cdot) := \bm{1}_{\{u_i(\cdot) \leq n\}}S_i(\cdot), \qquad \text{for} \quad i = 1, \cdots, N.
	\end{equation}
	The denominator $\widetilde{\Sigma}$ of \eqref{def : top n market portfolio} represents the sum of the capitalizations of the top $n$ stocks; thus, $\widetilde{\mu}_i(t)$ is the proportion of the capitalization of stock $i$, if this stock belongs to the top $n$, to the total capitalization of the top $n$ stocks at time $t$. In other words, $\widetilde{\mu}_i$ can be interpreted as the `market weight' of $i$-th stock in the restricted market composed of the top $n$ stocks by capitalization. It is easy to check that $\widetilde{\mu}$ is a stock portfolio among the top $n$ stocks, i.e., $\widetilde{\mu} \in \pazocal{I}(R) \cap \Delta^{N-1} \cap \pazocal{T}(n)$.
\end{example}

\bigskip

\subsection{Capital Asset Pricing Model}

The Capital Asset Pricing Model assumes that individual stocks cannot systematically outperform the market. In our open market setting, this requirement can be cast as saying that each individual stock, whenever it belongs to the top $n$ stocks, cannot outperform the top $n$ market. In this subsection, we briefly discuss this model for the top $n$ market. Recalling the top $n$ stock portfolio $\widetilde{\mu}$ defined in \eqref{def : top n market portfolio}, we have the next definition.

\smallskip

\begin{defn} [CAPM]		\label{Def : CAPM}
	We say that the top $n$ market is in the realm of the \textit{Capital Asset Pricing Model~(CAPM)}, if
	\begin{equation}		\label{def : CAPM}
		\widetilde{R}_i = \int_0^{\cdot} \beta_i(t)dR_{\widetilde{\mu}}(t) + N_i, \qquad i = 1, \cdots, N,
	\end{equation}
	hold for appropriate processes $\beta_i \in \pazocal{I}(R_{\widetilde{\mu}})$, $i = 1, \cdots, N$, and for continuous local martingales $N_i$ with $N_i(0) = 0$ which are orthogonal to $R_{\widetilde{\mu}}$ for all $i = 1, \cdots, N$:
	\begin{equation*}
		[N_i, R_{\widetilde{\mu}}] \equiv 0.
	\end{equation*}
\end{defn}

\smallskip

The following proposition characterizes this property, in terms of the local characteristics of the top market introduced in Section~\ref{subsec : portfolio among the top n stocks}.

\smallskip

\begin{prop} [Characterization of CAPM]		\label{prop : CAPM}
	The top $n$ market is in the realm of the CAPM if, and only if, the following two conditions hold.
	\begin{enumerate} [(A)]
		\item There exists a scalar ``leverage'' predictable process $b$ such that
		\begin{equation}		\label{con : A-1}
			\sum_{i=1}^N \int_0^T |b(t)|\bm{1}_{\{c_{\widetilde{\mu}\widetilde{\mu}}>0\}}|dC\,_{\widetilde{i}{\widetilde{\mu}}}(t)| < \infty, \qquad \text{for} \quad T \geq 0,
		\end{equation}
		and the equalities hold $(\mathbb{P}\otimes O)$-a.e.:
		\begin{equation}		\label{con : A-2}
			\widetilde{\alpha}_i = bc\,_{\widetilde{i}{\widetilde{\mu}}}, \quad \text{on} \quad \{ c_{\widetilde{\mu}\widetilde{\mu}} > 0 \} \quad \text{for} \quad i = 1, \cdots, N.
		\end{equation}

		\item On the set $\{ c_{\widetilde{\mu}\widetilde{\mu}} = 0 \}$, we have $(\mathbb{P}\otimes O)$-a.e.:
		\begin{equation}
			\alpha_{\widetilde{\mu}} = 0 \quad \Longleftrightarrow \quad \widetilde{\alpha}_i = 0, \qquad i = 1, \cdots, N.
		\end{equation}
	\end{enumerate}
	When these conditions are satisfied, the process $b$ of \eqref{con : A-1} and the processes $\beta_i \in \pazocal{I}(R_{\widetilde{\mu}})$ of \eqref{def : CAPM} can be chosen, respectively, as
	\begin{equation}		\label{def : b}
		b = \frac{\alpha_{\widetilde{\mu}}}{c_{\widetilde{\mu}\widetilde{\mu}}}\bm{1}_{\{c_{\widetilde{\mu}\widetilde{\mu}}>0\}},
	\end{equation}
	\begin{equation}		\label{def : beta}
		\beta_i = \frac{c\,_{\widetilde{i}\widetilde{\mu}}}{c_{\widetilde{\mu}\widetilde{\mu}}}\bm{1}_{\{c_{\widetilde{\mu}\widetilde{\mu}}>0\}}+\frac{\widetilde{\alpha}_i}{\alpha_{\widetilde{\mu}}}\bm{1}_{\{c_{\widetilde{\mu}\widetilde{\mu}}=0,~ \alpha_{\widetilde{\mu}} \neq 0\}}, \qquad i = 1, \cdots, N.
	\end{equation}
\end{prop}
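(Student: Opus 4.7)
The plan is a regression-style argument, matching the semimartingale decompositions on both sides of \eqref{def : CAPM}. Writing $\widetilde{R}_i = \widetilde{A}_i + \widetilde{M}_i$ and $R_{\widetilde{\mu}} = A_{\widetilde{\mu}} + M_{\widetilde{\mu}}$, the identity \eqref{def : CAPM} splits into a finite-variation constraint $\widetilde{\alpha}_i = \beta_i \alpha_{\widetilde{\mu}}$ in the $(\mathbb{P}\otimes O)$-a.e.\ sense (which forces $N_i = \widetilde{M}_i - \int_0^{\cdot}\beta_i\,dM_{\widetilde{\mu}}$), together with the orthogonality $[N_i, R_{\widetilde{\mu}}] = [N_i, M_{\widetilde{\mu}}] \equiv 0$, which, after differentiating with respect to $O$, becomes $c\,_{\widetilde{i}\widetilde{\mu}} = \beta_i c_{\widetilde{\mu}\widetilde{\mu}}$ also in the $(\mathbb{P}\otimes O)$-a.e.\ sense. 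The whole proof is then a case analysis on the complementary sets $\{c_{\widetilde{\mu}\widetilde{\mu}} > 0\}$ and $\{c_{\widetilde{\mu}\widetilde{\mu}} = 0\}$.

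For necessity, I combine these two pointwise algebraic relations. On $\{c_{\widetilde{\mu}\widetilde{\mu}} > 0\}$ the covariance relation forces $\beta_i = c\,_{\widetilde{i}\widetilde{\mu}}/c_{\widetilde{\mu}\widetilde{\mu}}$, and substitution into the drift relation produces $\widetilde{\alpha}_i = b\, c\,_{\widetilde{i}\widetilde{\mu}}$ with $b$ as in \eqref{def : b}; the integrability \eqref{con : A-1} follows from $\int_0^T |b\,dC\,_{\widetilde{i}\widetilde{\mu}}| = \int_0^T |\widetilde{\alpha}_i|\,dO < \infty$, the last finiteness being a consequence of $\widetilde{A}_i$ having locally finite variation. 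On $\{c_{\widetilde{\mu}\widetilde{\mu}} = 0\}$ the Kunita-Watanabe inequality $|c\,_{\widetilde{i}\widetilde{\mu}}|^2 \leq \widetilde{c}_{i, i}\, c_{\widetilde{\mu}\widetilde{\mu}}$ makes the covariance constraint automatic, while the surviving drift constraint $\widetilde{\alpha}_i = \beta_i\alpha_{\widetilde{\mu}}$ forces $\widetilde{\alpha}_i = 0$ whenever $\alpha_{\widetilde{\mu}} = 0$; the reverse implication in (B), namely $\widetilde{\alpha}_i \equiv 0 \Rightarrow \alpha_{\widetilde{\mu}} = 0$, is automatic from $\alpha_{\widetilde{\mu}} = \widetilde{\mu}'\widetilde{\alpha}$ via \eqref{Eq : alpha pi, gamma pi}.

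For sufficiency, I take $\beta_i$ to be the explicit predictable process of \eqref{def : beta} and verify both $\beta_i \in \pazocal{I}(R_{\widetilde{\mu}})$ and the representation \eqref{def : CAPM}. The integrability criterion of Lemma~\ref{lem : integrability condition}, adapted to the scalar semimartingale $R_{\widetilde{\mu}}$, amounts to local $O$-integrability of $|\beta_i\alpha_{\widetilde{\mu}}|$ and $\beta_i^2 c_{\widetilde{\mu}\widetilde{\mu}}$: on $\{c_{\widetilde{\mu}\widetilde{\mu}} > 0\}$ the quadratic term equals $c\,_{\widetilde{i}\widetilde{\mu}}^2/c_{\widetilde{\mu}\widetilde{\mu}} \leq \widetilde{c}_{i, i}$ by Kunita-Watanabe, and the drift term is $|b\, c\,_{\widetilde{i}\widetilde{\mu}}|$ which is handled by \eqref{con : A-1}; on $\{c_{\widetilde{\mu}\widetilde{\mu}} = 0\}$ the quadratic term vanishes identically and $|\beta_i\alpha_{\widetilde{\mu}}| = |\widetilde{\alpha}_i|$ is locally $O$-integrable. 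Setting $N_i := \widetilde{R}_i - \int_0^{\cdot}\beta_i\,dR_{\widetilde{\mu}}$, its finite-variation part $\int_0^{\cdot}(\widetilde{\alpha}_i - \beta_i\alpha_{\widetilde{\mu}})\,dO$ vanishes by \eqref{con : A-2} on $\{c_{\widetilde{\mu}\widetilde{\mu}} > 0\}$, by the definition of $\beta_i$ on $\{c_{\widetilde{\mu}\widetilde{\mu}} = 0,\ \alpha_{\widetilde{\mu}} \neq 0\}$, and by condition (B) on $\{c_{\widetilde{\mu}\widetilde{\mu}} = 0,\ \alpha_{\widetilde{\mu}} = 0\}$; while $d[N_i, R_{\widetilde{\mu}}]/dO = c\,_{\widetilde{i}\widetilde{\mu}} - \beta_i c_{\widetilde{\mu}\widetilde{\mu}}$ vanishes on both sets, the latter once more via Kunita-Watanabe.

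The main subtlety will be the bookkeeping on the degenerate set $\{c_{\widetilde{\mu}\widetilde{\mu}} = 0\}$: choosing the value of $\beta_i$ according to whether $\alpha_{\widetilde{\mu}}$ vanishes, and recognising that condition (B) is \emph{precisely} what is needed to kill the residual drift in $\widetilde{R}_i$ once the martingale part of $R_{\widetilde{\mu}}$ has been rendered inactive by degeneracy. Condition \eqref{con : A-1} plays the analogous role for integrability on the non-degenerate set, ensuring that $\int_0^{\cdot} b\,dC\,_{\widetilde{i}\widetilde{\mu}}$ is a well-defined finite-variation process. Everything else is essentially routine stochastic calculus.
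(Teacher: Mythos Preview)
Your proposal is correct and follows essentially the same route as the paper: match the finite-variation and local-martingale parts of \eqref{def : CAPM} to obtain the pointwise relations $\widetilde{\alpha}_i = \beta_i\alpha_{\widetilde{\mu}}$ and $c\,_{\widetilde{i}\widetilde{\mu}} = \beta_i c_{\widetilde{\mu}\widetilde{\mu}}$, then perform a case analysis on $\{c_{\widetilde{\mu}\widetilde{\mu}} > 0\}$ versus $\{c_{\widetilde{\mu}\widetilde{\mu}} = 0\}$. The only cosmetic differences are that you invoke Kunita--Watanabe explicitly to handle $c\,_{\widetilde{i}\widetilde{\mu}} = 0$ on the degenerate set (the paper states this without justification), and that in the sufficiency direction the paper writes out $\int_0^{\cdot}\beta_i\,dR_{\widetilde{\mu}}$ as a sum of integrals against $\widetilde{A}$ and $\widetilde{M}$ and simplifies each piece, whereas you phrase the same computation as ``the finite-variation part of $N_i$ vanishes''; the underlying identities are identical.
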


\smallskip

\begin{proof}
	Let us assume first that the top $n$ market is in the realm of the CAPM. Recalling the notation \eqref{Eq : C tilde i rho}, we have
	\begin{equation*}
		C\,_{\widetilde{i}{\widetilde{\mu}}} = [\widetilde{R}_i, R_{\widetilde{\mu}}] = \int_0^{\cdot} \beta_i(t)d[R_{\widetilde{\mu}}, R_{\widetilde{\mu}}](t) + [N_i, R_{\widetilde{\mu}}] = \int_0^{\cdot} \beta_i(t) dC_{\widetilde{\mu}{\widetilde{\mu}}}(t),
	\end{equation*}
	which implies that $c\,_{\widetilde{i}\widetilde{\mu}} = \beta_i c_{\widetilde{\mu}\widetilde{\mu}}$ also hold $(\mathbb{P} \otimes O)$-a.e., for $i = 1, \cdots, N$. On $\{ c_{\widetilde{\mu}\widetilde{\mu}} > 0 \}$, it follows that $\beta_i = c\,_{\widetilde{i}\widetilde{\mu}}/c_{\widetilde{\mu}\widetilde{\mu}}$ for $i=1, \cdots, N$.	Moreover, since $\widetilde{R}_i - \int_0^{\cdot} \beta_i(t)dR_{\widetilde{\mu}}(t)$ is a local martingale, we obtain $\widetilde{A}_i = \int_0^{\cdot} \beta_i(t)dA_{\widetilde{\mu}}(t)$, and also $\widetilde{\alpha}_i = \beta_i \alpha_{\widetilde{\mu}}$ holds $(\mathbb{P}\otimes O)$-a.e. for $i = 1, \cdots, N$. As a consequence, the identities of \eqref{con : A-2}
	\begin{equation*}
		\widetilde{\alpha}_i = \frac{\alpha_{\widetilde{\mu}}}{c_{\widetilde{\mu}\widetilde{\mu}}}c\,_{\widetilde{i}\widetilde{\mu}} = bc\,_{\widetilde{i}\widetilde{\mu}}, \qquad \text{hold for} \quad i = 1, \cdots, N, \quad (\mathbb{P}\otimes O)-\text{a.e. on} \quad \{c_{\widetilde{\mu}\widetilde{\mu}}>0\},
	\end{equation*}
	with $b$ given as in \eqref{def : b}. Also, the $(\mathbb{P}\otimes O)$-a.e. identities $\widetilde{\alpha}_i = \beta_i \alpha_{\widetilde{\mu}}$, combined with $\alpha_{\widetilde{\mu}}=\widetilde{\mu}'\alpha$, lead to the condition (B). Finally, $b = \widetilde{\alpha}_i/c\,_{\widetilde{i}\widetilde{\mu}}$ on $\{ c_{\widetilde{\mu}\widetilde{\mu}} > 0, ~ c\,_{\widetilde{i}\widetilde{\mu}} \neq 0 \}$ implies that $|b|\bm{1}_{\{c_{\widetilde{\mu}\widetilde{\mu}}>0\}}|c\,_{\widetilde{i}\widetilde{\mu}}| \leq |\widetilde{\alpha}_i|$ hold for $i=1, \cdots, N$, and thus the condition \eqref{con : A-1}:
	\begin{equation*}
		\sum_{i=1}^N \int_0^T |b(t)|\bm{1}_{\{c_{\widetilde{\mu}\widetilde{\mu}}>0\}}|dC\,_{\widetilde{i}\widetilde{\mu}}(t)|
		\leq \sum_{i=1}^N \int_0^T |d\widetilde{A}_i(t)| < \infty, \qquad \text{for all} \quad T \geq 0.
	\end{equation*}
	
	\smallskip
	
	Conversely, suppose that the conditions (A) and (B) are valid. For $i = 1, \cdots, N$, defining $\beta_i$ via \eqref{def : beta}, we have
	\begin{equation*}
		\int_0^T |\beta_i(t)||dA_{\widetilde{\mu}}(t)| \leq \int_0^T |b(t)|\bm{1}_{\{ c_{\widetilde{\mu}\widetilde{\mu}}>0\}}|dC\,_{\widetilde{i}\widetilde{\mu}}(t)| + \int_0^T |d\widetilde{A}_i(t)| < \infty,
	\end{equation*}
	as well as
	\begin{equation*}
		\int_0^T |\beta_i(t)|^2 dC_{\widetilde{\mu}\widetilde{\mu}}(t)
		= \int_0^T \frac{|c\,_{\widetilde{i}\widetilde{\mu}}(t)|^2}{c_{\widetilde{\mu}\widetilde{\mu}}(t)} \bm{1}_{\{c_{\widetilde{\mu}\widetilde{\mu}}>0\}} dO(t)
		\leq \int_0^T c\,_{\widetilde{i}, \widetilde{i}}(t)dO(t) = \widetilde{C}_{i,i}(T) < \infty.
	\end{equation*}
	These inequalities imply that $\beta_i \in \pazocal{I}(R_{\widetilde{\mu}})$ for $i=1, \cdots, N$. Furthermore, recalling the semimartingale decomposition \eqref{Eq : decomposition of tilde R}, we observe that
	\begin{align}
		\int_0^{\cdot} \beta_i(t) dR_{\widetilde{\mu}}(t)							\nonumber
		&= \int_0^{\cdot} \beta_i(t) \widetilde{\mu}'(t) d\widetilde{A}(t) + \int_0^{\cdot} \beta_i(t) \widetilde{\mu}'(t) d\widetilde{M}(t)
		\\
		&= \int_0^{\cdot} \beta_i(t) \bm{1}_{\{c_{\widetilde{\mu}\widetilde{\mu}}(t)>0\}} b(t) dC_{\widetilde{\mu}{\widetilde{\mu}}}(t)
		+ \int_0^{\cdot} \beta_i(t) \bm{1}_{\{c_{\widetilde{\mu}\widetilde{\mu}}(t)=0\}} \widetilde{\mu}'(t) d\widetilde{A}(t)															\label{Eq : beta integrals}
		\\
		&~~~+ \int_0^{\cdot} \beta_i(t) \widetilde{\mu}'(t) d\widetilde{M}(t),			\nonumber
	\end{align}
	from \eqref{def : b}. The first two integrals on the right hand side of \eqref{Eq : beta integrals} can be expressed as
	\begin{equation*}
		\int_0^{\cdot} \beta_i(t) \bm{1}_{\{c_{\widetilde{\mu}\widetilde{\mu}}(t)>0\}} b(t) dC_{\widetilde{\mu}{\widetilde{\mu}}}(t)
		= \int_0^{\cdot} \bm{1}_{\{c_{\widetilde{\mu}\widetilde{\mu}}(t)>0\}} b(t) dC\,_{\widetilde{i}{\widetilde{\mu}}}(t)
		= \int_0^{\cdot} \bm{1}_{\{c_{\widetilde{\mu}\widetilde{\mu}}(t)>0\}} d\widetilde{A}_i(t),
	\end{equation*}
	and
	\begin{equation*}
		\int_0^{\cdot} \beta_i(t) \bm{1}_{\{c_{\widetilde{\mu}\widetilde{\mu}}(t)=0\}} \widetilde{\mu}'(t) d\widetilde{A}(t)
		= \int_0^{\cdot} \bm{1}_{\{c_{\widetilde{\mu}\widetilde{\mu}}(t)=0\}} d\widetilde{A}_i(t),
	\end{equation*}
	for $i=1, \cdots, N$, on account of \eqref{def : beta}. Thus, we obtain
	\begin{equation*}
		\int_0^{\cdot} \beta_i(t) dR_{\widetilde{\mu}}(t)
		= \widetilde{A}_i+\int_0^{\cdot} \beta_i(t) \widetilde{\mu}'(t) d\widetilde{M}(t)
		= \widetilde{R}_i - \int_0^{\cdot} \big(e^i - \beta_i(t) \widetilde{\mu}(t) \big)' d\widetilde{M}(t) =: \widetilde{R}_i - N_i,
	\end{equation*}
	which is \eqref{def : CAPM}, where we define $N_i = \int_0^{\cdot} \big(e^i - \beta_i(t) \widetilde{\mu}(t) \big)' d\widetilde{M}(t)$ for $i = 1, \cdots, N$. We observe that the identities $(e^i-\beta_i\widetilde{\mu})'\widetilde{c}\widetilde{\mu} = c\,_{\widetilde{i}\widetilde{\mu}}-\beta_ic_{\widetilde{\mu}\widetilde{\mu}} = 0$ hold on the set $\{ c_{\widetilde{\mu}\widetilde{\mu}} > 0 \}$ from the definition \eqref{def : beta}, as well as on the set $\{ c_{\widetilde{\mu}\widetilde{\mu}} = 0 \}$ since $c\,_{\widetilde{i}\widetilde{\mu}} = 0$ holds there. Finally, we obtain
	\begin{equation*}
		[N_i, R_{\widetilde{\mu}}] = \int_0^{\cdot} \big(e^i - \beta_i(t) \widetilde{\mu}(t) \big)' \widetilde{c}(t)\widetilde{\mu}(t) dO(t) \equiv 0, \qquad i = 1, \cdots, N,
	\end{equation*}
	which shows that the top $n$ market is in the realm of the CAPM.
\end{proof}

\bigskip

\subsection{Functional generation of portfolios}

Functionally generated portfolios were first introduced by \cite{F_generating}. Given a function $G: \Delta_+^{N-1} \rightarrow (0, \infty)$ of class $C^2$ with the notation
\begin{equation}		\label{def : simplex}
	\Delta_{+}^{N-1} := \Delta^{N-1} \cap \mathbb{R}^N_+ = \Big\{ (x_1, \cdots, x_N) \in \mathbb{R}^N ~\big|~ x_i \geq 0 ~~ \text{for} ~~ i = 1, \cdots, N, ~~ \sum_{i=1}^N x_i = 1 \Big\},
\end{equation}
we can generate a portfolio $\pi^G$ from $G$, depending on the vector of market weights $\mu$. The formula (11.2) of \cite{FK_survey}, colloquially known as the `master formula', gives a simply way to compare the relative wealth process of $\pi^G$ with respect to the `market', namely, the market portfolio $\mu$ (see, Chapter III of \cite{FK_survey} for an overview).

\smallskip

In what follows, we present a new way to generate portfolios from a function having the market portfolio among the top $n$ stocks $\widetilde{\mu}$ in \eqref{def : top n market portfolio} as its input. We also derive a new `master formula' to compare the wealth of the so-generated portfolio, relative to $\widetilde{\mu}$, the market portfolio among the top $n$ stocks.

\smallskip

For any stock portfolio $\pi \in \pazocal{I}(R) \cap \Delta^{N-1}$, we have $\pi_0 \equiv 0$~(no investing in the money market), thus
\begin{equation}		\label{Eq : relative wealth from prop}
	\frac{dX^{\widetilde{\mu}}_{\pi}(t)}{X^{\widetilde{\mu}}_{\pi}(t)}
	= \sum_{i=1}^N \pi_i(t) dR^{\widetilde{\mu}}_i(t)
	= \sum_{i=1}^N \frac{\pi_i(t)}{S^{\widetilde{\mu}}_i(t)}dS^{\widetilde{\mu}}_i(t),
\end{equation}
from \eqref{Eq : dynamics of relative wealth}. Here, we recall from \eqref{def : denominated stock price} that $S^{\widetilde{\mu}}$ is the vector of stock prices denominated by the wealth process $X^{\widetilde{\mu}}$ of the market portfolio among the top $n$ stocks.

\smallskip

We note at this point, that the market portfolio $\mu$ in Example~\ref{ex : market portfolio}, has a very nice property: the denominated stock price $S^{\mu}_i$ has a simple representation, namely $S^{\mu}_i(\cdot) = \Sigma(0)\mu_i(\cdot)$, for $i = 1, \cdots, N$. Thus, if we used $\mu$ instead of $\widetilde{\mu}$ in deriving \eqref{Eq : relative wealth from prop}, the last integrator would be rewritten as $dS^{\mu}_i(t) = \Sigma(0)d\mu_i(t)$. However, unlike $\mu$, the components of $\widetilde{\mu}$ in \eqref{def : top n market portfolio} do not admit such a simple representation. For this reason, we will use the denominated stock price $S^{\widetilde{\mu}}_i(t)$ as integrators, and will let the generating function $G$ depend on $S^{\widetilde{\mu}}_i(t)$ instead of $\widetilde{\mu}_i(t)$ in what follows. 

\smallskip

For a given function $G : (0, \infty)^N \rightarrow (0, \infty)$ of class $C^2$, we want to write the relative-log wealth as
\begin{equation}
	\log X^{\widetilde{\mu}}_{\pi}(t) = \log \bigg( \frac{G\big(S^{\widetilde{\mu}}(t)\big)}{G\big(S^{\widetilde{\mu}}(0)\big)} \bigg) + J^{\widetilde{\mu}}_{\pi}(t), \qquad \text{for any} \quad t \geq 0,
\end{equation}
for some function $J^{\widetilde{\mu}}_{\pi}(\cdot)$ of finite variation. In order to find $J^{\widetilde{\mu}}_{\pi}(\cdot)$, we apply It\^o's rule, to obtain
\begin{equation}		\label{Eq : relative wealth from Ito}
	\frac{dX^{\widetilde{\mu}}_{\pi}(t)}{X^{\widetilde{\mu}}_{\pi}(t)} = d J^{\widetilde{\mu}}_{\pi}(t) + \sum_{i=1}^N \frac{D_iG\big(S^{\widetilde{\mu}}(t)\big)}{G\big(S^{\widetilde{\mu}}(t)\big)}dS_i^{\widetilde{\mu}}(t) + \frac{1}{2}\sum_{i=1}^N \sum_{j=1}^N \frac{D_{i, j}^2 G\big(S^{\widetilde{\mu}}(t)\big)}{G\big(S^{\widetilde{\mu}}(t)\big)}d[ S^{\widetilde{\mu}}_i, S^{\widetilde{\mu}}_j](t).
\end{equation}
Comparing the two equations \eqref{Eq : relative wealth from prop} and \eqref{Eq : relative wealth from Ito}, suppose we can find a portfolio $\pi$ such that
\begin{equation}		\label{Eq : pi condition}
	\sum_{i=1}^N \frac{\pi_i(t)}{S^{\widetilde{\mu}}_i(t)}dS^{\widetilde{\mu}}_i(t) = \sum_{i=1}^N \frac{D_iG\big(S^{\widetilde{\mu}}(t)\big)}{G\big(S^{\widetilde{\mu}}(t)\big)}dS_i^{\widetilde{\mu}}(t),
\end{equation}
holds, then we have
\begin{equation*}
	d J^{\widetilde{\mu}}_{\pi}(t) = -\frac{1}{2}\sum_{i=1}^N \sum_{j=1}^N \frac{D_{i, j}^2 G\big(S^{\widetilde{\mu}}(t)\big)}{G\big(S^{\widetilde{\mu}}(t)\big)}d[ S^{\widetilde{\mu}}_i, S^{\widetilde{\mu}}_j](t).
\end{equation*}
Now, a candidate portfolio $\pi$ satisfying \eqref{Eq : pi condition}, is given as
\begin{equation*}
	\pi_i (t) = S^{\widetilde{\mu}}_i(t)\frac{D_iG\big(S^{\widetilde{\mu}}(t)\big)}{G\big(S^{\widetilde{\mu}}(t)\big)}, \qquad i = 1, \cdots, N;
\end{equation*}
but it need not belong to $\Delta^{N-1}$. Instead, we set
\begin{equation}		\label{def : FGP}
	\pi_i(t) :=
	S^{\widetilde{\mu}}_i(t)\frac{D_iG\big(S^{\widetilde{\mu}}(t)\big)}{G\big(S^{\widetilde{\mu}}(t)\big)}
	+\widetilde{\mu}_i(t) - \widetilde{\mu}_i(t)\sum_{j=1}^N S^{\widetilde{\mu}}_j(t)\frac{D_jG\big(S^{\widetilde{\mu}}(t)\big)}{G\big(S^{\widetilde{\mu}}(t)\big)}, \qquad i = 1, \cdots, N,
\end{equation}
then it is easy to show that $\pi \in \Delta^{N-1}$. To check the condition \eqref{Eq : pi condition}, we note
\begin{equation*}
	\sum_{i=1}^N \frac{\pi_i(t)}{S^{\widetilde{\mu}}_i(t)}dS^{\widetilde{\mu}}_i(t)
	= \sum_{i=1}^N \frac{D_iG\big(S^{\widetilde{\mu}}(t)\big)}{G\big(S^{\widetilde{\mu}}(t)\big)}dS_i^{\widetilde{\mu}}(t)
	+ \Big\{ 1-\sum_{j=1}^N S^{\widetilde{\mu}}_j(t)\frac{D_jG\big(S^{\widetilde{\mu}}(t)\big)}{G\big(S^{\widetilde{\mu}}(t)\big)}\Big\} \sum_{i=1}^N \frac{\widetilde{\mu}_i(t)}{S^{\widetilde{\mu}}_i(t)}dS^{\widetilde{\mu}}_i(t),
\end{equation*}
and the last term vanishes because
\begin{equation*}
	\sum_{i=1}^N \frac{\widetilde{\mu}_i(t)}{S^{\widetilde{\mu}}_i(t)}dS^{\widetilde{\mu}}_i(t) = \sum_{i=1}^N \widetilde{\mu}_i(t)dR^{\widetilde{\mu}}_i(t) = dR^{\widetilde{\mu}}_{\widetilde{\mu}}(t) = 0.
\end{equation*}
Here, $R^{\widetilde{\mu}}_{\widetilde{\mu}} = \pazocal{L}(X^{\widetilde{\mu}}_{\widetilde{\mu}}) = \pazocal{L}(1) \equiv 0$, from Proposition~\ref{prop : relative wealth}.

\smallskip

The construction described above can be formulated as the following definition and proposition.

\smallskip

\begin{defn} [Functionally generated portfolio] \label{Def : FGP}
	Let $G : (0, \infty)^N \rightarrow (0, \infty)$ be a twice continuously differentiable function. Then, the vector $\pi^G \equiv \pi = (\pi_1, \cdots, \pi_N)$ defined as in \eqref{def : FGP} is called the \textit{stock portfolio generated by the function $G$ via the market portfolio among the top $n$ stocks}.
\end{defn}

\smallskip

\begin{prop} [Master Formula] \label{prop : master formula}
	For the stock portfolio $\pi^G$ generated by $G$ with the market portfolio among the top $n$ stocks, we have the decomposition
	\begin{equation}		\label{Eq : master formula}
		\log \bigg(\frac{X_{\pi^G}}{X_{\widetilde{\mu}}}\bigg) = \log \bigg(\frac{G \big(S^{\widetilde{\mu}}\big)}{G \big(S^{\widetilde{\mu}}(0)\big)}\bigg) -\frac{1}{2}\sum_{i=1}^N \sum_{j=1}^N \int_0^{\cdot}\frac{D_{i, j}^2 G\big(S^{\widetilde{\mu}}(t)\big)}{G\big(S^{\widetilde{\mu}}(t)\big)}d[ S^{\widetilde{\mu}}_i, S^{\widetilde{\mu}}_j](t).
	\end{equation}
\end{prop}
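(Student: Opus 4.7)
The plan is to verify \eqref{Eq : master formula} by comparing initial values and stochastic differentials on both sides. At $t=0$, both sides vanish: the relative wealth satisfies $X_{\pi^G}(0)/X_{\widetilde{\mu}}(0)=1$, while on the right the log-ratio is $\log 1 = 0$ and the time integral begins at $0$.

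First I would compute the multiplicative differential $dX^{\widetilde{\mu}}_{\pi^G}/X^{\widetilde{\mu}}_{\pi^G}$ directly. Since $\pi^G$ is a stock portfolio ($\pi_0 \equiv 0$), formula \eqref{Eq : relative wealth from prop} gives $\sum_{i=1}^N (\pi^G_i/S^{\widetilde{\mu}}_i)\,dS^{\widetilde{\mu}}_i$. Substituting the definition \eqref{def : FGP} of $\pi^G$ splits this into the principal sum $\sum_i (D_iG/G)\,dS^{\widetilde{\mu}}_i$ plus a correction of the form $\bigl(1 - \sum_j S^{\widetilde{\mu}}_j (D_jG/G)\bigr) \sum_i \widetilde{\mu}_i \,dR^{\widetilde{\mu}}_i$. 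The crucial cancellation is that $\sum_i \widetilde{\mu}_i\,dR^{\widetilde{\mu}}_i = dR^{\widetilde{\mu}}_{\widetilde{\mu}} = d\pazocal{L}(X^{\widetilde{\mu}}_{\widetilde{\mu}}) = d\pazocal{L}(1) \equiv 0$, which follows from Proposition~\ref{prop : relative wealth} applied with $\rho = \pi = \widetilde{\mu}$ (the market portfolio among the top $n$ stocks, denominated in units of its own wealth, is identically $1$). Hence the correction disappears and
\[
\frac{dX^{\widetilde{\mu}}_{\pi^G}(t)}{X^{\widetilde{\mu}}_{\pi^G}(t)} = \sum_{i=1}^N \frac{D_iG\bigl(S^{\widetilde{\mu}}(t)\bigr)}{G\bigl(S^{\widetilde{\mu}}(t)\bigr)} \, dS^{\widetilde{\mu}}_i(t).
\]

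Next I apply It\^o's formula in two places and subtract. Taking $\log$ in the relation just established yields
\[
d\log X^{\widetilde{\mu}}_{\pi^G}(t) = \sum_{i=1}^N \frac{D_iG}{G}\,dS^{\widetilde{\mu}}_i(t) - \frac{1}{2}\sum_{i,j=1}^N \frac{D_iG\cdot D_jG}{G^2}\,d[S^{\widetilde{\mu}}_i,S^{\widetilde{\mu}}_j](t),
\]
while It\^o applied to $\log G(S^{\widetilde{\mu}})$ gives
\[
d\log G\bigl(S^{\widetilde{\mu}}(t)\bigr) = \sum_{i=1}^N \frac{D_iG}{G}\,dS^{\widetilde{\mu}}_i(t) + \frac{1}{2}\sum_{i,j=1}^N \Bigl(\frac{D^2_{i,j}G}{G} - \frac{D_iG\cdot D_jG}{G^2}\Bigr)d[S^{\widetilde{\mu}}_i,S^{\widetilde{\mu}}_j](t).
\]
Subtracting the correction $\tfrac{1}{2}\int D^2_{i,j}G/G\,d[S^{\widetilde{\mu}}_i,S^{\widetilde{\mu}}_j]$ from the latter produces the former; integrating from $0$ to $t$ then gives precisely \eqref{Eq : master formula}.

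I expect no deep obstacle here: the argument is a direct It\^o calculation whose heart is the single identity $dR^{\widetilde{\mu}}_{\widetilde{\mu}} \equiv 0$ used above. The only technicality is to verify that $\pi^G$ is genuinely $R$-integrable so that \eqref{Eq : dynamics of wealth} and \eqref{Eq : dynamics of relative wealth} apply; this is immediate from $G \in C^2$, the strict positivity of $G\bigl(S^{\widetilde{\mu}}\bigr)$, and standard localization on the continuous semimartingale $S^{\widetilde{\mu}}$.
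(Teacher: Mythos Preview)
Your proof is correct and follows essentially the same approach as the paper: both arguments hinge on the cancellation $\sum_i \widetilde{\mu}_i\,dR^{\widetilde{\mu}}_i = dR^{\widetilde{\mu}}_{\widetilde{\mu}} \equiv 0$ to reduce $dX^{\widetilde{\mu}}_{\pi^G}/X^{\widetilde{\mu}}_{\pi^G}$ to $\sum_i (D_iG/G)\,dS^{\widetilde{\mu}}_i$, after which the result is a straightforward It\^o computation. The paper presents this by positing the ansatz $\log X^{\widetilde{\mu}}_{\pi} = \log\bigl(G/G(0)\bigr) + J$ and identifying the finite-variation remainder $J$, whereas you compute $d\log X^{\widetilde{\mu}}_{\pi^G}$ and $d\log G$ separately and subtract; these are the same calculation organized in two directions.
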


\smallskip

The above arguments, leading to Definition~\ref{Def : FGP} and Proposition~\ref{prop : master formula}, have two weaknesses. First, the functionally-generated stock portfolio $\pi^G$ in \eqref{def : FGP} is not generally a portfolio among the top $n$ stocks, i.e., it can fail to belong to $\pazocal{T}(n)$. Thus, the Master formula \eqref{Eq : master formula} compares a portfolio $\pi^G$ which is not a portfolio among the top $n$ stocks, with $\widetilde{\mu}$, which is a portfolio among the top $n$ stocks. We will fix this issue by restricting the class of generating functions $G$ in the next subsection.

\smallskip

Secondly, when we construct a portfolio via \eqref{def : FGP} or use the Master formula \eqref{Eq : master formula}, we need to know at each time $t \geq 0$ the entire history of the process $X_{\widetilde{\mu}}$, up to time $t$, because these equations require the values of the vector $S^{\widetilde{\mu}}=S/X_{\widetilde{\mu}}$. This issue is unfortunately inevitable in the open market, because of its own nature of $\widetilde{\mu}$; as it is composed of the top $n$ stocks, we need to keep track of the ranks of $N$ stocks all the time, whereas computing the wealth $X_{\mu}$ generated by the market portfolio $\mu$ only requires \textit{current} stock prices (and the stock prices at time $t=0$), from its simple representation \eqref{Eq : wealth of mu}. Though we cannot resolve this second issue, we will give a representation of $X_{\widetilde{\mu}}$ in the following subsection.

\bigskip

\subsection{Functionally generated portfolios using ranks}			\label{subsec : FGP by ranks}

Recalling the rank notation in Definition~\ref{Def : price process by rank}, we define the random permutation process $p_k(t)$ of $\{1, \cdots, N\}$ such that for $k = 1, \cdots, N$, 
\begin{equation}			\label{def : permutation p}
	S_{p_k(t)}(t) = S_{(k)}(t),
\end{equation}
\begin{equation*}
	p_k(t) < p_{k+1}(t) \quad \text{if} \quad S_{(k)}(t) = S_{(k+1)}(t).
\end{equation*}
$p_k(t)$ represents the index name of the stock at rank $k$ at time $t$, breaking ties with the lexicographic rule, so it is the inverse permutation of $u_i(t)$, introduced in \eqref{def : u}:
$u_i(t) = k \Longleftrightarrow p_k(t)= i$, for all $t \geq 0$.

\smallskip

For any continuous semimartingale $Y$, we denote the local time accumulated at the origin by $Y(\cdot)$ up to time $t \geq 0$ by $L^Y(t)$;
\begin{equation*}
	L^Y(t) := \frac{1}{2} \bigg(|Y(t)|-|Y(0)|-\int_0^t \text{sign}\big(Y(s)\big)dY(s) \bigg), \quad \text{where} \quad \textrm{sign}(x) = 2 \times \bm{1}_{(0, \infty)}(x)-1.
\end{equation*}
Then, $L^{S_{(k)}-S_{(\ell)}}(t)$ can be interpreted as the `collision local time' accumulated up to time $t$, whenever the $k$-th and $\ell$-th ranked processes of $S$ collide. In order to simplify the local time terms throughout this section, we introduce the following definition which prohibits the accumulation of local times of `triple collisions' between the stock prices.

\smallskip

\begin{defn}
	The components of the price vector $S = (S_1, \cdots, S_N)$ in Definition~\ref{Def : price process by name} are called \textit{pathwise mutually nondegenerate}, if
	\begin{enumerate} [(i)]
		\item the set $\{ t : S_i(t) = S_j(t) \}$ has Lebesgue measure zero, $\mathbb{P}$-a.e., for all $i \neq j$; and if
		\item $L^{S_{(k)}-S_{(\ell)}}(t) \equiv 0$ holds $\mathbb{P}$-a.e., for $|k-\ell| \geq 2$.
	\end{enumerate}
\end{defn}

\smallskip

\begin{prop}
	Suppose that the components of the price vector $S$ are pathwise mutually nondegenerate. Then, with the notation \eqref{def : S tilde}, the wealth process $X_{\widetilde{\mu}}$ of $\widetilde{\mu}$ admits the representation
	\begin{equation}		\label{Eq : wealth of tilde mu}
		X_{\widetilde{\mu}}(\cdot) = \frac{\widetilde{\Sigma}(\cdot)}{\widetilde{\Sigma}(0)} \exp \bigg( -\frac{1}{2} \int_0^{\cdot} \frac{1}{\widetilde{\Sigma}(t)}dL^{S_{(n)}-S_{(n+1)}}(t) \bigg).
	\end{equation}
\end{prop}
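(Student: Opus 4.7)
The plan is to verify the claimed formula by showing that both sides satisfy the same linear stochastic differential equation with the same initial condition, so uniqueness forces equality.

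First, I would translate the definition of $\widetilde{\mu}$ into a dynamic equation for its wealth. By \eqref{Eq : dynamics of wealth} applied to the stock portfolio $\widetilde{\mu}$ of \eqref{def : top n market portfolio},
\begin{equation*}
\frac{dX_{\widetilde{\mu}}(t)}{X_{\widetilde{\mu}}(t)}
= \sum_{i=1}^N \widetilde{\mu}_i(t)\frac{dS_i(t)}{S_i(t)}
= \frac{1}{\widetilde{\Sigma}(t)}\sum_{i=1}^N \bm{1}_{\{u_i(t)\le n\}}\, dS_i(t), \qquad X_{\widetilde{\mu}}(0)=1.
\end{equation*}

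Second, I would compute $d\widetilde{\Sigma}$ by summing the Banner--Ghomrasni decomposition of $S_{(k)}$ (the formula commented out just after Definition~\ref{Def : price process by rank}, specialized under pathwise mutual nondegeneracy, so that only the adjacent-rank collision local times survive and the multiplicity factor equals $2$). For $k=1,\dots,N$,
\begin{equation*}
dS_{(k)}(t) = \sum_{i=1}^N \bm{1}_{\{u_i(t)=k\}}\,dS_i(t) + \tfrac{1}{2}dL^{S_{(k)}-S_{(k+1)}}(t) - \tfrac{1}{2}dL^{S_{(k-1)}-S_{(k)}}(t),
\end{equation*}
with the convention that boundary local times vanish. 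Summing over $k=1,\dots,n$ the telescoping cancels every collision local time except the one at the boundary between ranks $n$ and $n+1$, yielding
\begin{equation*}
d\widetilde{\Sigma}(t) = \sum_{i=1}^N \bm{1}_{\{u_i(t)\le n\}}\,dS_i(t) + \tfrac{1}{2}dL^{S_{(n)}-S_{(n+1)}}(t).
\end{equation*}
Combining with the previous display gives
\begin{equation*}
\frac{dX_{\widetilde{\mu}}(t)}{X_{\widetilde{\mu}}(t)} = \frac{d\widetilde{\Sigma}(t)}{\widetilde{\Sigma}(t)} - \frac{1}{2\widetilde{\Sigma}(t)}\,dL^{S_{(n)}-S_{(n+1)}}(t). \qquad (\ast)
\end{equation*}

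Third, I would denote the right-hand side of \eqref{Eq : wealth of tilde mu} by $Y(t):=\widetilde{\Sigma}(t)E(t)/\widetilde{\Sigma}(0)$, where $E(t):=\exp\bigl(-\tfrac12\int_0^t \widetilde{\Sigma}(s)^{-1}dL^{S_{(n)}-S_{(n+1)}}(s)\bigr)$ is continuous and of finite variation. The product rule (no covariation term because $E$ is of finite variation) gives
\begin{equation*}
dY(t) = \frac{E(t)}{\widetilde{\Sigma}(0)}\Bigl(d\widetilde{\Sigma}(t) - \tfrac12 dL^{S_{(n)}-S_{(n+1)}}(t)\Bigr),
\end{equation*}
and since $Y(t)=\widetilde{\Sigma}(t)E(t)/\widetilde{\Sigma}(0)$, this can be rewritten as
\begin{equation*}
\frac{dY(t)}{Y(t)} = \frac{d\widetilde{\Sigma}(t)}{\widetilde{\Sigma}(t)} - \frac{1}{2\widetilde{\Sigma}(t)}dL^{S_{(n)}-S_{(n+1)}}(t), \qquad Y(0)=1.
\end{equation*}
This is exactly the linear SDE $(\ast)$ that $X_{\widetilde{\mu}}$ solves with the same initial condition; uniqueness of the stochastic exponential (cf.\ \eqref{def : stochastic exponential}--\eqref{Eq : exponential SDE}) yields $Y\equiv X_{\widetilde{\mu}}$, which is \eqref{Eq : wealth of tilde mu}.

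The main obstacle is the careful bookkeeping in the Banner--Ghomrasni step: one must invoke pathwise mutual nondegeneracy to discard local times $L^{S_{(k)}-S_{(\ell)}}$ with $|k-\ell|\ge 2$ and to identify the collision multiplicity so that the surviving coefficient is exactly $\tfrac12$; without this hypothesis the telescoping produces spurious interior boundary terms. Everything else is routine product-rule manipulation.
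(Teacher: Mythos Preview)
Your proof is correct and follows essentially the same route as the paper: both derive the key identity $\displaystyle\sum_{i=1}^N \bm{1}_{\{u_i(t)\le n\}}\,dS_i(t)=d\widetilde{\Sigma}(t)-\tfrac12\,dL^{S_{(n)}-S_{(n+1)}}(t)$ by telescoping the rank decomposition (the paper cites Fernholz's Proposition~4.1.11 rather than Banner--Ghomrasni directly), and then identify $X_{\widetilde{\mu}}$ with the right-hand side of \eqref{Eq : wealth of tilde mu}. The only cosmetic difference is that the paper stays inside the stochastic exponential throughout and simplifies $\pazocal{E}\bigl(\int_0^{\cdot}\widetilde{\Sigma}^{-1}\,d\widetilde{\Sigma}-\tfrac12\int_0^{\cdot}\widetilde{\Sigma}^{-1}\,dL^{S_{(n)}-S_{(n+1)}}\bigr)$ in one line, whereas you verify that both sides solve the same linear SDE and invoke uniqueness.
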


\smallskip

\begin{proof}
	From Proposition~\ref{prop : relative wealth} and the fact that $\widetilde{\mu}$ is a stock portfolio, we have
	\begin{equation*}
		X_{\widetilde{\mu}}(\cdot) 
		= \pazocal{E} \Big(\int_0^{\cdot} \sum_{i=1}^N \widetilde{\mu}_i(t)dR_i(t) \Big)
		= \pazocal{E} \Big(\int_0^{\cdot} \sum_{i=1}^N \frac{\widetilde{S}_i(t)}{\widetilde{\Sigma}(t)S_i(t)} dS_i(t) \Big)
		= \pazocal{E} \Big(\int_0^{\cdot} \sum_{i=1}^N \sum_{k=1}^n \frac{\bm{1}_{\{u_i(t) = k\}}}{\widetilde{\Sigma}(t)} dS_i(t) \Big).
	\end{equation*}
	On the other hand, from Proposition~4.1.11 of \cite{Fe}, we have
	\begin{equation*}
		\sum_{i=1}^N \bm{1}_{\{u_i(t) = k\}} dS_i(t) = dS_{(k)}(t) - \frac{1}{2} dL^{S_{(k)}-S_{(k+1)}}(t) + \frac{1}{2} dL^{S_{(k-1)}-S_{(k)}}(t),
	\end{equation*}
	for $k = 1, \cdots, N$ and $t \geq 0$, with the conventions $L^{S_{(0)}-S_{(1)}} \equiv 0$ and $L^{S_{(N)}-S_{(N+1)}} \equiv 0$. Thus, we obtain
	\begin{align*}
		X_{\widetilde{\mu}}(\cdot)\
		&=\pazocal{E} \Big(\int_0^{\cdot} \sum_{k=1}^n \frac{dS_{(k)}(t)}{\widetilde{\Sigma}(t)} + \frac{1}{2} \int_0^{\cdot} \sum_{k=1}^n \frac{dL^{S_{(k-1)}-S_{(k)}}(t)}{\widetilde{\Sigma}(t)} - \frac{1}{2} \int_0^{\cdot} \sum_{k=1}^n \frac{dL^{S_{(k)}-S_{(k+1)}}(t)}{\widetilde{\Sigma}(t)} \Big) 
		\\
		&=\pazocal{E} \Big(\int_0^{\cdot} \frac{d\widetilde{\Sigma}(t)}{\widetilde{\Sigma}(t)} - \frac{1}{2} \int_0^{\cdot} \frac{dL^{S_{(n)}-S_{(n+1)}}(t)}{\widetilde{\Sigma}(t)} \Big)
		= \frac{\widetilde{\Sigma}(\cdot)}{\widetilde{\Sigma}(0)} \exp \bigg( -\frac{1}{2} \int_0^{\cdot} \frac{dL^{S_{(n)}-S_{(n+1)}}(t)}{\widetilde{\Sigma}(t)} \bigg).
	\end{align*}
\end{proof}

\smallskip

The exponential term of \eqref{Eq : wealth of tilde mu} shows the `leakage', the effect caused by stocks which cross over from the top $n$ league to the bottom. Due to this effect, we need to keep track of the collision local time $L^{S_{(n)}-S_{(n+1)}}$ in order to compute $X_{\widetilde{\mu}}$, as we pointed out at the end of the previous subsection.

\smallskip

We next present Fernholz's original method of constructing rank-dependent portfolios from generating functions. We write $\mu_{(k)}$ to represent the $k$-th ranked market weight among $\mu_1, \cdots, \mu_N$ for $k=1, \cdots, N$, and introduce the vector $\bm{\mu} = (\mu_{(1)}, \cdots, \mu_{(N)})$ with components $\mu_{(k)} = S_{(k)}/\Sigma$, $k=1, \cdots, N$, as in \eqref{Eq : rank}, \eqref{def : Sigma}. The following result is based on Theorem~4.2.1 of \cite{Fe}.

\smallskip

\begin{thm}	[Functionally generated portfolios using ranked market weights]		\label{Thm : FGP rank}
	Suppose that the price vector $S$ is pathwise mutually nondegenerate. Let $p_k(\cdot)$, $k = 1, \cdots, N$ be the random permutation process defined by \eqref{def : permutation p} and let $\textbf{G}$ be a function defined on a neighborhood $U$ of $\Delta^{N-1}_+$. Suppose that there exists a positive $C^2$ function $G$ such that for $(x_1, \cdots, x_N) \in U$,
	\begin{equation}		\label{Eq : bold G}
		\textbf{G}(x_1, \cdots, x_N) = G(x_{(1)}, \cdots, x_{(N)}).
	\end{equation}
	Then $\textbf{G}$ generates the stock portfolio $\pi^{\textbf{G}}$ such that for $k=1, \cdots, N$,
	\begin{equation}		\label{def : pi^G}
		\pi^{\textbf{G}}_{p_k(t)}(t) = \bigg( \frac{D_kG\big(\bm{\mu}(t)\big)}{G\big(\bm{\mu}(t)\big)} + 1 - \sum_{\ell=1}^N \mu_{(\ell)} \frac{D_{\ell}G\big(\bm{\mu}(t)\big)}{G\big(\bm{\mu}(t)\big)} \bigg) \mu_{(k)}(t), \qquad \text{for} \quad t \geq 0.
	\end{equation}
	The log-relative wealth process of $\pi^{\textbf{G}}$ with respect to the market portfolio $\mu$, can be expressed via the `master formula' :
	\begin{align*}
		\log \bigg( \frac{X_{\pi^{\textbf{G}}}}{X_{\mu}} \bigg) = \log \bigg( \frac{G(\bm{\mu})}{G\big(\bm{\mu}(0)\big)} \bigg) &- \frac{1}{2} \int_0^{\cdot} \sum_{k=1}^{N-1} \bigg( \frac{\pi^{\textbf{G}}_{p_k(t)}(t)}{\mu_{(k)}(t)}-\frac{\pi^{\textbf{G}}_{p_{k+1}(t)}(t)}{\mu_{(k+1)}(t)} \bigg) dL^{\mu_{(k)}-\mu_{(k+1)}}(t)
		\\
		&-\frac{1}{2}\int_0^{\cdot} \sum_{k=1}^N \sum_{\ell=1}^N \frac{D^2_{k, \ell} G\big(\bm{\mu}(t)\big)}{G\big(\bm{\mu}(t)\big)} d[\mu_{(k)}, \mu_{(\ell)}](t).
	\end{align*}
\end{thm}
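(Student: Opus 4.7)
The plan is to compute $d\log(X_{\pi^\textbf{G}}/X_\mu)$ directly from the dynamics of the relative wealth, then compare it with the It\^o expansion of $d\log G(\bm\mu)$, exploiting the Banner--Ghomrasni semimartingale decomposition of the ranked market weights. First I would verify that $\pi^\textbf{G}$ is actually a stock portfolio: summing \eqref{def : pi^G} over $k=1,\dots,N$ and using $\sum_k \mu_{(k)}\equiv 1$, the bracketed factor collapses and yields $\sum_k \pi^\textbf{G}_{p_k(t)}(t)=1$, so $\pi^\textbf{G}\in\Delta^{N-1}$; $R$-integrability follows from the local boundedness of the entries since $G>0$ is $C^2$ on a neighborhood of $\Delta^{N-1}_+$.

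Next I would invoke Proposition~4.1.11 of \cite{Fe} (already used in the derivation of \eqref{Eq : wealth of tilde mu}) in the form
\[
\sum_{i=1}^N \bm{1}_{\{u_i(t)=k\}}\,d\mu_i(t) \,=\, d\mu_{(k)}(t) - \tfrac{1}{2}dL^{\mu_{(k)}-\mu_{(k+1)}}(t) + \tfrac{1}{2}dL^{\mu_{(k-1)}-\mu_{(k)}}(t),
\]
with the conventions $L^{\mu_{(0)}-\mu_{(1)}}\equiv L^{\mu_{(N)}-\mu_{(N+1)}}\equiv 0$; the pathwise mutual nondegeneracy hypothesis ensures that no triple-collision local times intervene. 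Using the identity $\pi^\textbf{G}_i(t)/\mu_i(t) = D_{u_i(t)}G(\bm\mu)/G(\bm\mu) + 1 - \sum_\ell \mu_{(\ell)}\,D_\ell G(\bm\mu)/G(\bm\mu)$, which follows at once from \eqref{def : pi^G}, together with $\sum_{i=1}^N d\mu_i \equiv 0$ (since $\sum_i \mu_i\equiv 1$), the relative-wealth dynamics \eqref{Eq : dynamics of relative wealth} with $\rho\equiv\mu$ reduce to
\[
dR^\mu_{\pi^\textbf{G}}(t) \,=\, \sum_{k=1}^N \frac{D_k G(\bm\mu(t))}{G(\bm\mu(t))}\,d\mu_{(k)}(t) \,+\, \frac{1}{2}\sum_{k=1}^{N-1}\frac{D_{k+1}G(\bm\mu(t))-D_k G(\bm\mu(t))}{G(\bm\mu(t))}\,dL^{\mu_{(k)}-\mu_{(k+1)}}(t),
\]
after reindexing the telescoping local-time sum (the boundary conventions kill the $k=0,\,N$ contributions). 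Independently, It\^o's formula applied to $\log G(\bm\mu)$ gives $d\log G(\bm\mu) = \sum_k (D_k G/G)\,d\mu_{(k)} + \tfrac{1}{2}\sum_{k,\ell}\bigl[D^2_{k\ell}G/G - D_k G\,D_\ell G/G^2\bigr]\,d[\mu_{(k)},\mu_{(\ell)}]$.

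Subtracting these two expressions, and then applying $\log(X_{\pi^\textbf{G}}/X_\mu) = R^\mu_{\pi^\textbf{G}} - \tfrac{1}{2}[R^\mu_{\pi^\textbf{G}},R^\mu_{\pi^\textbf{G}}]$ from Proposition~\ref{prop : relative wealth}, the cross-gradient terms $D_k G\,D_\ell G/G^2$ cancel exactly, because the finite-variation local-time component of $R^\mu_{\pi^\textbf{G}}$ contributes nothing to the quadratic variation while the martingale part carries precisely the covariation $\sum_{k,\ell}(D_k G\,D_\ell G/G^2)d[\mu_{(k)},\mu_{(\ell)}]$. The master formula in the theorem then reduces to the telescoping identity
\[
\frac{\pi^\textbf{G}_{p_k(t)}(t)}{\mu_{(k)}(t)} - \frac{\pi^\textbf{G}_{p_{k+1}(t)}(t)}{\mu_{(k+1)}(t)} \,=\, \frac{D_k G(\bm\mu(t)) - D_{k+1}G(\bm\mu(t))}{G(\bm\mu(t))},
\]
which is immediate from \eqref{def : pi^G}. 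The main obstacle in executing this plan is the careful bookkeeping of the two adjacent local-time sums and of the index shift $k\mapsto k+1$ that aligns $\sum_k(D_k G/G)\,dL^{\mu_{(k-1)}-\mu_{(k)}}$ with the $dL^{\mu_{(k)}-\mu_{(k+1)}}$ terms; the boundary conventions $L^{\mu_{(0)}-\mu_{(1)}}\equiv L^{\mu_{(N)}-\mu_{(N+1)}}\equiv 0$ together with the pathwise mutual nondegeneracy assumption are precisely what is needed to close the argument without spurious boundary or triple-collision contributions.
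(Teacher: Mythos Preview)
The paper does not actually supply a proof of this theorem; it simply records the statement and cites Theorem~4.2.1 of \cite{Fe}. Your argument is correct and is essentially the standard Fernholz proof: express $dR^{\mu}_{\pi^{\textbf G}}=\sum_i(\pi^{\textbf G}_i/\mu_i)\,d\mu_i$, pass to ranks via the Banner--Ghomrasni identity (Proposition~4.1.11 of \cite{Fe}) under the pathwise mutual nondegeneracy hypothesis, compare the result with the It\^o expansion of $\log G(\bm\mu)$, and cancel the $\sum_{k,\ell}(D_kG\,D_\ell G/G^2)\,d[\mu_{(k)},\mu_{(\ell)}]$ term against $\tfrac12[R^{\mu}_{\pi^{\textbf G}},R^{\mu}_{\pi^{\textbf G}}]$, finishing with the telescoping identity for $\pi^{\textbf G}_{p_k}/\mu_{(k)}-\pi^{\textbf G}_{p_{k+1}}/\mu_{(k+1)}$.
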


\smallskip

The portfolio $\pi^{\textbf{G}}$ generated via the recipe \eqref{def : pi^G} is easily checked to be a stock portfolio, i.e., $\pi^{\textbf{G}} \in \pazocal{I}(R) \cap \Delta^{N-1}$; however, it is not generally a portfolio among the top $n$ stocks. In order to make it a portfolio among the top $n$ stocks, we need to impose two conditions on the function $G$ in Theorem~\ref{Thm : FGP rank}:
\begin{enumerate} [(A)]
	\item $G$ is `balanced', i.e., satisfies the identity
	\begin{equation}		\label{con : balance}
		G(x_1, \cdots, x_N) = \sum_{j=1}^N x_jD_jG(x_1, \cdots, x_N), \qquad \text{for any} \quad x \in U,
	\end{equation}
	\item $G(x)$ depends only on the first $n$ components of $x$.
\end{enumerate}
If the condition (A) is satisfied, then the portfolio $\pi^{\textbf{G}}$ of \eqref{def : pi^G} has a simpler representation as
\begin{equation}		\label{Eq : pi^G balanced}
	\pi^{\textbf{G}}_{p_k(t)}(t) =  \frac{D_kG\big(\bm{\mu}(t)\big)}{G\big(\bm{\mu}(t)\big)} \mu_{(k)}(t), \qquad \text{for} \quad t \geq 0.
\end{equation}
Moreover, if the condition (B) holds as well, then $D_kG\big(\bm{\mu}\big) = 0$ for $k > n$, thus $\pi^{\textbf{G}}_{p_k(t)}(t) = 0$ for $k > n$. This means that the portfolio $\pi^{\textbf{G}}$ does not invest in the $i=p_k(t)$-th stock at time $t$, if the rank $k$ of this $i$-th stock is bigger than $n$ at time $t$.

\smallskip

\begin{defn}	[Admissible generating function in open market]	\label{Def : admissible GF}
	We call a function $G$ in Theorem~\ref{Thm : FGP rank} an \textit{admissible generating function of market consisting of the top $n$ stocks}, if it satisfies conditions (A) and (B) above. 
\end{defn}

\smallskip

\begin{cor}			\label{cor : FGP rank}
	If $G$ in Theorem~\ref{Thm : FGP rank} is an admissible generating function of market consisting of top $n$ stocks, then $\textbf{G}$ generates the stock portfolio among the top $n$ stocks $\pi^{\textbf{G}} \in \pazocal{I}(R) \cap \pazocal{T}(n) \cap \Delta^{N-1}$, given as \eqref{Eq : pi^G balanced} for $k = 1, \cdots, N$. In this case, we have the master formula
	\begin{align}
	\log \bigg( \frac{X_{\pi^{\textbf{G}}}}{X_{\mu}} \bigg) 
	= \log \bigg( \frac{G(\bm{\mu})}{G\big(\bm{\mu}(0)\big)} \bigg) &- \frac{1}{2} \int_0^{\cdot} \sum_{k=1}^{n} \bigg( \frac{D_kG\big(\bm{\mu}(t)\big)}{G\big(\bm{\mu}(t)\big)}-\frac{D_{k+1}G\big(\bm{\mu}(t)\big)}{G\big(\bm{\mu}(t)\big)} \bigg) dL^{\mu_{(k)}-\mu_{(k+1)}}(t)			\nonumber
	\\
	&-\frac{1}{2}\int_0^{\cdot} \sum_{k=1}^n \sum_{\ell=1}^n \frac{D^2_{k, \ell} G\big(\bm{\mu}(t)\big)}{G\big(\bm{\mu}(t)\big)} d[\mu_{(k)}, \mu_{(\ell)}](t).		\label{Eq : master formula balanced}
	\end{align}
\end{cor}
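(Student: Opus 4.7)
The plan is to derive Corollary~\ref{cor : FGP rank} as a direct specialization of Theorem~\ref{Thm : FGP rank}, simplifying the general portfolio formula \eqref{def : pi^G} and the corresponding master formula by invoking the two admissibility conditions (A) and (B) in sequence.

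First I would verify that the portfolio in \eqref{Eq : pi^G balanced} is precisely the portfolio $\pi^{\textbf{G}}$ that Theorem~\ref{Thm : FGP rank} associates to $\textbf{G}$, and that it lies in $\pazocal{I}(R) \cap \pazocal{T}(n) \cap \Delta^{N-1}$. The balance condition \eqref{con : balance}, divided through by $G$, reads $\sum_{\ell=1}^N x_\ell (D_\ell G/G)(x) = 1$, so the ``correction bracket'' in \eqref{def : pi^G} collapses to $D_k G/G$, yielding \eqref{Eq : pi^G balanced}. From this simplified representation, summing over $k$ and reusing (A) gives $\sum_{k=1}^N \pi^{\textbf{G}}_{p_k(t)}(t) = 1$, which shows $\pi^{\textbf{G}} \in \Delta^{N-1}$ since $\{p_k(t)\}$ is a permutation of $\{1,\ldots,N\}$. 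Condition (B) then forces $D_k G \equiv 0$ for $k > n$, so $\pi^{\textbf{G}}_{p_k(t)}(t) = 0$ whenever the stock $i = p_k(t)$ has rank $k > n$; this is exactly $\pi^{\textbf{G}}_i \mathbf{1}_{\{u_i > n\}} = 0$, i.e. \eqref{con : portfolio among n}, so $\pi^{\textbf{G}} \in \pazocal{T}(n)$. The $R$-integrability is inherited from Theorem~\ref{Thm : FGP rank}.

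Next, I would obtain the master formula \eqref{Eq : master formula balanced} by simplifying the three terms on the right of the master formula in Theorem~\ref{Thm : FGP rank}. The logarithmic term $\log(G(\bm{\mu})/G(\bm{\mu}(0)))$ is unchanged. In the local time integral, \eqref{Eq : pi^G balanced} gives $\pi^{\textbf{G}}_{p_k(t)}(t)/\mu_{(k)}(t) = D_kG(\bm{\mu}(t))/G(\bm{\mu}(t))$ for every $k = 1, \ldots, N$. Condition (B) forces $D_k G \equiv 0$ for $k \geq n+1$, so all summands with $k \geq n+1$ vanish identically, while the summand at $k = n$ reduces to $(D_n G/G)\, dL^{\mu_{(n)} - \mu_{(n+1)}}$ because the subtracted term is zero; this truncates the sum to $k = 1, \ldots, n$ under the natural convention $D_{n+1}G \equiv 0$, which is what the first displayed sum in \eqref{Eq : master formula balanced} encodes. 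For the covariation term, condition (B) also yields $D^2_{k,\ell} G \equiv 0$ whenever $\max(k,\ell) > n$, truncating the double sum to $1 \leq k, \ell \leq n$.

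No serious obstacle arises beyond careful bookkeeping, and the calculation is essentially mechanical. The single delicate point is the boundary contribution at $k = n$ in the local time sum, where one must recognize that the vanishing of $D_{n+1}G$ (by (B)) is what produces the clean truncated expression on the right-hand side of \eqref{Eq : master formula balanced}; this is also precisely the ``leakage'' phenomenon already visible in the representation \eqref{Eq : wealth of tilde mu} of $X_{\widetilde{\mu}}$, manifested now at the portfolio level through the collision local time $L^{S_{(n)} - S_{(n+1)}}$.
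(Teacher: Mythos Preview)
Your proposal is correct and follows essentially the same approach as the paper, which argues in the text immediately preceding the corollary that condition (A) reduces \eqref{def : pi^G} to \eqref{Eq : pi^G balanced} and condition (B) then forces $\pi^{\textbf{G}}_{p_k(t)}(t)=0$ for $k>n$, leaving the master formula of Theorem~\ref{Thm : FGP rank} to be simplified exactly as you describe.
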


\bigskip

\begin{example} [Balanced functions]
	By solving the partial differential equation of \eqref{con : balance}, a balanced function $G$ can be shown to be homogeneous of degree $1$, i.e, the identity
	\begin{equation}
	G(ax) = aG(x)
	\end{equation}
	holds for any $x \in U$ and $a > 0$. From this simple characterization of balanced functions, we illustrate three types of balanced functions here:
	\begin{enumerate}[(i)]
		\item $G(x) = \frac{1}{c_1 + \cdots + c_N}\sum_{i=1}^{N} c_i x_i$,
		\item $G(x) = \big(\prod_{i=1}^{N} x_i \big)^{1/N}$,
		\item $G(x) = \big(\sum_{i=1}^{N} x_i^p \big)^{\frac{1}{p}}$.
	\end{enumerate}
	These functions are closely related to `three Pythagorean means'; (i) and (ii) are just the weighted-arithmetic and geometric means of the components of $x$, and (iii) becomes the harmonic mean when $p=-1$. A plethora of examples of these types can be found in the literature. The ``capitalization-weighted portfolios'' of large and small stocks from Example 6.2, Example 6.3 of \cite{Karatzas:Ruf:2017}, or from Example 4.3.2 of \cite{Fe} are special cases of (i). The ``equal-weighted portfolio'', which holds equal weights across all assets, in Section 4.3 of \cite{Karatzas:Ruf:2017}, is generated by (ii). The portfolio generated by (iii) for $0 < p < 1$ is called ``diversity-weighted portfolio'', and is discussed in detail in Example 3.4.4 and Section 6.2 of \cite{Fe}. Diversity-weighted portfolios with negative parameter $p<0$ in (iii) are the main subject of \cite{Vervuurt:Karatzas:2015}.
	
	\smallskip
	
	We can slightly generalize and make these functions satisfy conditions (A) and (B) as well:
	\begin{enumerate}[(i')]
		\item $G(x) = \sum_{i=1}^{n} c_i x_i$,
		\item $G(x) = \prod_{i=1}^{n} x_i^{c_i}, \quad \text{with} \quad \sum_{i=1}^{n}c_i=1$,
		\item $G(x) = \Big(\sum_{i=1}^{n} x_i^p \Big)^{\frac{1}{p}}$,
	\end{enumerate}
	for some constants $c_i$'s and $p$.
\end{example}

\bigskip

The following example further devolops Example 4.3.2 of \cite{Fe}, and shows that the top $n$ market portfolio $\widetilde{\mu}$, defined in \eqref{def : top n market portfolio}, can be generated functionally.

\smallskip

\begin{example} [Top $n$ market portfolio]
	Consider the function 
	\begin{equation*}
		\textbf{G}(x) = G(x_{(1)}, \cdots, x_{(n)}) = \sum_{k=1}^n x_{(k)}
	\end{equation*}
	satisfying the conditions (A) and (B) above. Corollary~\ref{cor : FGP rank} implies that $\textbf{G}$ generates the portfolio
	\begin{equation*}
		\pi^{\textbf{G}}_{p_k(\cdot)}(\cdot) = \frac{\mu_{(k)}(\cdot)}{\mu_{(1)}(\cdot)+\cdots+\mu_{(n)}(\cdot)} \bm{1}_{\{ k \leq n \}}
		= \frac{S_{(k)}(\cdot)}{S_{(1)}(\cdot)+\cdots+S_{(n)}(\cdot)}\bm{1}_{\{ k \leq n \}}.
	\end{equation*}
	This coincides with the top $n$ market portfolio $\widetilde{\mu}$, because
	\begin{equation*}
		\frac{S_{(k)}(\cdot)\bm{1}_{\{ k \leq n \}}}{S_{(1)}(\cdot)+\cdots+S_{(n)}(\cdot)} = \frac{S_{p_k(\cdot)}(\cdot)\bm{1}_{\{ k \leq n \}}}{\widetilde{\Sigma}(\cdot)}
	= \widetilde{\mu}_{p_k(\cdot)}(\cdot),
	\end{equation*}
	holds for $k = 1, \cdots, N$, from \eqref{def : top n market portfolio}. The master formula \eqref{Eq : master formula balanced} is then
	\begin{equation}		\label{Eq : mu vs tilde mu}
		\log \bigg( \frac{X_{\widetilde{\mu}}}{X_{\mu}} \bigg) = \log \bigg( \frac{\mu_{(1)}(\cdot)+\cdots+\mu_{(n)}(\cdot)}{\mu_{(1)}(0)+\cdots+\mu_{(n)}(0)} \bigg)- \frac{1}{2} \int_0^{\cdot} \frac{dL^{\mu_{(n)}-\mu_{(n+1)}}(t)}{\mu_{(1)}(t)+\cdots+\mu_{(n)}(t)}.
	\end{equation}
\end{example}

\smallskip

In Corollary~\ref{cor : FGP rank}, the portfolio $\pi^{\textbf{G}}$ is indeed a stock portfolio among the top $n$ stocks; but the master formula \eqref{Eq : master formula balanced} compares its performance with the market portfolio $\mu$, which is not a portfolio among the top $n$ stocks. In the open market setting, since we only consider portfolios among the top $n$ stocks, it is more appropriate to compare a portfolio's performance with respect to $\widetilde{\mu}$, rather than $\mu$. This can be done by combining \eqref{Eq : master formula balanced} and \eqref{Eq : mu vs tilde mu}.

\smallskip

\begin{cor}	[Master formula in top $n$ market]		\label{cor : FGP in open market}
	For a functionally generated portfolio $\pi^{\textbf{G}}$ as in Corollary~\ref{cor : FGP rank}, the master formula, which compares the log-relative wealth of $\pi^{\textbf{G}}$ to that generated by $\widetilde{\mu}$, the top $n$ market, is given as
	\begin{align}
		\log \bigg( \frac{X_{\pi^{\textbf{G}}}}{X_{\widetilde{\mu}}} \bigg)
		&= \log \bigg( \frac{G(\bm{\mu})}{G\big(\bm{\mu}(0)\big)} \bigg) - \log \bigg( \frac{\mu_{(1)}(\cdot)+\cdots+\mu_{(n)}(\cdot)}{\mu_{(1)}(0)+\cdots+\mu_{(n)}(0)} \bigg)		\label{Eq : master formula for tilde mu}
		\\
		&~~~-\frac{1}{2} \int_0^{\cdot} \sum_{k=1}^{n} \bigg( \frac{D_kG\big(\bm{\mu}(t)\big)}{G\big(\bm{\mu}(t)\big)}-\frac{D_{k+1}G\big(\bm{\mu}(t)\big)}{G\big(\bm{\mu}(t)\big)} \bigg) dL^{\mu_{(k)}-\mu_{(k+1)}}(t)		\nonumber
		\\
		&~~~+ \frac{1}{2} \int_0^{\cdot} \frac{dL^{\mu_{(n)}-\mu_{(n+1)}}(t)}{\mu_{(1)}(t)+\cdots+\mu_{(n)}(t)}
		-\frac{1}{2}\int_0^{\cdot} \sum_{k=1}^n \sum_{\ell=1}^n \frac{D^2_{k, \ell} G\big(\bm{\mu}(t)\big)}{G\big(\bm{\mu}(t)\big)} d[\mu_{(k)}, \mu_{(\ell)}](t).		\nonumber
	\end{align}
\end{cor}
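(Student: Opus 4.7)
The proof is essentially a one-line combination of two already-established master formulas, so the plan is to decompose $\log(X_{\pi^{\textbf{G}}}/X_{\widetilde{\mu}})$ telescopically through the benchmark $\mu$, namely via the identity
\begin{equation*}
\log\!\bigg(\frac{X_{\pi^{\textbf{G}}}}{X_{\widetilde{\mu}}}\bigg) = \log\!\bigg(\frac{X_{\pi^{\textbf{G}}}}{X_{\mu}}\bigg) - \log\!\bigg(\frac{X_{\widetilde{\mu}}}{X_{\mu}}\bigg),
\end{equation*}
which is valid because both wealth processes are strictly positive. To the first term on the right I would apply the master formula \eqref{Eq : master formula balanced} from Corollary~\ref{cor : FGP rank}, which is directly applicable since $G$ is assumed to satisfy conditions (A) and (B) and therefore is an admissible generating function. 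To the second term I would apply the identity \eqref{Eq : mu vs tilde mu}, obtained earlier as the functionally generated representation of $\widetilde{\mu}$ (with generating function $G(x)=\sum_{k=1}^n x_{(k)}$).

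Subtracting the two expressions then produces the four terms on the right-hand side of \eqref{Eq : master formula for tilde mu}: the difference of the $\log G$ terms appears as the first line, the rank-local-time sum from Corollary~\ref{cor : FGP rank} becomes the second line, the leakage term $\tfrac{1}{2}\int_0^{\cdot}\big(\mu_{(1)}+\cdots+\mu_{(n)}\big)^{-1} dL^{\mu_{(n)}-\mu_{(n+1)}}$ arises from flipping the sign of \eqref{Eq : mu vs tilde mu}, and finally the $[\mu_{(k)},\mu_{(\ell)}]$ quadratic-covariation sum is inherited from \eqref{Eq : master formula balanced}. One minor bookkeeping check is that condition (B), i.e.\ $D_k G(\bm\mu)=0$ for $k>n$, ensures that all sums that are written as $\sum_{k=1}^n$ (rather than $\sum_{k=1}^{N}$ or $\sum_{k=1}^{N-1}$) indeed capture the full contribution from \eqref{Eq : master formula balanced} without losing terms; in particular, the $k=n$ summand in the rank-local-time sum does not vanish (since typically $D_{n+1}G=0$ but $D_n G\neq 0$), which is exactly what couples with the leakage term.

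There is no real obstacle here: everything is an algebraic rearrangement of two previously derived identities, and the only point that needs a moment's attention is checking that the admissibility conditions on $G$ are compatible with the generating function of $\widetilde{\mu}$ used in \eqref{Eq : mu vs tilde mu}, so that both master formulas are being applied in their declared regime of validity. I would therefore present the proof as a short derivation consisting of the telescoping identity, the citation of \eqref{Eq : master formula balanced} and \eqref{Eq : mu vs tilde mu}, and the explicit subtraction, leaving the reader to match terms line by line with \eqref{Eq : master formula for tilde mu}.
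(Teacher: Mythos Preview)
Your proposal is correct and matches the paper's approach exactly: the paper states just before the corollary that the result ``can be done by combining \eqref{Eq : master formula balanced} and \eqref{Eq : mu vs tilde mu},'' which is precisely your telescoping decomposition through $\mu$ followed by subtraction of the two master formulas. Your additional bookkeeping remarks about condition (B) ensuring the sums truncate correctly at $n$ are accurate and in fact slightly more detailed than what the paper spells out.
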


\smallskip

We call this formula of \eqref{Eq : master formula for tilde mu}, the `master formula for the top $n$ market' to distinguish it from the formula of \eqref{Eq : master formula balanced}, which we call the `master formula in the entire market'.

\bigskip

\begin{example} [Diversity-weighted portfolio]
	Consider a function
	\begin{equation*}
		\textbf{G}(x) = G(x_{(1)}, \cdots, x_{(N)}) = \bigg(\sum_{k=1}^n x_{(k)}^p\bigg)^{1/p}	
	\end{equation*}
	with a fixed constant $p \in (0, 1)$. Corollary~\ref{cor : FGP rank} implies that $\textbf{G}$ generates the ``diversity-weighted portfolio''
	\begin{equation*}
	\pi^{\textbf{G}}_{p_k(\cdot)}(\cdot) = \frac{\mu_{(k)}^p(\cdot)}{\mu_{(1)}^p(\cdot)+\cdots+\mu_{(n)}^p(\cdot)} \bm{1}_{\{ k \leq n \}}, \qquad k = 1, \cdots, N.
	\end{equation*}
	The master formula in the top $n$ market in \eqref{Eq : master formula for tilde mu} is then given as
	\begin{align}
	\log \bigg( \frac{X_{\pi^{\textbf{G}}}}{X_{\widetilde{\mu}}} \bigg)
	&= \frac{1}{p} \log \bigg( \frac{\mu_{(1)}^p(\cdot)+\cdots+\mu_{(n)}^p(\cdot)}{\mu_{(1)}^p(0)+\cdots+\mu_{(n)}^p(0)} \bigg)- \log \bigg( \frac{\mu_{(1)}(\cdot)+\cdots+\mu_{(n)}(\cdot)}{\mu_{(1)}(0)+\cdots+\mu_{(n)}(0)} \bigg)		\nonumber
	\\
	&-\frac{1}{2} \int_0^{\cdot} \frac{\mu_{(n)}^{p-1}(t)}{\mu_{(1)}^p(t)+\cdots+\mu_{(n)}^p(t)} dL^{\mu_{(n)}-\mu_{(n+1)}}(t)		\label{vanishing local time terms}
	+ \frac{1}{2} \int_0^{\cdot} \frac{dL^{\mu_{(n)}-\mu_{(n+1)}}(t)}{\mu_{(1)}(t)+\cdots+\mu_{(n)}(t)}
	\\
	&-\frac{1-p}{2}\int_0^{\cdot} \sum_{k=1}^n \sum_{\ell=1}^n \frac{\mu_{(k)}^{p-1}(t)\mu_{(\ell)}^{p-1}(t)}{\big(\mu_{(1)}^p(t)+\cdots+\mu_{(n)}^p(t)\big)^2} d[\mu_{(k)}, \mu_{(\ell)}](t)		\nonumber
	\\
	&+\frac{1-p}{2}\int_0^{\cdot} \sum_{k=1}^n \frac{\mu_{(k)}^{p-2}(t)}{\mu_{(1)}^p(t)+\cdots+\mu_{(n)}^p(t)} d[\mu_{(k)}, \mu_{(k)}](t).		\nonumber
	\end{align}
	Here, in the first integral of \eqref{vanishing local time terms}, we use the fact that the local time process $L^{\mu_{(k)}-\mu_{(k+1)}}(\cdot)$ is flat off the set $\{s \geq 0: \mu_{(k)}(s) = \mu_{(k+1)}(s) \}$ for $k=1, \cdots, n-1$.
\end{example}

\bigskip

\subsection{Universal portfolio}
 
In this subsection, we explore Cover's universal portfolio theory in open markets. This portfolio was first introduced by \cite{Cover_1991} in discrete time, and its extension to continuous time was developed by \cite{Jamshidian_1992}. More recent work under the setting of Stochastic Portfolio Theory can be found in \cite{Cuchiero_Scha_Wong}.

\smallskip

Recalling the notation $\Delta^{N-1}_+$ from \eqref{def : simplex}, we need first the following notation
\begin{equation}		\label{def : Delta N-1, n}
	\Delta^{N-1, n}_+ := \big\{ x \in \mathbb{R}^N ~\Big|~ x_k \geq 0 ~~ \text{for} ~~ k = 1, \cdots, N, ~~ \sum_{k=1}^n x_k = 1, ~~ x_{n+1} = \cdots = x_N = 0 \big\}
\end{equation}
throughout this subsection. Since we are only allowed to invest in the top $n$ stocks in an open market, the notion of Cover's `constant rebalanced portfolio' needs to be amended, as follows.

\smallskip

\begin{defn} [constant rebalanced portfolio by rank]	\label{Def : constant rebalanced portfolio}
	If a stock portfolio $\pi \in \pazocal{I}(R) \cap \pazocal{T}(n) \cap \Delta^{N-1}$ among the top $n$ stocks satisfies 
	\begin{equation}	\label{Eq : constant rebalanced}
		\pi_{p_k(t)}(t) = \xi_k \quad \text{for} \quad t \geq 0, \quad k=1, \cdots, N
	\end{equation}
	with some $\xi = (\xi_1, \cdots, \xi_N) \in \Delta^{N-1, n}_+$, we call $\pi$ a \textit{constant rebalanced portfolio among the top $n$ stocks by rank}. This portfolio re-balances at all times to maintain a constant proportion $\xi_k$ of current wealth invested in the $k$-th ranked stock, for $k \leq n$. We denote the collection of constant rebalanced portfolios among the top $n$ stocks by $\pazocal{CR}^n$.
\end{defn}

\smallskip

\begin{prop}		\label{prop : constant rebalanced is FG}
	Every constant rebalanced portfolio among the top $n$ stocks by rank is functionally generated.
\end{prop}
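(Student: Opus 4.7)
The plan is to exhibit an explicit admissible generating function (in the sense of Definition~\ref{Def : admissible GF}) whose functionally generated portfolio, via Corollary~\ref{cor : FGP rank}, coincides with the given constant rebalanced portfolio by rank. Given $\xi = (\xi_1, \dots, \xi_n, 0, \dots, 0) \in \Delta^{N-1, n}_+$, the natural candidate is the weighted geometric mean
\begin{equation*}
G(x_1, \dots, x_N) := \prod_{k=1}^{n} x_k^{\xi_k},
\end{equation*}
defined on a neighborhood of $\Delta^{N-1}_+$ where the first $n$ coordinates are strictly positive. Then $\textbf{G}(x) := G(x_{(1)}, \dots, x_{(N)}) = \prod_{k=1}^{n} x_{(k)}^{\xi_k}$ is the associated symmetric-by-rank function in the sense of \eqref{Eq : bold G}.

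First I would verify that $G$ is an admissible generating function in the sense of Definition~\ref{Def : admissible GF}. Condition (B) is immediate, since $G$ only depends on $x_1, \dots, x_n$. For condition (A), a direct computation gives $D_k G(x) = \xi_k G(x)/x_k$ for $k \leq n$ and $D_k G(x) = 0$ for $k > n$, hence
\begin{equation*}
\sum_{j=1}^{N} x_j D_j G(x) = G(x) \sum_{k=1}^{n} \xi_k = G(x),
\end{equation*}
because $\xi \in \Delta^{N-1, n}_+$ yields $\sum_{k=1}^{n} \xi_k = 1$. Thus $G$ satisfies the balance identity \eqref{con : balance}.

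Next I would invoke Corollary~\ref{cor : FGP rank} to read off the generated stock portfolio among the top $n$ stocks, using the simplified formula \eqref{Eq : pi^G balanced}:
\begin{equation*}
\pi^{\textbf{G}}_{p_k(t)}(t) = \frac{D_k G(\bm{\mu}(t))}{G(\bm{\mu}(t))} \mu_{(k)}(t) = \frac{\xi_k}{\mu_{(k)}(t)} \mu_{(k)}(t) = \xi_k, \qquad k = 1, \dots, n,
\end{equation*}
while $\pi^{\textbf{G}}_{p_k(t)}(t) = 0$ for $k > n$ by condition (B). This matches the prescription \eqref{Eq : constant rebalanced} of Definition~\ref{Def : constant rebalanced portfolio}, which is what had to be shown.

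The only genuine subtlety is domain/regularity: $G$ may vanish on the boundary of $\Delta^{N-1}_+$ if some $\xi_k = 0$, and the formula $D_k G = \xi_k G/x_k$ requires $\mu_{(k)} > 0$. The mildest fix is to restrict the sum in the definition of $G$ to those indices with $\xi_k > 0$ (equivalently, replace $x_k^{\xi_k}$ by $1$ when $\xi_k = 0$); on the relevant neighborhood of $\Delta^{N-1}_+$, this $G$ is positive and $C^2$, all the computations above remain valid, and the identification $\pi^{\textbf{G}} = \pi$ then holds pathwise.
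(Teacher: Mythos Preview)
Your proof is correct and follows exactly the same approach as the paper: you exhibit the weighted geometric mean $G(x)=\prod_{k=1}^n x_k^{\xi_k}$, verify it is admissible (balanced and depending only on the first $n$ coordinates), and read off the constant weights from \eqref{Eq : pi^G balanced}. Your additional remark on the boundary case $\xi_k=0$ is a welcome clarification that the paper omits.
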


\smallskip

\begin{proof}
	For a fixed $\xi \in \Delta^{N-1, n}_+$, consider a function 
	\begin{equation}		\label{Eq : CR generating function}
		\textbf{G}(x) = G(x_{(1)}, \cdots, x_{(N)}) = \prod_{k=1}^n x_{(k)}^{\xi_k}.
	\end{equation}
	It is easy to check that $G$ is an admissible generating function of market consisting of the top $n$ stocks, and it generates the portfolio via the recipe \eqref{Eq : pi^G balanced}:
	\begin{equation*}
		\pi^{\textbf{G}}_{p_k(t)}(t) =  \xi_k, \quad \text{for} \quad t \geq 0, \quad k=1, \cdots, N.
	\end{equation*}
	Since $\xi$ is chosen arbitrarily from $\Delta^{N-1, n}_+$, the claim follows.
\end{proof}

\smallskip

Thanks to Proposition~\ref{prop : constant rebalanced is FG}, for every $\xi \in \Delta^{N-1, n}_+$ there exists a corresponding portfolio $\pi \in \pazocal{CR}^n$ as in Definition~\ref{Def : constant rebalanced portfolio}, and we write $X_{\xi}(t)$ to represent the wealth process of $\pi$ at time $t$ in the manner of \eqref{def : wealth}, namely,
\begin{equation}			\label{def : wealth of xi}
	X_{\xi}(t) \equiv X_{\pi}(t) 
	= \pazocal{E}\bigg( \int_0^{t} \sum_{i=1}^N \pi_i(s) dR_i(s) \bigg)
	= \pazocal{E}\bigg( \sum_{k=1}^n \xi_k \int_0^{t} \sum_{i=1}^N \mathbbm{1}_{\{ u_i(s) = k \}}dR_i(s) \bigg) \qquad \text{for} \quad t \geq 0.
\end{equation}

\bigskip

For $T > 0$ fixed, we define
\begin{equation}			\label{def : best retrospectively chosen portfolio}
	X^*(T) := \sup_{\pi \in \pazocal{CR}^n} X_{\pi}(T) = \sup_{\xi \in \Delta^{N-1, n}_+} X_{\xi}(T).
\end{equation}
This $X^*(T)$ represents the maximal wealth at time $T$, achievable over all constant rebalanced portfolios among the top $n$ stocks by rank. We show in the following that a $\pazocal{F}(T)$-measurable random vector of weights $\pi^*(T) \equiv \xi^*$ exists, which attains the supremum in \eqref{def : best retrospectively chosen portfolio}, namely, that $X^*(T) = X_{\pi^*(T)}(T) = X_{\xi^*}(T)$ holds.

\smallskip

\begin{lem}		\label{lem : continuity}
	For a fixed $T > 0$, the mapping $\Delta^{N-1, n}_+ \ni \xi \mapsto X_{\xi}(T) \in \mathbb{R}$ is continuous.
\end{lem}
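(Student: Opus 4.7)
The plan is to exploit the explicit exponential representation of the stochastic exponential. Since the integrand $\sum_{k=1}^n \xi_k \int_0^\cdot \sum_{i=1}^N \mathbbm{1}_{\{u_i(s)=k\}}\,dR_i(s)$ is a continuous semimartingale (each stock price $S_i$ is a continuous semimartingale and rank indicators are predictable), and since the stochastic exponential in \eqref{def : stochastic exponential} of a continuous $Z$ admits the closed-form $\pazocal{E}(Z)=\exp(Z-\tfrac{1}{2}[Z,Z])$, I will first write $X_\xi(T)$ explicitly in this closed form for each fixed $T>0$.

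Concretely, setting
\begin{equation*}
Y_k(T) := \int_0^{T} \sum_{i=1}^N \mathbbm{1}_{\{ u_i(s) = k \}}dR_i(s), \qquad k = 1, \ldots, n,
\end{equation*}
I note that $Y_k$ is a continuous semimartingale for each $k$, so $[Y_k,Y_\ell](T)$ is a well-defined finite random variable for $1\leq k,\ell\leq n$. Using bilinearity of the covariation bracket, the formula \eqref{def : wealth of xi} becomes
\begin{equation*}
X_\xi(T) = \exp\Bigl(\sum_{k=1}^n \xi_k \,Y_k(T) - \tfrac{1}{2}\sum_{k=1}^n \sum_{\ell=1}^n \xi_k \xi_\ell\, [Y_k,Y_\ell](T)\Bigr).
\end{equation*}

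For $\mathbb{P}$-a.e.\ $\omega\in\Omega$, the numbers $Y_k(T)(\omega)$ and $[Y_k,Y_\ell](T)(\omega)$ are finite reals, so the map
\begin{equation*}
\Delta^{N-1,n}_+ \ni \xi \;\longmapsto\; \sum_{k=1}^n \xi_k Y_k(T)(\omega) - \tfrac{1}{2}\sum_{k,\ell=1}^n \xi_k\xi_\ell [Y_k,Y_\ell](T)(\omega)
\end{equation*}
is a polynomial (affine plus quadratic) in $\xi$, hence continuous; composing with the exponential yields the pathwise continuity of $\xi\mapsto X_\xi(T)(\omega)$. Since $\Delta^{N-1,n}_+$ is compact and finite-dimensional, this pathwise continuity is exactly the desired continuity statement.

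I do not anticipate any real obstacle here: the only subtle point is justifying that the interchange $\pazocal{E}\bigl(\sum_k\xi_k Y_k\bigr)(T) = \exp\bigl(\sum_k\xi_k Y_k(T)-\tfrac12[\cdot,\cdot](T)\bigr)$ is legitimate for every $\xi$ simultaneously, which follows because the linear combination $\sum_k \xi_k Y_k$ is a continuous semimartingale and the bracket is bilinear, giving the displayed quadratic expression in $\xi$ without any $\xi$-dependence hidden inside the covariations.
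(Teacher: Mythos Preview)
Your proof is correct and in fact more direct than the paper's. The paper invokes the master formula of Corollary~\ref{cor : FGP in open market} (applied to the generating functions $G_\xi(x)=\prod_{k=1}^n x_{(k)}^{\xi_k}$) to express $\log X_\xi(T)-\log X_\zeta(T)$ in terms of ranked market weights, collision local times $L^{\mu_{(k)}-\mu_{(k+1)}}$, and covariations of the $\mu_{(k)}$, and then bounds each piece on $[0,T]$ to obtain a Lipschitz estimate. Your argument bypasses that machinery entirely: by expanding the stochastic exponential you see at once that $\log X_\xi(T)$ is an affine-plus-quadratic polynomial in $\xi$ with fixed finite coefficients $Y_k(T)$ and $[Y_k,Y_\ell](T)$, so continuity (indeed smoothness) in $\xi$ is immediate. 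Your route is shorter, avoids the local-time bookkeeping, and does not rely on the pathwise mutual nondegeneracy hypothesis that underlies the master formula; the paper's route, on the other hand, yields explicit constants in terms of quantities already appearing in Section~\ref{subsec : FGP by ranks}, which are then reused in the proof of the subsequent theorem.
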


\begin{proof}
	For $\xi, \zeta \in \Delta^{N-1, n}_+$, we have
	\begin{align}
		&~~~\log X_{\xi}(T) - \log X_{\zeta}(T)
		= \log \frac{X_{\xi}(T)}{X_{\widetilde{\mu}}(T)} - \log \frac{X_{\zeta}(T)}{X_{\widetilde{\mu}}(T)}						\nonumber
		\\
		&= \sum_{k=1}^n \log \bigg(\frac{\mu_{(k)}(T)}{\mu_{(k)}(0)}\bigg) ^{(\xi_k-\zeta_k)}
					\label{Eq : showing continuity}
		\\
		&~~~- \frac{1}{2} \int_0^{T} \sum_{k=1}^{n-1} \bigg( \frac{\xi_k-\zeta_k}{\mu_{(k)}(t)}-\frac{\xi_{k+1}-\zeta_{k+1}}{\mu_{(k+1)}(t)} \bigg) dL^{\mu_{(k)}-\mu_{(k+1)}}(t)
		- \frac{1}{2} \int_0^{T} \frac{\xi_n-\zeta_n}{\mu_{(n)}(t)} dL^{\mu_{(n)}-\mu_{(n+1)}}(t)									\nonumber
		\\
		&~~~-\frac{1}{2}\int_0^{T} \sum_{k=1}^n \sum_{\ell=1}^n \frac{\xi_k\xi_{\ell}-\zeta_k\zeta_{\ell}}{\mu_{(k)}(t)\mu_{(\ell)}(t)} d[\mu_{(k)}, \mu_{(\ell)}](t)
		+\frac{1}{2}\int_0^{T} \sum_{k=1}^n \frac{\xi_k\zeta_k}{\mu_{(k)}^2(t)} d[\mu_{(k)}, \mu_{(k)}](t).													\nonumber
	\end{align}
	In the last equality, we used the master formula \eqref{Eq : master formula for tilde mu} twice, and applied it to the functions of the form \eqref{Eq : CR generating function} for $\xi$ and $\zeta$, respectively. Since the functions $\mu_{(k)}(\cdot),~ L^{\mu_{(k)}-\mu_{(k+1)}}(\cdot),~ [\mu_{(k)}, \mu_{(\ell)}](\cdot)$ for $1 \leq k, \ell \leq n$ on the right-hand side \eqref{Eq : showing continuity} are all continuous, they are bounded on the compact interval $[0, T]$. Thus, we obtain the estimate $\big|\log X_{\xi}(T) - \log X_{\zeta}(T)\big| \leq ||\xi-\zeta||K_T$ for some positive constant $K_T$, which depends on $\min_{0 \leq t \leq T}\mu_{(k)}(t), ~L^{\mu_{(k)}-\mu_{(k+1)}}(T),~ [\mu_{(k)}, \mu_{(k)}](T)$ for $k = 1, \cdots, n$, and this proves the continuity.
\end{proof}

\smallskip

\begin{defn} [Best retrospectively chosen vector of weights]
	\label{Def : best retrospective portfolio}
	The continuity shown in Lemma~\ref{lem : continuity} shows that there exists a vector $\xi^* \equiv \pi^*(T) \in \Delta^{N-1, n}_+$ which attains the supremum in \eqref{def : best retrospectively chosen portfolio} for a fixed $T \in (0, \infty)$. We call this $\pazocal{F}(T)$-measurable, $\Delta^{N-1, n}_+$-valued random variable $\pi^*(T)$ the \textit{best retrospectively chosen vector of weights among the top $n$ stocks for the given $T \in (0, \infty)$}.
\end{defn}

\smallskip

Even though $\pi^*(T)$ meant to outperform all constant rebalanced portfolios among the top $n$ stocks by rank at $T>0$, constructing it requires the knowledge of stock prices over the entire interval $[0, T]$, that is, ahead of time. \cite{Cover_1991} introduced a remarkable way to construct a portfolio, called ``universal portfolio'', depending only on past stock prices, whose long-run performance is almost as good as that of the best retrospectively chosen vector of weights. Cover's idea of building the universal portfolio, was to determine its weights by averaging the performances of all constant portfolio weights, at any time $t \geq 0$.

\smallskip

\begin{defn} [universal portfolio]			\label{Def : universal portfolio}
	With the notation $\Delta^{N-1, n}_+$ of \eqref{def : Delta N-1, n}, the portfolio $\hat{\pi}$, defined as 
	\begin{equation}		\label{def : universal portfolio}
		\hat{\pi}_{p_k(t)}(t) := \frac{\int_{\Delta^{N-1, n}_+} \xi_k X_{\xi}(t) d\xi}{\int_{\Delta^{N-1, n}_+} X_{\xi}(t) d\xi} \qquad \text{for} \quad t \geq 0, \quad k = 1, \cdots, N,
	\end{equation}
	is called \textit{universal portfolio among the top $n$ stocks}.
\end{defn}

\smallskip

From the notation $\Delta^{N-1, n}_+$, we have $\hat{\pi}_{p_k(t)}(t) = 0$ for all $t \geq 0$ for $k > n$; i.e., $\hat{\pi}$ invests only in the top $n$ stocks, thus it belongs to $\pazocal{I}(R) \cap \pazocal{T}(n) \cap \Delta^{N-1}$, the collections of stock portfolios among the top $n$ stocks. We next compute the wealth of the universal portfolio.

\smallskip

\begin{prop}
	The wealth process $X_{\hat{\pi}}$ is given as
	\begin{equation}		\label{Eq : universal wealth}
		X_{\hat{\pi}}(t) = \frac{\int_{\Delta^{N-1, n}_+} X_{\xi}(t) d\xi}{\int_{\Delta^{N-1, n}_+} d\xi}, \qquad \text{for} \quad t \geq 0.
	\end{equation}
\end{prop}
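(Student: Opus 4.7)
The plan is to verify that the right-hand side of \eqref{Eq : universal wealth}, call it $Y(t)$, satisfies the same linear stochastic differential equation as $X_{\hat{\pi}}(t)$ with the same initial condition, and then invoke uniqueness (as in \eqref{Eq : exponential SDE}) to conclude. Let $V := \int_{\Delta^{N-1, n}_+} d\xi$ denote the (finite) volume of the simplex, and set $Y(t) := V^{-1} \int_{\Delta^{N-1, n}_+} X_{\xi}(t) d\xi$, so that $Y(0) = 1$ since $X_{\xi}(0) = 1$ for every $\xi$.

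For each fixed $\xi \in \Delta^{N-1, n}_+$, the associated constant rebalanced portfolio $\pi^{\xi}$ satisfies $\pi^{\xi}_i(t) = \xi_{u_i(t)} \bm{1}_{\{u_i(t) \leq n\}}$, and from \eqref{def : wealth of xi} and \eqref{Eq : dynamics of wealth},
\begin{equation*}
    dX_{\xi}(t) = X_{\xi}(t) \sum_{i=1}^N \pi^{\xi}_i(t) dR_i(t) = X_{\xi}(t) \sum_{k=1}^n \xi_k \, dR_{p_k(t)}(t).
\end{equation*}
The first step would be to apply a stochastic Fubini theorem to interchange the $d\xi$-integration against the semimartingale integration, yielding
\begin{equation*}
    dY(t) = V^{-1} \sum_{k=1}^n \Big( \int_{\Delta^{N-1, n}_+} X_{\xi}(t) \xi_k \, d\xi \Big) dR_{p_k(t)}(t).
\end{equation*}
Dividing by $Y(t) = V^{-1} \int_{\Delta^{N-1, n}_+} X_{\xi}(t) d\xi$ and recognizing the coefficient in front of $dR_{p_k(t)}(t)$ as precisely $\hat{\pi}_{p_k(t)}(t)$ from \eqref{def : universal portfolio}, one obtains
\begin{equation*}
    \frac{dY(t)}{Y(t)} = \sum_{k=1}^n \hat{\pi}_{p_k(t)}(t) \, dR_{p_k(t)}(t) = \sum_{i=1}^N \hat{\pi}_i(t) \, dR_i(t),
\end{equation*}
where the last equality uses that $\hat{\pi}_{p_k(t)}(t) = 0$ for $k > n$. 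Comparing with \eqref{Eq : dynamics of wealth} applied to $\hat{\pi}$, and noting $Y(0) = X_{\hat{\pi}}(0) = 1$, the uniqueness for the linear SDE \eqref{Eq : exponential SDE} forces $Y \equiv X_{\hat{\pi}}$.

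The main obstacle is the justification of the interchange of the Lebesgue integral $\int_{\Delta^{N-1, n}_+} \cdot \, d\xi$ with the stochastic integral $\int_0^{\cdot} \cdot \, dR_i(t)$. Since $\Delta^{N-1, n}_+$ is compact and $\xi \mapsto X_{\xi}(t)$ is continuous (as established in Lemma~\ref{lem : continuity}) and hence bounded on $\Delta^{N-1, n}_+$ for every $t$ on any compact time interval, the integrand $X_{\xi}(t) \xi_k$ is uniformly bounded in $\xi$ on compact time intervals. This integrability suffices for the classical stochastic Fubini theorem (e.g., as in Protter's monograph) to apply to the continuous semimartingales $R_i$. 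All remaining steps are straightforward algebraic manipulations and dividing by the positive quantity $Y(t)$.
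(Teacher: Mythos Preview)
Your proof is correct and follows essentially the same route as the paper: define the candidate $Y$ (the paper calls it $Z$), differentiate under the $d\xi$-integral via stochastic Fubini, identify the resulting coefficients with the universal weights $\hat{\pi}_{p_k(t)}(t)$, and conclude by uniqueness of the linear SDE with common initial value $1$. You are in fact more careful than the paper in flagging and justifying the Fubini interchange; the paper's proof simply writes $dZ(t)/Z(t) = \big(\int dX_{\xi}(t)\,d\xi\big)/\big(\int X_{\xi}(t)\,d\xi\big)$ and proceeds without comment.
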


\begin{proof}
	Let $Z(t)$ denote the right-hand side of \eqref{Eq : universal wealth}. We have
	\begin{align*}
		\frac{dZ(t)}{Z(t)}
		= \frac{\int_{\Delta^{N-1, n}_+} dX_{\xi}(t) d\xi}{\int_{\Delta^{N-1, n}_+} X_{\xi}(t) d\xi} 
		&= \frac{\int_{\Delta^{N-1, n}_+} X_{\xi}(t) \sum_{i=1}^N \sum_{k=1}^n \xi_k \mathbbm{1}_{\{ u_i(t) = k \}} dR_i(t) d\xi}{\int_{\Delta^{N-1, n}_+} X_{\xi}(t) d\xi}
		\\
		&= \sum_{i=1}^N \sum_{k=1}^n \hat{\pi}_{p_k(t)}(t) \mathbbm{1}_{\{ u_i(t) = k \}} dR_i(t)
		= \sum_{i=1}^N \hat{\pi}_i(t) dR_i(t)
		= \frac{dX_{\hat{\pi}}(t)}{X_{\hat{\pi}}(t)}.
	\end{align*}
	Here, the second, third and last equalities are from \eqref{def : wealth of xi}, \eqref{def : universal portfolio}, and \eqref{def : wealth}, respectively. Since $X_{\hat{\pi}}(0) = Z(0) = 1$, the result follows.
\end{proof}

\bigskip

We are now ready to compare the long-run performance of the universal portfolio with the best retrospectively chosen vector of weights.

\smallskip

\begin{thm}
	Suppose that the portfolio $\mu$, defined in \eqref{def : market portfolio}, satisfies
	\begin{equation}
		\mu_{(1)}(t) \geq \cdots \geq \mu_{(n)}(t) \geq \delta, \quad \text{for all} \quad t \geq 0 \quad \text{for some} \quad \delta > 0,
	\end{equation}
	\begin{equation}		\label{con : asymptotic}
		\limsup_{T \rightarrow \infty} \frac{1}{T}[\mu_{(k)}, \mu_{(k)}](T) < \infty, \qquad \limsup_{T \rightarrow \infty} \frac{1}{T}L^{\mu_{(k)}-\mu_{(k+1)}}(T) < \infty, \quad \text{for} \quad k = 1, \cdots, n.
	\end{equation}
	Then, the best retrospectively chosen vector of weights and the universal portfolio have the same asymptotic growth rate; that is,
	\begin{equation}		\label{Eq : same asymptotic growth rate}
		\lim_{T \rightarrow \infty} \frac{1}{T}\Big(\log X_{\pi^*(T)}(T) - \log X_{\hat{\pi}}(T) \Big) = 0,
	\end{equation}
	where $\pi^*(T)$ and $\hat{\pi}$ are as in Definitions~\ref{Def : best retrospective portfolio} and \ref{Def : universal portfolio}, respectively.
\end{thm}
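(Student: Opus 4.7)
The plan is to adapt Cover's classical averaging argument to the ranked, open-market setting and to control the resulting constants via the asymptotic hypotheses. The trivial direction is immediate: since $X_{\hat\pi}(T)$ is a uniform average of $X_\xi(T)$ over $\xi \in \Delta^{N-1, n}_+$ while $X_{\pi^*(T)}(T) = \sup_\xi X_\xi(T)$ by Definition~\ref{Def : best retrospective portfolio}, we have $X_{\hat\pi}(T) \leq X_{\pi^*(T)}(T)$ almost surely, so the quantity in \eqref{Eq : same asymptotic growth rate} is pathwise nonnegative. It remains to produce a matching lower bound $\log X_{\hat\pi}(T) \geq \log X_{\pi^*(T)}(T) - o(T)$.

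The key input is the Lipschitz estimate extracted in Lemma~\ref{lem : continuity}: there is a pathwise constant $K_T$, depending only on $\min_{0 \leq t \leq T, k \leq n}\mu_{(k)}(t)$, on $L^{\mu_{(k)}-\mu_{(k+1)}}(T)$, and on $[\mu_{(k)}, \mu_{(k)}](T)$ for $k = 1, \ldots, n$, such that
$$|\log X_\xi(T) - \log X_\zeta(T)| \leq K_T \|\xi - \zeta\|, \qquad \xi, \zeta \in \Delta^{N-1, n}_+.$$
Inspecting \eqref{Eq : showing continuity} shows that $K_T$ depends linearly on the local times and quadratic variations and as an inverse power on the minima of the ranked market weights. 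The standing assumption $\mu_{(n)}(t) \geq \delta > 0$ bounds the dependence on the minima uniformly in $T$, and the asymptotic hypothesis \eqref{con : asymptotic} then gives $\limsup_{T \to \infty} K_T/T < \infty$ almost surely.

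Now fix a sample path on which these asymptotic bounds hold, and let $\xi^* = \pi^*(T) \in \Delta^{N-1, n}_+$ be a maximizer. For $\epsilon \in (0,1]$, write $B_\epsilon(\xi^*)$ for the Euclidean $\epsilon$-ball around $\xi^*$ intersected with $\Delta^{N-1,n}_+$. Since $\Delta^{N-1,n}_+$ is a fixed $(n-1)$-dimensional simplex, elementary convex geometry provides a universal constant $c_0 > 0$, depending only on $n$, such that the $(n-1)$-dimensional Lebesgue measure satisfies $\mathrm{vol}(B_\epsilon(\xi^*)) \geq c_0\, \epsilon^{n-1}$ for every $\xi^* \in \Delta^{N-1, n}_+$ and every $\epsilon \in (0, 1]$. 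On this ball, the Lipschitz bound gives $X_\xi(T) \geq X_{\pi^*(T)}(T)\exp(-K_T\epsilon)$, so averaging over $\Delta^{N-1,n}_+$ yields
$$X_{\hat\pi}(T) \geq \frac{c_0\, \epsilon^{n-1}}{\mathrm{vol}(\Delta^{N-1, n}_+)}\, X_{\pi^*(T)}(T)\, e^{-K_T \epsilon}.$$
Taking logarithms, dividing by $T$ and choosing $\epsilon_T := T^{-1/2}$ produces
$$0 \leq \frac{1}{T}\log \frac{X_{\pi^*(T)}(T)}{X_{\hat\pi}(T)} \leq \frac{K_T}{T^{3/2}} + \frac{n-1}{2}\cdot \frac{\log T}{T} + \frac{C_n}{T},$$
with $C_n$ absorbing the $\log$ of the volume ratio. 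Because $K_T/T$ is asymptotically bounded, $K_T/T^{3/2} \to 0$; the remaining two terms also vanish, so \eqref{Eq : same asymptotic growth rate} follows.

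The only delicate point in this outline is the $T$-uniform geometric lower bound $\mathrm{vol}(B_\epsilon(\xi^*)) \geq c_0 \epsilon^{n-1}$, which must survive the case where the random maximizer $\xi^*$ drifts to the boundary (or a vertex) of $\Delta^{N-1,n}_+$. This follows from the standard fact that, for any compact convex polytope $K \subset \mathbb{R}^{n-1}$, the worst-case intersection $B_\epsilon(x) \cap K$ over $x \in K$ contains at least a fixed fraction of a Euclidean $\epsilon$-ball, with the fraction depending only on the minimal solid angle at the vertices of $K$.
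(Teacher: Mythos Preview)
Your proof is correct and follows essentially the same Cover-type averaging argument as the paper: both restrict the average defining $X_{\hat\pi}(T)$ to a small ball around the maximizer $\xi^*$, control the integrand there via the Lipschitz estimate of Lemma~\ref{lem : continuity}, and let the volume ratio vanish as $T\to\infty$. The paper executes the limit a bit differently---fixing $\epsilon>0$, choosing a ball radius $\eta$ with $\eta K_T/T\leq\epsilon$ for all $T\geq 1$, and invoking Jensen's inequality to push the power $1/T$ inside the integral---whereas you bound the integrand directly and pick an adaptive radius $\epsilon_T=T^{-1/2}$; your explicit treatment of the uniform lower bound $\mathrm{vol}(B_\epsilon(\xi^*))\geq c_0\epsilon^{n-1}$ is actually a point the paper glosses over, and since $\xi^*$ depends on $T$ this uniformity is needed in the paper's argument as well.
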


\smallskip

\begin{proof}
	Since $X_{\pi^*(T)}(T) \geq X_{\xi}(T)$ holds for every $\xi \in \Delta^{N-1, n}_+$ for every $T \geq 0$, the inequality ``$\geq$'' of \eqref{Eq : same asymptotic growth rate} is obvious from \eqref{Eq : universal wealth}.
	
	\smallskip
	
	We now show the reverse inequality. Let $\xi^* \in \Delta^{N-1, n}_+$ be the corresponding vector of weights $\pi^*(T)$ as in Definition~\ref{Def : best retrospective portfolio}. For any $\xi \in \Delta^{N-1, n}_+$ satisfying $||\xi^* - \xi|| \leq \eta$ for some $\eta > 0$, we have the estimate
	\begin{align*}
		\frac{1}{T}\Big(\log X_{\xi}(T) - \log X_{\xi^*}(T) \Big) 
		&\geq -\frac{\eta}{T}\Big(\frac{a_n}{\delta}\max_{1 \leq k \leq n} L^{\mu_{(k)}-\mu_{(k+1)}}(T) + \frac{b_n}{\delta^2}\max_{1 \leq k \leq n} [\mu_{(k)}, \mu_{(k)}](T) \Big)
		\\
		&=: - \frac{\eta }{T}K_T,
	\end{align*} 
	in the same manner as in the proof of Lemma~\ref{lem : continuity}, for some positive constants $a_n$ and $b_n$ depending on $n$. Due to the condition \eqref{con : asymptotic}, we can take $\eta$ sufficiently small such that $\frac{\eta}{T}K_T \leq \epsilon$ holds for every $T \geq 1$, for any given $\epsilon > 0$. To summarize, for any given $\epsilon > 0$, there exists $\eta > 0$ such that 
	\begin{equation}		\label{Eq : estimate in the ball}
		\frac{1}{T}\big(\log X_{\xi}(T) - \log X_{\xi^*}(T) \big) \geq -\epsilon
	\end{equation}
	holds for every $\xi \in B(\xi^*, \eta)$ and for every $T \geq 1$. Here, $B(\xi^*, \eta)$ is the intersection of $\Delta^{N-1, n}_+$ and $|| \cdot ||$-ball in $\mathbb{R}^N$ centered at $\xi^*$ with radius $\eta$. We denote $V_{B(\xi^*, \eta)}$ and $V_{\Delta^{N-1, n}_+}$ the volume of $B(\xi^*, \eta)$ and the volume of the subset $\Delta^{N-1, n}_+$ of $\mathbb{R}^N$, respectively.
	
	\smallskip
	
	From \eqref{Eq : universal wealth} and Jensen's inequality, we have
	\begin{align*}
		\bigg( \frac{X_{\hat{\pi}}(T)}{X_{\pi^*(T)}(T)} \bigg)^{\frac{1}{T}}
		= \bigg( \frac{\int_{\Delta^{N-1, n}_+} X_{\xi}(T)d\xi}{X_{\xi^*}(T) ~ V_{\Delta^{N-1, n}_+}} \bigg)^{\frac{1}{T}}
		&\geq \bigg( \frac{\int_{B(\xi^*, \eta)} X_{\xi}(T)d\xi}{X_{\xi^*}(T) ~ V_{\Delta^{N-1, n}_+}} \bigg)^{\frac{1}{T}}
		\\
		& \geq \frac{\big(V_{B(\xi^*, \eta)}\big)^{\frac{1}{T}-1} \int_{B(\xi^*, \eta)} X_{\xi}(T)^{\frac{1}{T}} d\xi}{\big(X_{\xi^*}(T)\big)^{\frac{1}{T}} ~ \big(V_{\Delta^{N-1, n}_+}\big)^{\frac{1}{T}}}
		\\
		&= \frac{\big(V_{B(\xi^*, \eta)}\big)^{\frac{1}{T}-1}}{\big(V_{\Delta^{N-1, n}_+}\big)^{\frac{1}{T}}} \int_{B(\xi^*, \eta)} \bigg( \frac{X_{\xi}(T)}{X_{\xi^*}(T)}\bigg)^{\frac{1}{T}} d\xi
		\geq \bigg( \frac{V_{B(\xi^*, \eta)}}{V_{\Delta^{N-1, n}_+}} \bigg)^{\frac{1}{T}} e^{-\epsilon},
	\end{align*}
	where the last inequality is from \eqref{Eq : estimate in the ball}. Taking logarithms, then letting $T \rightarrow \infty$ for any given $\epsilon > 0$, the desired inequality follows.
\end{proof}

\bigskip

\section{Conclusion}
\label{sec: conclusion}

Most of the results in Section~\ref{sec: main}, including the main Theorem~\ref{thm : main result}, a foundational result of equity market structure and of the study of arbitrage in open markets, can be formulated quite simply in terms of the local characteristics $\widetilde{\alpha}$ and $\widetilde{c}$ of the open market, defined in \eqref{def : tilde alpha}, \eqref{def : tilde c}. In particular, the supermartingale num\'eraire portfolio $\rho$ in the top $n$ open market, if it exists, should satisfy the equation $\widetilde{\alpha} = \widetilde{c}\rho$ of \eqref{Eq : equiv statement (3)}. From this equation, we were able to conclude that the supermartingale num\'eraire portfolio $\rho$ in the open market takes the form of $\rho = D\widetilde{c}^{\dagger}\widetilde{\alpha}$. Here, multiplying by the diagonal matrix $D$ of \eqref{def : diagonal matrix D} makes the portfolio invest only in the top $n$ stocks, while maintaining its supermartingale num\'eraire property.

\smallskip

However, as foretold in the introductory part of Section~\ref{sec: stock portfolio and FGP}, we cannot use this technique to deal with stock portfolios; multiplying by $D$ a stock portfolio in order to make it invest only in the top $n$ stocks, destroys its self-financing property. For example, a unit vector $\pi := e^1 = (1, 0, \cdots, 0)$ is a stock portfolio which invests all capital into the first stock, but $D\pi$ is not a stock portfolio as it invests all wealth into the money market whenever the first stock fails to belong to the top $n$ market. Thus, for stock portfolios in open markets, a different approach is offered. Fernholz's functional generation of stock portfolios with ranked market weights, under the extra conditions (A) and (B) of Definition~\ref{Def : admissible GF}, provides a systematic way to construct stock portfolios that invest only in the top $n$ open market. This approach also yields the `master formula' in Corollary~\eqref{cor : FGP in open market}, which allows comparing these portfolios with the top $n$ market portfolio, $\widetilde{\mu}$. As an application of this formula, we could prove that Cover's result on the universal portfolio is also valid in open markets. 

\smallskip

Nonetheless, there are a lot of limitations when considering stock portfolios in open markets. First, the balance condition \eqref{con : balance} significantly restricts the class of generating functions in open markets. Moreover, the local time terms which appear on the right-hand side of the master formula \eqref{Eq : master formula for tilde mu}, make it very difficult to find stock portfolios in open markets which outperform $\widetilde{\mu}$. These difficulties are an inevitable price to pay for dealing with stock portfolios in open markets.



\newpage

\bibliography{aa_bib}
\bibliographystyle{apalike}

\end{document}